\def\fskip#1{}
\newtheorem{theorem}{Theorem}
\newtheorem{definition}{Definition}
\newtheorem{example}{Example}
\newtheorem{remark}{Remark}
\def\1{{\bf 1}}
\newcommand{\remove}[1]{}
\def\argmin{\mathop{\rm argmin}}
\begin{document}
\title{Duality and Stability in Complex Multiagent State-Dependent Network Dynamics}
\author{\authorblockN{S. Rasoul Etesami*}
\thanks{*Department of Industrial and Enterprise Systems Engineering and Coordinated Science Laboratory, University of Illinois at Urbana-Champaign, Urbana, IL, 61801. Email: etesami1@illinois.edu }
\thanks{This material is based upon work supported by the National Science Foundation under Grant No. EPCN-1944403.} 
}

\maketitle
\begin{abstract}
Despite significant progress on stability analysis of conventional multiagent networked systems with weakly coupled state-network dynamics, most of the existing results have shortcomings in addressing multiagent systems with highly coupled state-network dynamics. Motivated by numerous applications of such dynamics, in our previous work \cite{etesami2019simple}, we initiated a new direction for stability analysis of such systems that uses a sequential optimization framework. Building upon that, in this paper, we extend our results by providing another angle on multiagent network dynamics from a duality perspective, which allows us to view the network structure as dual variables of a constrained nonlinear program. Leveraging that idea, we show that the evolution of the coupled state-network multiagent dynamics can be viewed as iterates of a primal-dual algorithm for a static constrained optimization/saddle-point problem. This view bridges the Lyapunov stability of state-dependent network dynamics and frequently used optimization techniques such as block coordinated descent, mirror descent, the Newton method, and the subgradient method. As a result, we develop a systematic framework for analyzing the Lyapunov stability of state-dependent network dynamics using techniques from nonlinear optimization. Finally, we support our theoretical results through numerical simulations from social science.     
\end{abstract}
\begin{keywords}
Lyapunov stability; multiagent systems; state-dependent network dynamics; saddle-point dynamics; block coordinate descent; Newton method; nonlinear optimization.
\end{keywords}


\section{Introduction}
Many of the current challenges in science and engineering are related to complex networks, and distributed
multiagent network systems are currently the focal point of many new applications. Such applications relate to the
growing popularity of social networks, the analysis of large network data sets, and the problems that arise
from interactions among agents in complex political, economic, and biological systems. These challenges may involve modeling of the interactions of agents in complex networks, the establishment of stability in the agents' interaction dynamics, and the design of efficient algorithms to obtain or approximate the equilibrium points. We can offer many motivating examples of relationships in political, social, and engineering applications that are governed by complex networks of heterogeneous agents. Agents may be strategic, or the networks can be dynamic in the sense that they can vary over time, depending on the agents' states or decisions. The following are a few examples.
  
\smallskip
\noindent
\emph{-- Network security}: A basic task in network security is that of providing a mechanism for securing the operation of a set of networked heterogeneous agents (e.g., service providers, computers, or data centers) despite external malicious attacks (Figure \ref{Fig:security}). One way of doing that is to incentivize the agents to invest in their security (e.g., by installing antivirus software) \cite{grossklags2008secure}. However, since the agents are interconnected, the compromise of one agent may affect its neighbors, and such a failure can cascade over the entire network. As a result, the decision made by each agent on how much to invest in its security level will indirectly affect all the others, and hence the connectivity structure of the network. Thus we face a highly dynamic network of heterogeneous agents in which the agents' states/decisions and the network structure are highly influenced by each other.

\smallskip
\noindent
\emph{-- Formation control}: A goal in formation control is to design a distributed protocol such that a set of agents (e.g., the aircraft in Figure \ref{Fig:aircraft}) collectively form a specific structure and eventually accomplish a task \cite{bullo2009distributed}. Agents may have different communication capabilities and can communicate only with those in their local neighborhoods. Consequently, depending on the agents' states (e.g., remaining power or relative positions), the communication network they share is subject to change. As a result, the agents' states and the communication network are highly coupled and dynamically evolve based on each other. 
 
\begin{figure}[!tbp]
  \centering
    \begin{minipage}[t]{0.32\textwidth}
    \includegraphics[width=\textwidth]{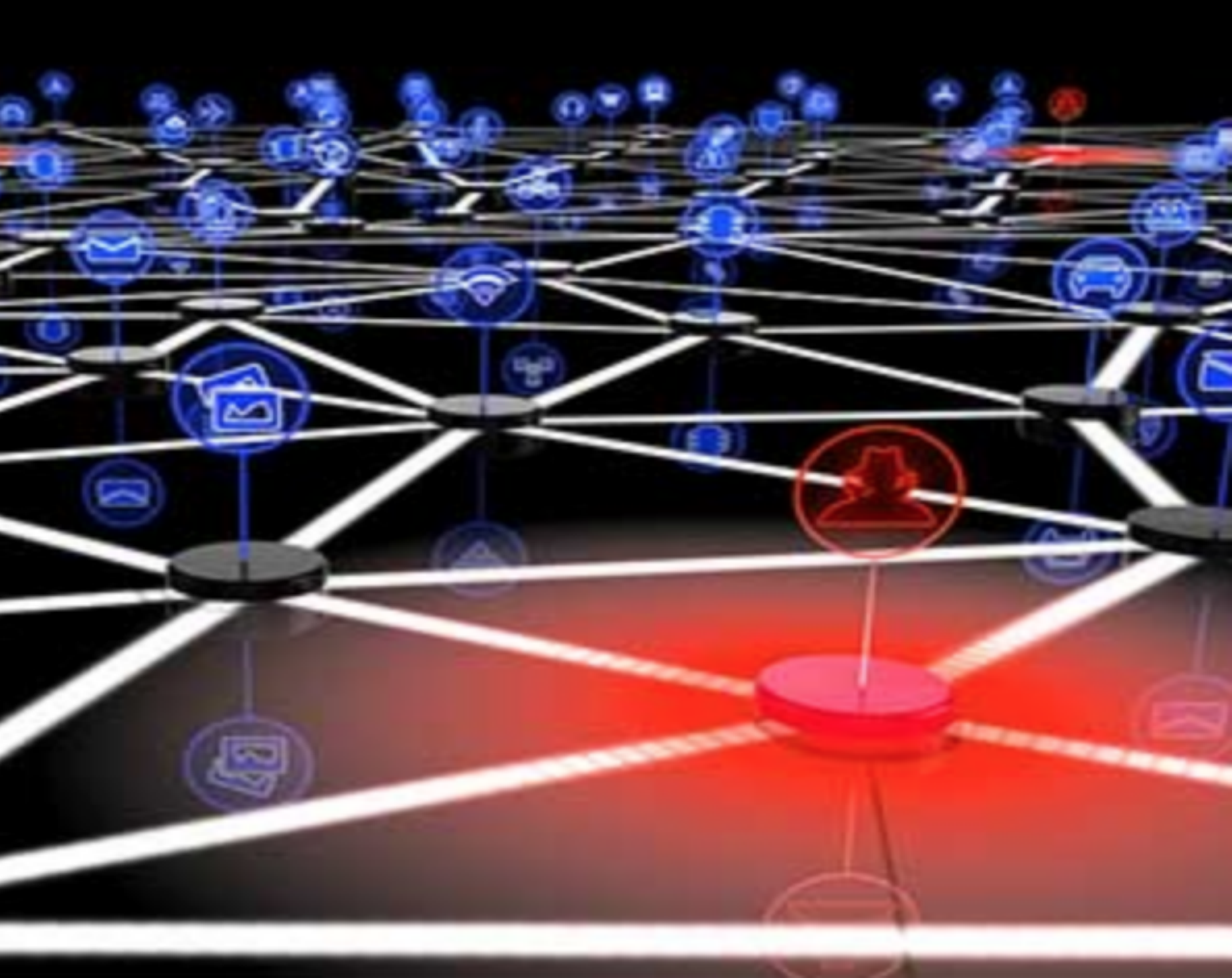}
    \vspace{-0.5cm} \caption{\footnotesize{Compromise of an agent changes the network structure and hence the security decisions of all other agents.}}\label{Fig:security}
  \end{minipage}
  \hfill 
  \begin{minipage}[t]{0.32\textwidth}
    \includegraphics[width=\textwidth]{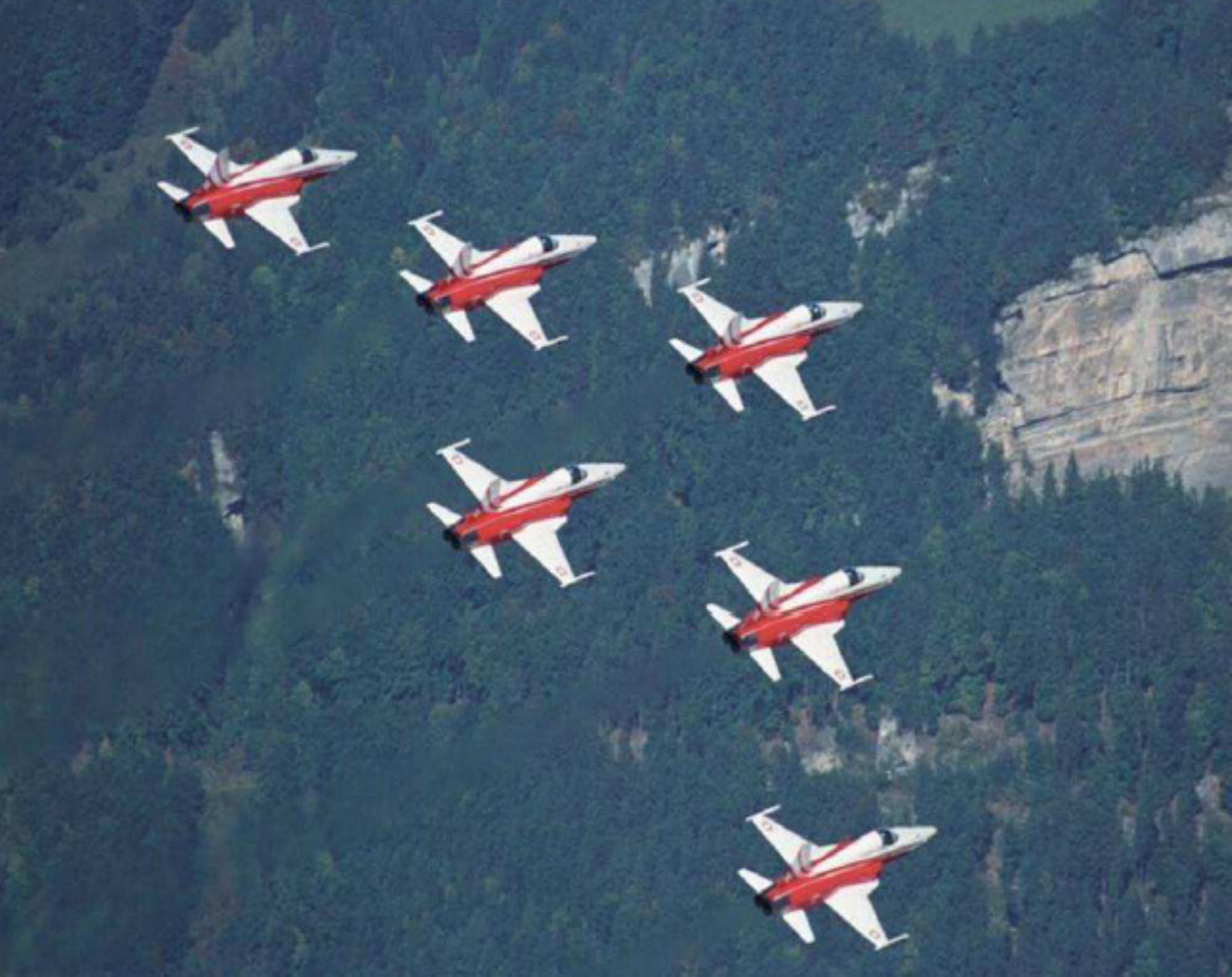}
    \vspace{-0.5cm} \caption{\footnotesize{Aircraft must keep a certain formation while the communication network among them is subject to change.}}\label{Fig:aircraft}
  \end{minipage}
  \hfill
  \begin{minipage}[t]{0.32\textwidth}
    \includegraphics[width=\textwidth]{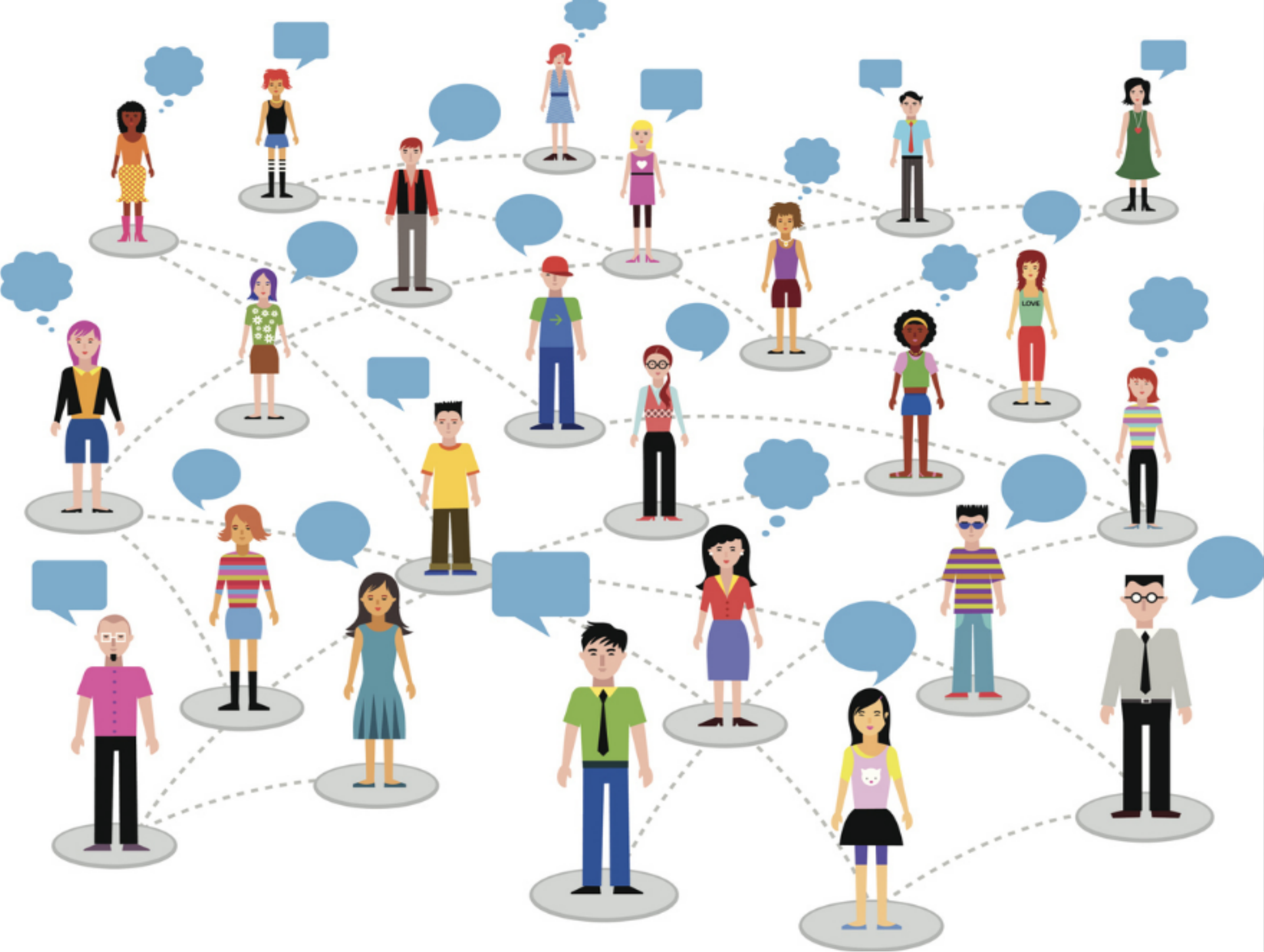}
    \vspace{-0.5cm} \caption{\footnotesize{The social network affects opinions, and that in turn creates new friendships and hence a new social network.}}\label{Fig:opinion}
  \end{minipage}
  \vspace{-0.4cm}
\end{figure}

\smallskip
\noindent 
\emph{-- Social networks}: In social networks, there are often apparent affinities among people based on heterogeneous political or cultural beliefs that define an interaction network among them. However, on specific issues, alliances form among people from different groups. Almost every congressional vote provides an example of this phenomenon, wherein some representatives break away from their respective parties to vote with the other party \cite{hegselmann2002opinion} (Figure \ref{Fig:opinion}).

\smallskip
\noindent 
\emph{-- Stability of smart grids}: In the emerging smart grid, a significant amount of energy stems from renewable sources, electric vehicles, and storage units, many of which may be owned by consumers rather than utility companies. That phenomenon is turning every grid component into a \emph{prosumer}: a joint producer and consumer of energy. Prosumers (agents) in the smart grid strategically interact with each other subject to power network constraints. In particular, depending on their states (e.g., energy consumption/production decisions), they may decide to buy/sell energy to different agents. Thus, a significant challenge is that of providing decentralized algorithms to stabilize the demand and response given that the structure of the agents' interactions is a function of their own and their neighbors' states/decisions.

Motivated by the above, and many other real applications such as robotics \cite{dimarogonas2008decentralized,zavlanos2007potential}, genomic \cite{rajapakse2011dynamics}, robustification \cite{cherukuri2017role}, and consensus seeking \cite{mesbahi2005state,awad2018time}, our objective in this paper is to provide a systematic approach for analyzing the stability and convergence of agents interacting over a rich dynamic network that may evolve or vary based on agents' states. To that end, we provide new connections between analysis of multiagent network systems and developed techniques in the mature field of nonlinear programming. Utilizing such connections, we show how Lyapunov stability of seemingly complex multiagent network dynamics can be analyzed using iterative optimization algorithms for finding a minimum or saddle-point of nonlinear functions.

\subsection{Related Work}

A general multiagent network problem involves a set of $[n]=\{1,2,\ldots,n\}$ agents (social individuals, grid prosumers, unmanned vehicles, etc.) At each time instance $k=0,1,2,\ldots$, there is an underlying network $\mathcal{G}_k=([n],\mathcal{E}_k)$ that determines the communication network shared by the agents. Here, $\mathcal{E}_k$ denotes the set of edges of the network at time $k$; the edges can be undirected or directed. The state of agent $i\in[n]$ at time $k$ is given by a vector $x^k_i$ and evolves based on $i$'s interaction with its neighbors. In particular, the overall state of the system at time $k+1$, denoted by $\boldsymbol{x}^{k+1}$, can be obtained by using a general update rule $\boldsymbol{x}^{k+1}=f_k(\boldsymbol{x}^k,\mathcal{G}_k), k=0,1,2,\ldots$, where $f_k(\cdot)$ can be a general time-varying function, depending on the problem setup, and captures the interaction laws among the agents. Therefore, the main goal here is to understand whether the generated sequence of states $\{\boldsymbol{x}^k\}_{k=0}^{\infty}$ will converge (stabilize) to any equilibrium. That has been the subject of much research effort, including work in distributed control and computation \cite{nedic2018network}. Unfortunately, despite enormous efforts in the area, the stability problem for such dynamics in its full generality is still far from being solved. However, partial solutions to the problem under certain simplifying assumptions are known. For instance, there has been a rich body of literature on the analysis of multiagent network systems, mainly from the static point of view, in which a set of agents iteratively interact over a \emph{fixed} network to achieve a certain goal, such as consensus or optimization of an objective function. The classical models of DeGroot \cite{degroot1974reaching} and Friedkin-Johnsen \cite{friedkin1997social} in the social sciences are two special types of such systems \cite{nedic2018network,AVP-RT:17}. Below is a sample result in this area \cite{olfati2004consensus,levin2017markov}:

\begin{theorem}\label{thm-consensus}
Given a fixed and undirected connected network $\mathcal{G}=([n],\mathcal{E})$, at any time $k=0,1,\ldots$, let every agent $i$ take the average of its own state and those of its neighbors, i.e., $x_i^{k+1}=\sum_{j\in N_i\cup\{i\}}a_{ij}x_j^k, \forall i\in[n]$, where $a_{ij}>0$ are positive constant weights such that $\sum_{j=1}^{n}a_{ij}=1, \forall i$. Then the generated averaging dynamics are Lyapunov stable and converge to an equilibrium point.
\end{theorem}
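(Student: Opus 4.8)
The plan is to recognize the dynamics as the linear recursion $\boldsymbol{x}^{k+1}=A\boldsymbol{x}^k$, where $A=[a_{ij}]$ is the nonnegative $n\times n$ matrix with $a_{ij}>0$ exactly for $j\in N_i\cup\{i\}$ and $a_{ij}=0$ otherwise. By hypothesis every row of $A$ sums to one, so $A$ is row-stochastic; because $\mathcal{G}$ is connected, $A$ is irreducible; and since $a_{ii}>0$ for every $i$, $A$ is aperiodic, hence primitive. The equilibrium set of the dynamics is the consensus line $\{c\mathbf{1}:c\in\mathbb{R}\}$, and the goal is to show that every trajectory stays bounded, remains near this line if it starts near it, and converges to a point on it.

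First I would bring in the left Perron eigenvector: by Perron--Frobenius applied to the primitive stochastic matrix $A$, there is a unique $\pi=(\pi_1,\dots,\pi_n)^{\top}$ with $\pi>0$, $\sum_i\pi_i=1$, and $\pi^{\top}A=\pi^{\top}$. The scalar $\bar{x}^k:=\pi^{\top}\boldsymbol{x}^k$ is then invariant along trajectories, since $\bar{x}^{k+1}=\pi^{\top}A\boldsymbol{x}^k=\pi^{\top}\boldsymbol{x}^k=\bar{x}^k$. As a candidate Lyapunov function I would take the $\pi$-weighted variance $V(\boldsymbol{x})=\sum_i\pi_i\bigl(x_i-\pi^{\top}\boldsymbol{x}\bigr)^2$, which is nonnegative and vanishes precisely on the consensus line.

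The key computation is that $V$ is nonincreasing. Writing $y_j:=x_j^k-\bar{x}^k$, convexity of $t\mapsto t^2$ together with $\sum_j a_{ij}=1$ gives $\bigl(\sum_j a_{ij}y_j\bigr)^2\le\sum_j a_{ij}y_j^2$ for each $i$; multiplying by $\pi_i$, summing over $i$, and using $\pi^{\top}A=\pi^{\top}$ yields $V(\boldsymbol{x}^{k+1})\le\sum_j\bigl(\sum_i\pi_i a_{ij}\bigr)y_j^2=\sum_j\pi_j y_j^2=V(\boldsymbol{x}^k)$. This already delivers Lyapunov stability of the consensus set, because the sublevel sets of $V$ at a fixed value of $\pi^{\top}\boldsymbol{x}$ are bounded and forward invariant, so a trajectory starting close to the line stays close. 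For convergence I would upgrade this to a strict decrease away from consensus: equality in the Jensen step forces $y_j$ to be constant over the neighborhood of each $i$, so a single step need not strictly decrease $V$; but iterating and using primitivity ($A^m>0$ for some $m$) forces $V(\boldsymbol{x}^{k+m})<V(\boldsymbol{x}^k)$ whenever $\boldsymbol{x}^k$ is not on the consensus line. Equivalently, one may invoke LaSalle's invariance principle, whose largest invariant subset of $\{\,V\ \text{constant}\,\}$ is exactly the consensus line, or simply use $\lim_{k}A^k=\mathbf{1}\pi^{\top}$ from Perron--Frobenius. Either way, $V(\boldsymbol{x}^k)\downarrow 0$ and $\boldsymbol{x}^k\to\bar{x}^0\mathbf{1}$, an equilibrium.

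The main obstacle is exactly this last point: the averaging/Jensen inequality is only non-strict in a single step, so monotonicity of $V$ alone does not yield attractivity, and the global connectivity of $\mathcal{G}$ must be exploited — through primitivity of $A$, the spectral gap of $A$ on the subspace $\{\pi^{\top}\boldsymbol{x}=0\}$, or an invariance-principle argument — to rule out nontrivial trajectories on which $V$ stalls. Everything else reduces to routine linear algebra and the standard Perron--Frobenius facts for primitive stochastic matrices.
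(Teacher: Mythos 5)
The paper does not actually prove Theorem~\ref{thm-consensus}: it is stated as a classical background result and cited to the literature (\cite{olfati2004consensus,levin2017markov}), so there is no in-paper argument to compare against. Your proof is the standard one and it is correct: $A$ is row-stochastic, irreducible by connectivity, and primitive because $a_{ii}>0$; the left Perron vector $\pi$ gives the invariant $\pi^{\top}\boldsymbol{x}^k$; the $\pi$-weighted variance decreases by Jensen together with $\pi^{\top}A=\pi^{\top}$; and you correctly flag that the one-step decrease is only non-strict, repairing this via primitivity ($A^m>0$), LaSalle, or directly $A^k\to\mathbf{1}\pi^{\top}$. This is more than the paper asks of the result and is consistent with the optimization-flavored viewpoint the paper later develops (indeed, the paper's Example~2 re-derives a symmetric-weight special case of this theorem from Theorem~\ref{thm:majorizing} using the Laplacian energy $\frac{1}{2}\sum_{\{i,j\}\in\mathcal{E}}a_{ij}(x_i-x_j)^2$ as the Lyapunov function, which is the analogue of your weighted variance when $a_{ij}=a_{ji}$). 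The only point of friction is cosmetic: Definition~2 in the paper requires $V(\boldsymbol{z}^{k+1})<V(\boldsymbol{z}^k)$ for all $k$, which no Lyapunov function can satisfy once the trajectory reaches consensus; your $V$ satisfies the usual (non-strict plus LaSalle) version, which is also the standard the paper itself implicitly applies in its own theorems, so no correction is needed.
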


By comparing Theorem \ref{thm-consensus} with the aforementioned general dynamics, one can identify several simplifying assumptions that have been made in most of the existing results on multiagent network systems. 1) The underlying networks are fixed as $\mathcal{G}_k=\mathcal{G}, \forall k$. 2) The underlying networks are connected and undirected. 3) The underlying networks $\mathcal{G}_k$ do not depend on the agents' states $\boldsymbol{x}^k$. Therefore, a large body of literature has been developed to establish the stability of the above general dynamics under less restrictive assumptions. The results in the static case can often be generalized to time-varying networks through the assumption of a certain ``independency" between the network process and the state dynamics. For instance, one of the commonly used assumptions is that the network dynamics are governed by an exogenous process that is uncoupled from the state dynamics \cite{olshevsky2009convergence,nedic2009distributed,nedic2015distributed,bacsar2016convergence,etesami2016convergence,etesami2017potential,zhu2011convergence}. Here is an extension from \cite{blondel2005convergence}:

\begin{theorem}\label{thm:hendrix}
Consider a sequence of time-varying directed graphs $\mathcal{G}_k=([n],\mathcal{E}_k)$ with the weight of edge $(i,j)$ at time $k$ being $a_{ij}(k)$. Assume that the sequence of graphs is $B$-strongly connected, meaning that for any $k\ge 0$, the graph $\mathcal{G}=([n], \cup_{s=k}^{k+B}\mathcal{E}_k)$ is strongly connected. Moreover, given a positive constant $\alpha\in (0,1)$, assume $a_{ij}(k)\in [\alpha, 1]\cup\{0\}, \forall i,j,k$, and $a_{ii}(k)\ge \alpha, \ \sum_{j}a_{ij}(k)=1, \forall i,k$. Then the averaging dynamics $x_i^{k+1}=\sum_{j\in N_i}a_{ij}(k)x_j^k, \ i\in[n]$, will converge to a consensus point.
\end{theorem}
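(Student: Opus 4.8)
The plan is to recast the recursion in matrix form and then reduce the claim to a uniform contraction of the ``spread'' of the state vector. Write $A(k)=[a_{ij}(k)]$; the hypotheses say each $A(k)$ is row-stochastic and nonnegative, with $a_{ii}(k)\ge\alpha$ and every nonzero entry at least $\alpha$, so the update is $\boldsymbol{x}^{k+1}=A(k)\boldsymbol{x}^{k}$ and hence $\boldsymbol{x}^{k}=\Phi(k,s)\boldsymbol{x}^{s}$ with $\Phi(k,s):=A(k-1)A(k-2)\cdots A(s)$ and $\Phi(s,s)=I$. Because each $A(k)$ is row-stochastic and nonnegative, $M(k):=\max_i x^k_i$ is nonincreasing and $m(k):=\min_i x^k_i$ is nondecreasing (a vector-valued state is handled coordinate by coordinate, which is legitimate since the recursion is linear and decoupled across coordinates); both are therefore bounded and convergent, so it suffices to show the spread $V(k):=M(k)-m(k)$ tends to $0$, since then $x^k_i\to c^{\ast}$ for the common limit $c^{\ast}$ of $M$ and $m$, which is a consensus point.

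The core of the proof is a reachability lemma: with $T_0:=(n-1)(B+1)$, every entry of $\Phi(s+T_0,s)$ is at least $\alpha^{T_0}$, for every $s\ge0$. To establish it, fix a column index $j$ and set $R_r:=\{\,i:[\Phi(r,s)]_{ij}>0\,\}$ for $r\ge s$, so $R_s=\{j\}$. Since $a_{ii}(r)\ge\alpha>0$ for all $r$, once an index enters $R_r$ it stays, i.e.\ $R_r\subseteq R_{r+1}$. If $R_r\subsetneq[n]$, then strong connectivity of the union graph over the window $[r,r+B]$ forces some edge into $R_r$ from outside within that window, i.e.\ there are $r'\in[r,r+B]$, $u\in R_r$ and $v\notin R_r$ with $a_{vu}(r')\ge\alpha$; since $u\in R_r\subseteq R_{r'}$ we get $[\Phi(r'+1,s)]_{vj}\ge a_{vu}(r')\,[\Phi(r',s)]_{uj}>0$, hence $v\in R_{r'+1}\subseteq R_{r+B+1}$ and $|R_{r+B+1}|\ge|R_r|+1$. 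Iterating over at most $n-1$ consecutive windows of length $B+1$ yields $R_{s+T_0}=[n]$. Finally, in the product $\Phi(s+T_0,s)$ every contributing path has exactly $T_0$ factors, each a self-loop weight or a genuine edge weight and therefore at least $\alpha$; hence any positive entry is at least $\alpha^{T_0}$, which proves the lemma.

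With the lemma in hand the contraction is standard. For a row-stochastic matrix $P$ all of whose entries are at least $\eta$, one has $\max_i(Py)_i-\min_i(Py)_i\le(1-\eta)\bigl(\max_i y_i-\min_i y_i\bigr)$: taking the rows $p,q$ achieving the max and min of $Py$ and splitting the index set according to the sign of $P_{p\ell}-P_{q\ell}$, the identity $\sum_\ell(P_{p\ell}-P_{q\ell})=0$ together with the lower bound $\eta$ gives the claim. Applying this with $P=\Phi(s+T_0,s)$ and $\eta=\alpha^{T_0}$ yields $V(s+T_0)\le(1-\alpha^{T_0})V(s)$ for every $s$, so $V(\ell T_0)\le(1-\alpha^{T_0})^{\ell}V(0)\to0$; since $V$ is nonincreasing, $V(k)\to0$, and combined with the convergence of $M(k)$ and $m(k)$ this gives $x^k_i\to c^{\ast}$ for all $i$, i.e.\ the dynamics reach a consensus point.

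The step I expect to be the real obstacle is the reachability lemma, since that is where the time-varying, directed nature of the topology must be controlled: one must ensure the cut-crossing edge can always be found inside a single length-$(B+1)$ window and that the omnipresent self-loops make the reached set $R_r$ grow monotonically across windows. Everything after that --- the ergodicity-coefficient contraction and the monotone-convergence bookkeeping --- is routine, and the vector-valued case contributes nothing beyond applying the scalar argument in each coordinate.
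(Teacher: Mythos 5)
The paper does not actually prove Theorem~\ref{thm:hendrix}; it is quoted from the literature (\cite{blondel2005convergence}) as background, so there is no in-paper proof to compare against. Your argument is, however, a correct and self-contained proof, and it follows the classical route used in that literature: form the transition matrices $\Phi(k,s)$, show via a reachability/propagation argument that over a horizon of $(n-1)(B+1)$ steps every entry of $\Phi(s+T_0,s)$ is bounded below by $\alpha^{T_0}$ (the self-loops $a_{ii}(k)\ge\alpha$ making the reached set $R_r$ monotone, and $B$-strong-connectivity supplying a cut-crossing edge in each window), and then apply the ergodicity-coefficient contraction of the spread $M(k)-m(k)$. The one place that genuinely required care --- that the cut-crossing edge must be oriented so that some $v\notin R_r$ places weight $a_{vu}(r')\ge\alpha$ on some $u\in R_r$, which is what the left-multiplication $\Phi(r'+1,s)=A(r')\Phi(r',s)$ needs --- is handled correctly, since strong connectivity of the windowed union graph yields crossing edges in both directions across any proper cut. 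Your contraction factor $1-\alpha^{T_0}$ is not the sharpest available (one can get $1-n\alpha^{T_0}$), but it is valid and suffices for geometric decay of the spread and hence for consensus.
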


While Theorem \ref{thm:hendrix} relaxes the static communication network to time-varying networks, it still has shortcomings in addressing many realistic multiagent systems. For instance, the network connectivity must be preserved over any time window of length $B$, and it is hard to check whether that is happening (especially if the networks are generated endogenously based on the agents' states). Second, the assumptions on the weight matrices are somewhat restrictive as, in real situations, the weights can approach $0$ and then increase again to $1$. Besides, the theorem uses an implicit assumption on the symmetry of the networks by imposing strong connectivity. Finally, in realistic situations, the evolution of the network itself depends on the evolution of agents' states. In contrast, in the above theorem, the network dynamics are driven by an exogenous process that is independent of how the states evolve. A generalization of Theorem \ref{thm:hendrix} is to allow weak coupling between the state and network dynamics, with certain network connectivity/symmetry assumptions \cite{olfati2004consensus,hendrickx2013convergence,jadbabaie2003coordination}. We refer to \cite{sonin2008decomposition,touri2011existence} for other extensions of such results that use a backward product of stochastic matrices. We mention here that our work is also related to dynamic clustering, for which the goal is to provide a theoretical justification for cluster synchronization in multiagent systems by using saddle-point dynamics \cite{burger2012hierarchical,burger2014duality}. However, the network structure in that application is fixed and captured by a set of linear constraints. In contrast, in our work, the network dynamically evolves as a complex function of the state variables. 

While the existing results can address a large class of multiagent network systems, there are still many examples that do not fit into any of the categories mentioned above or for which the application of the above techniques provides poor results on the behavior of the agents. Our work is fundamentally different from the earlier literature and offers a new perspective for the averaging dynamics by capturing the internal co-evolution of state and network dynamics. Thus, in this paper, we depart from conventional methods for stability analysis of multiagent dynamics, such as Markov chains or products of stochastic matrices. Our approach allows us to relax some of the common assumptions, such as global knowledge on the network connectivity throughout the dynamics. We believe that this new approach, together with the current results on the multiagent averaging dynamics, can be used to analyze a broader class of complex state-dependent network dynamics.    

\subsection{Contributions and Organization} 

Inspired by the above shortcomings and building upon our previous work \cite{etesami2019simple}, in this paper, we provide a principled framework from an optimization perspective to study the Lyapunov stability of multiagent state-dependent network dynamics. We show that despite the challenges due to state-network coupling, it is still possible to capture the co-evolution of network and state dynamics for a broad class of multiagent systems, even under an asymmetric or nonlinear environment. More precisely, we show that often the network structure among the agents can be viewed as \emph{dual variables} of a constrained optimization problem, where the existence of an edge is related to the tightness of the corresponding constraint. As a result, we can view multiagent network dynamics as an iterative primal-dual algorithm for a static constrained optimization problem where the primal updates correspond to state updates of the dynamics, and the dual updates correspond to the network evolution. The KKT optimality conditions also guide the coupling between the network and state dynamics. This approach allows us to view the constrained Lagrangian of the underlying static problem as a Lyapunov or ``semi-Lyapunov" function for the multiagent dynamics. Therefore, we obtain a principled way to establish the stability of multiagent network dynamics in terms of the asymptotic convergence of an iterative optimization algorithm. That means that a variety of iterative optimization methods can be used to study the stability of multiagent state-dependent network dynamics.

In Section \ref{sec:model}, we first provide our problem formulation, which involves modeling of a large class of state-dependent network dynamics. In Section \ref{sec:BCD}, we apply a sequential optimization framework based on the block coordinate descent method to establish Lyapunov stability for a large class of state-dependent network dynamics. We consider that method under both symmetric and asymmetric network structures and use the change of variables to generate other types of state-dependent network dynamics. In Section \ref{sec:Saddle}, we use a saddle-point model to extend our results to a case in which there is a conflict between the network structure and the state evolution. In Section \ref{sec:continuous-saddle}, we consider continuous-time dynamics for which the edge emergence between agents is no longer a binary event, but rather a continuous weight process. In Section \ref{sec:simu}, we provide numerical results, and we conclude the paper by identifying some future directions of research in Section \ref{sec:conclusion}.   

\subsection{Notation} 
\noindent
For a positive integer $n\in\mathbb{Z}^+$ we set $[n]:=\{1,2,\ldots,n\}$. We use bold symbols for vectors. Given a vector $\boldsymbol{v}\in \mathbb{R}^n$ we use $\boldsymbol{v}^T$ to denote its transpose, $\|\boldsymbol{v}\|^2=\boldsymbol{v}^T\boldsymbol{v}$ to denote its Euclidean norm, and $\|\boldsymbol{v}\|_1=\sum_{i=1}^{n}|v_i|$ to denote its $l_1$-norm. We let $\mbox{diag}(\boldsymbol{v})$ be a diagonal matrix with vector $\boldsymbol{v}$ as its diagonal elements and zero elsewhere. Given a positive-definite matrix $Q$, we let $\|\boldsymbol{v}\|^2_Q=\boldsymbol{v}^TQ\boldsymbol{v}$. We use $\mathcal{C}$ to denote the class of real-valued differentiable functions with a finite global minimum. Similarly, we let $\mathcal{C}^2$ be the class of twice  differentiable functions with a finite global minimum. Given a strictly convex function $\Psi\in \mathcal{C}$, we use $D_{\Psi}(\boldsymbol{x},\boldsymbol{y})=\Psi(\boldsymbol{x})-\Psi(\boldsymbol{y})-\nabla \Psi(\boldsymbol{y})^T(\boldsymbol{x}-\boldsymbol{y})$ to denote the Bregman divergence with respect to $\Psi$. Finally, $f:\mathbb{R}^n\to \mathbb{R}$ is called $L$-\emph{Lipschitz continuous} if there exists $L>0$ such that $|f(\boldsymbol{x})-f(\boldsymbol{y})|\leq L\|\boldsymbol{x}-\boldsymbol{y}\|, \forall \boldsymbol{x},\boldsymbol{y}\in\mathbb{R}^n$.

\section{Problem Formulation}\label{sec:model}

Let us consider a multiagent network system consisting of $[n]$ agents. At any given time $k=0,1,2,\ldots$, we use  $x_{i}^k\in \mathbb{R}$ to denote the state of agent $i$, and $\boldsymbol{x}^k=(x_1^k,\ldots,x_n^k)^T$ to denote the state of the entire system at that time.\footnote{For simplicity of presentation, we assume that agents' states are scalar real numbers. However, most of the results can be naturally extended to the case in which agents' states are vectors in $\mathbb{R}^d$.} Moreover, we assume that each agent $i\in [n]$ has $n-1$ measurement functions $g_{ij}(x_i,x_j):\mathbb{R}^2\to\mathbb{R}$, one for every other agent $j\in [n]\setminus\{i\}$. For most of this paper, we assume that the measurement functions $g_{ij}$ are twice differentiable and belong to $\mathcal{C}^2$. Given any state $\boldsymbol{x}$, we assume that the set of neighbors of an agent $i\in[n]$ is determined by the logic constraints $g_{ij}(x_i,x_j)\leq 0, \ j\in [n]\setminus\{i\}$. In other words, for a given state $\boldsymbol{x}$, agent $i$ is influenced by agent $j$ (or $j$ is a neighbor of $i$) if and only if $g_{ij}(x_i,x_j)\leq 0$. In particular, we use $N_i(\boldsymbol{x}):=\{j: g_{ij}(x_i,x_j)\leq 0\}$ to denote the set of neighbors of agent $i$ at a state $\boldsymbol{x}$. At any time instance $k$, each agent $i\in [n]$ interacts with its neighbors and updates its state at the next time step to
\begin{align}\label{eq:state-dependent-model}
x_i^k=\phi_i\big(\boldsymbol{x}^k, N_i(\boldsymbol{x}^k)\big), \ \ i\in [n],
\end{align}
where $\phi_i(\cdot)$ is an agent-specific update rule, which is a function of the states of agent $i$'s neighbors. Note that the above discrete-time dynamics contain a broad class of state-dependent network dynamics for which the network at time $k$ is given by $\mathcal{G}_k=([n], \{(i,j): j\in N_i(\boldsymbol{x}^k)\})$. It is evident that the network structure at time $k$ depends on the agents' states at that time, and the state at the next time step $k+1$ is a function of the network structure at the current time $k$.

\begin{definition}
The measurement functions $g_{ij}(\cdot)$ are called \emph{symmetric} if for all $i\neq j$ we have $g_{ij}(x_i,x_j)=g_{ji}(x_j,x_i)$. Note that for symmetric measurement functions, the communication network $\mathcal{G}_k$ at any time $k$ is an undirected graph.  
\end{definition}

One of our main objectives in this paper is to provide a general class of update rules $\phi_i(\cdot)$ such that the state-dependent network dynamics \eqref{eq:state-dependent-model} converge to some equilibrium point or are Lyapunov stable in the following sense:

\begin{definition}
A function $V:\mathbb{R}^{n}\to \mathbb{R}$ is called a \emph{Lyapunov function} for the discrete time dynamical system $\boldsymbol{z}^{k+1}=h_k(\boldsymbol{z}^k), \ k=0,1,2,\ldots$, if it is decreasing along the trajectories of the dynamics, i.e., $V(\boldsymbol{z}^{k+1})< V(\boldsymbol{z}^k), \forall k$. We refer to a dynamical system that admits a Lyapunov function as \emph{Lyapunov stable}.
\end{definition}

\begin{remark}
While a Lyapunov function $V(\cdot)$ is typically defined to be a nonnegative function with $V(0)=0$, as we shall see, all the Lyapunov functions in this paper are bounded below by some global constant $M$. Therefore, if those functions are shifted by a constant $|M|$, it is ensured that $V(\cdot)+|M|$ is nonnegative and strictly decreasing along the trajectory of the dynamics. Moreover, we do not require $V(0)=0$, as the origin is not necessarily an equilibrium point of the dynamics.
\end{remark}

To illustrate the generality of the above model, let us consider the well-known homogeneous Hegselmann-Krause (HK) model from social science \cite{hegselmann2002opinion}. In the homogeneous HK model, there is a set of $[n]$ agents, and it is assumed that at each time instance $k=0, 1, 2, \ldots$, the opinion (state) of agent $i\in[n]$ can be represented by a scalar $x_{i}^k\in \mathbb{R}$. Each agent $i$ updates its state at time $k+1$ by taking the arithmetic average of its state and those of others that are in its $\epsilon$-neighborhood at time $k$, i.e., 
\begin{align}\nonumber
x_i^{k+1}=\frac{x_i^k+\sum_{j\in N_i(\boldsymbol{x}^k)} x_j^k}{1+|N_i(\boldsymbol{x}^k)|}, \ \ \ \ i\in [n].
\end{align}
Here $\epsilon>0$ is a constant parameter, and $N_i(\boldsymbol{x}^k)=\{j\in[n]\setminus\{i\}: |x_i^k-x_j^k|\leq \epsilon\}$ denotes the set of neighbors of agent $i$ at time $k$. In fact, it is known that such dynamics are Lyapunov stable and converge to an equilibrium point \cite{hegselmann2002opinion,proskurnikov2018tutorial}. It is easy to see that homogeneous HK dynamics are a very special case of the state-dependent network dynamics \eqref{eq:state-dependent-model} for which the symmetric measurements are $g_{ij}(x_i,x_j)=\frac{(x_i-x_j)^2}{2}-\frac{\epsilon^2}{2}$, and the update rule is given by $\phi_i\big(\boldsymbol{x}, N_i(\boldsymbol{x})\big)=\frac{x_i+\sum_{j\in N_i(\boldsymbol{x})} x_j}{1+|N_i(\boldsymbol{x})|}$.

\section{Lyapunov Stability Using Block Coordinate Descent}\label{sec:BCD}

A popular approach to solving optimization problems is the so-called \emph{block coordinate descent} (BCD) method, which is also known as the \emph{Gauss-Seidel method}.  At each iteration of this method, the objective function is minimized with respect to a single block of variables while the rest of the blocks are held \emph{fixed}. More specifically, consider this optimization problem: $\min \{F(\boldsymbol{y}_1,\ldots,\boldsymbol{y}_n), \ \boldsymbol{y}_i\in Y_i, \forall i\}$,
where $Y_i\subseteq \mathbb{R}^{m_i}$ is a closed convex set, and $F: \prod_{i=1}^{n}Y_i\to \mathbb{R}$ is a continuous function. At iteration $t=0,1,\ldots$ of the BCD method, the block variable $\boldsymbol{y}_i$ is updated through solving of the subproblem $\boldsymbol{y}_i^t=\arg\min_{\boldsymbol{z}_i\in Y_i} F(\boldsymbol{y}^{t}_1,\ldots,\boldsymbol{y}_{i-1}^{t},\boldsymbol{z}_i,\boldsymbol{y}_{i+1}^{t},\ldots,\boldsymbol{y}^t_n), \ i\in[n]$. Since, in practice, finding the exact minimum in each iteration might be difficult, one can consider an \emph{inexact} BCD method, whereby a smooth regularizer is added to the objective function or is approximated by a simpler convex function. In either case, and under some mild assumptions, it can be shown that the inexact BCD method will converge to a stationary point of the objective function $F(\cdot)$.

Now let us consider the following constrained nonlinear program:
\begin{align}\nonumber
&\min \ f(\boldsymbol{x}):=\sum_{i=1}^{n}f_i(x_i)\cr 
&\mbox{s.t.} \ \ \ \ \ g_{ij}(x_i,x_j)\leq 0, \ \forall i\neq j, \cr 
& \ \ \ \qquad \boldsymbol{x}\in \mathbb{R}^n,
\end{align} 
where $f_i\in \mathcal{C}^2, \forall i\in [n]$ and $g_{ij}\in \mathcal{C}^2 , \forall i\neq j$ are differentiable measurement functions between the two members of each pair of agents. In other words, each agent $i\in[n]$ has a private function $f_i(x_i)$, and the agents collectively want to choose their states to minimize the global objective function $f(\boldsymbol{x}):=\sum_{i=1}^{n}f_i(x_i)$ while they all remain connected. If we dualize the constraints by using dual variables $\lambda_{ij}\ge 0$, and form the Lagrangian function, we have,
\begin{align}\nonumber
L(\boldsymbol{x},\boldsymbol{\lambda})=f(\boldsymbol{x})+\sum_{i\neq j}\lambda_{ij}g_{ij}(x_i,x_j),
\end{align}
which is a function consisting of two block variables, namely a \emph{state} block variable $\boldsymbol{x}:=(x_1,\ldots,x_n)\in \mathbb{R}^{n}$, and a nonnegative \emph{network} block variable $\boldsymbol{\lambda}:=(\lambda_{ij}, i\neq j)$. Now let us minimize the Lagrangian function by using the BCD method subject to the box constraints $\lambda_{ij}\in [0,1], \forall i,j$. As $L(\boldsymbol{x},\boldsymbol{\lambda})$ is a linear function of $\boldsymbol{\lambda}$, if we fix the state block variable and minimize $L(\boldsymbol{x},\boldsymbol{\lambda})$ with respect to $\boldsymbol{\lambda}\in [0,1]^{n(n-1)}$, we get $\lambda_{ij}=1$ if $g_{ij}(x_i,x_j)\leq 0$ (i.e., there is a directed edge from agent  $i$ to agent $j$), and $\lambda_{ij}=0$ if $g_{ij}(x_i,x_j)> 0$ (i.e., no such an edge exists). Thus, fixing the state variable and minimizing the Lagrangian with respect to $\boldsymbol{\lambda}\in [0,1]^{n(n-1)}$, the dual variables precisely capture the network structure among the agents for that state. 

Motivated by many applications of distributed averaging over networks, such as for consensus \cite{olfati2004consensus}, opinion dynamics \cite{hegselmann2002opinion,degroot1974reaching}, distributed optimization \cite{nedic2009distributed,nedic2015distributed}, and formation control \cite{bullo2009distributed}, in this paper, we provide several classes of distributed averaging dynamics over complex state-dependent networks. As we shall see, these dynamics can not only recover several types of well-known linear averaging dynamics from the physical and social sciences, but also be extended to \emph{nonlinear} averaging dynamics over state-dependent network topologies. The following theorem provides our first class of nonlinear averaging dynamics whose specification to quadratic measurements can recover several linear averaging dynamics.

\begin{theorem}\label{thm:majorizing}
Let $f_i(x_i)\in \mathcal{C}^2$ and $g_{ij}(x_i,x_j)\in \mathcal{C}^2$ be symmetric functions with $|\frac{\partial^2 g_{ij}}{\partial x_i \partial x_j}|\leq m, |\frac{\partial^2f_{i}}{\partial x^2_i}|\leq m, \forall i,j$.\footnote{For instance, any $m$-smooth function possesses this property.} Then the state-dependent network dynamics
\begin{align}\label{eq:majorizing-dynamics}
x_i^{k+1}=x_i^k-\frac{\frac{\partial}{\partial x_i}f_i(x^k_i)+\sum_{j\in N_i(\boldsymbol{x}^k)} \frac{\partial}{\partial x_i}g_{ij}(x^k_i,x^k_j)}{2m(|N_i(\boldsymbol{x}^k)|+1)}, \ \  i\in [n]
\end{align}
admit a Lyapunov function $V(\boldsymbol{x}):=\sum_i f_i(x_i)+\frac{1}{2}\sum_{i,j}\min\{g_{ij}(x_i,x_j),0\}$ such that $V(\boldsymbol{x}^{k+1})\leq V(\boldsymbol{x}^k)-m\|\boldsymbol{x}^k-\boldsymbol{x}^{k+1}\|^2$. If, in addition, $g_{ij}(\cdot)$ are convex and $f_i(\cdot)$ are strictly convex functions, then the dynamics \eqref{eq:majorizing-dynamics} will converge to an equilibrium point.  
\end{theorem}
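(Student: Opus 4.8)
The plan is to read the dynamics \eqref{eq:majorizing-dynamics} as a \emph{majorize--minimize} step on the Lagrangian with the dual ``network'' block frozen at the current active constraint set. Fix $\boldsymbol{x}^k$ and set $\lambda^k_{ij}=1$ if $j\in N_i(\boldsymbol{x}^k)$ and $\lambda^k_{ij}=0$ otherwise; as the text observes, this $\boldsymbol{\lambda}^k$ minimizes $L(\boldsymbol{x}^k,\cdot)$ over $[0,1]^{n(n-1)}$. I would introduce the frozen surrogate
\begin{align}\nonumber
V_k(\boldsymbol{x}):=\sum_i f_i(x_i)+\tfrac12\sum_{i\neq j}\lambda^k_{ij}\,g_{ij}(x_i,x_j),
\end{align}
and use two elementary facts. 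First, since $\min\{g,0\}\le\lambda g$ for all $\lambda\in[0,1]$ with equality when $\lambda$ is the indicator of $\{g\le 0\}$, we get $V(\boldsymbol{x})\le V_k(\boldsymbol{x})$ for every $\boldsymbol{x}$ and $V(\boldsymbol{x}^k)=V_k(\boldsymbol{x}^k)$: $V_k$ majorizes $V$ and is tangent to it at $\boldsymbol{x}^k$. Second, using symmetry of $g_{ij}$ (so that $\partial_{x_i}g_{ji}(x_j,x_i)=\partial_{x_i}g_{ij}(x_i,x_j)$ and $\lambda^k_{ji}=\lambda^k_{ij}$) a short computation gives $\partial_{x_i}V_k(\boldsymbol{x})=f_i'(x_i)+\sum_{j\in N_i(\boldsymbol{x}^k)}\partial_{x_i}g_{ij}(x_i,x_j)$, so the numerator of \eqref{eq:majorizing-dynamics} is precisely $\partial_{x_i}V_k(\boldsymbol{x}^k)$ and the update reads $x^{k+1}_i=x^k_i-\partial_{x_i}V_k(\boldsymbol{x}^k)\big/\big(2m(|N_i(\boldsymbol{x}^k)|+1)\big)$.

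Next I would bound the Hessian of $V_k$. With $d_i:=m(|N_i(\boldsymbol{x}^k)|+1)$ and $D:=\diag(d_1,\dots,d_n)$, the curvature hypotheses give $|\partial^2_{x_ix_i}V_k|\le |f_i''|+\sum_{j\in N_i(\boldsymbol{x}^k)}|\partial^2_{x_ix_i}g_{ij}|\le d_i$, and $|\partial^2_{x_ix_j}V_k|\le m$ when $j\in N_i(\boldsymbol{x}^k)$ and $=0$ otherwise; estimating the off-diagonal quadratic form by AM--GM (each edge contributing at most $m(v_i^2+v_j^2)/2$, with $v_i^2$ counted $|N_i(\boldsymbol{x}^k)|$ times over the incident edges) then yields $\nabla^2V_k(\boldsymbol{x})\preceq 2D$ uniformly in $\boldsymbol{x}$. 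The descent lemma gives $V_k(\boldsymbol{y})\le V_k(\boldsymbol{x}^k)+\nabla V_k(\boldsymbol{x}^k)^{T}(\boldsymbol{y}-\boldsymbol{x}^k)+(\boldsymbol{y}-\boldsymbol{x}^k)^{T}D(\boldsymbol{y}-\boldsymbol{x}^k)$, whose right side is coordinate-separable and is minimized exactly at $\boldsymbol{y}=\boldsymbol{x}^{k+1}$ from \eqref{eq:majorizing-dynamics}. Plugging $\boldsymbol{x}^{k+1}$ in gives $V_k(\boldsymbol{x}^{k+1})\le V_k(\boldsymbol{x}^k)-\sum_i d_i(x^{k+1}_i-x^k_i)^2\le V_k(\boldsymbol{x}^k)-m\|\boldsymbol{x}^{k+1}-\boldsymbol{x}^k\|^2$ using $d_i\ge m$, and combining with $V(\boldsymbol{x}^k)=V_k(\boldsymbol{x}^k)$ and $V(\boldsymbol{x}^{k+1})\le V_k(\boldsymbol{x}^{k+1})$ yields the claimed inequality. (This uses the full $m$-smoothness of $f_i,g_{ij}$ indicated in the footnote, in particular the bound on $|\partial^2_{x_ix_i}g_{ij}|$, not only on the cross partial.)

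For the convergence statement, strict convexity of each $f_i\in\mathcal{C}^2$ (global minimum attained and finite) makes $\sum_i f_i$ coercive, so, with the $\min\{g_{ij},0\}$ terms bounded below (cf.\ the Remark), $V$ has bounded sublevel sets; then $V(\boldsymbol{x}^k)$ is nonincreasing and bounded below hence convergent, $\{\boldsymbol{x}^k\}$ stays in a compact set, $\sum_k\|\boldsymbol{x}^{k+1}-\boldsymbol{x}^k\|^2<\infty$, and $\boldsymbol{x}^{k+1}-\boldsymbol{x}^k\to 0$. Since there are finitely many possible networks, along any convergent subsequence $\boldsymbol{x}^{k_\ell}\to\bar{\boldsymbol{x}}$ I would pass to a further subsequence with a fixed active set $\mathcal{E}$; letting $V_{\mathcal{E}}$ denote $V_k$ with $\lambda^k$ replaced by the indicator of $\mathcal{E}$, the corresponding update map is continuous and fixes $\bar{\boldsymbol{x}}$ (because $\boldsymbol{x}^{k_\ell+1}-\boldsymbol{x}^{k_\ell}\to 0$), i.e.\ $\nabla V_{\mathcal{E}}(\bar{\boldsymbol{x}})=0$; as $V_{\mathcal{E}}$ is strictly convex it has a unique stationary point, so the compact connected set of limit points of $\{\boldsymbol{x}^k\}$ lies in the finite collection of these minimizers and is a single point $\boldsymbol{x}^\ast$.

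The substantive core is the first chain of inequalities; once the frozen surrogate $V_k$ is in place, its only nontrivial ingredient is the Hessian bookkeeping. I expect the real obstacle to be in the convergence part: the neighbor maps $N_i(\cdot)$ are discontinuous across the switching surfaces $\{g_{ij}=0\}$, so one cannot pass to the limit in \eqref{eq:majorizing-dynamics} directly, and identifying $\boldsymbol{x}^\ast$ as a genuine equilibrium of \eqref{eq:majorizing-dynamics} requires extra care exactly when a limit point lies on such a surface, $g_{ij}(\bar x_i,\bar x_j)=0$ — one must show that the tail of the trajectory settles into a fixed configuration, or that the frozen-set stationary point is an equilibrium of the state-dependent dynamics regardless.
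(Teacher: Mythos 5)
Your proposal is correct and follows essentially the same route as the paper: a majorize--minimize step on the Lagrangian with the network block frozen at the current active set, domination of the Hessian of the surrogate by the diagonal matrix $2m\,\mbox{diag}(|N_1(\boldsymbol{x}^k)|+1,\ldots,|N_n(\boldsymbol{x}^k)|+1)$ (you use AM--GM where the paper invokes Gershgorin, which is equivalent), the identical drift bound, and the same finite-family-of-strictly-convex-surrogates argument for convergence (your connected-limit-set phrasing replaces the paper's explicit $\epsilon$-separation of neighborhoods). The two caveats you flag --- that the argument also needs $|\partial^2 g_{ij}/\partial x_i^2|\leq m$ beyond the stated cross-partial bound, and that identifying the limit as a fixed point of \eqref{eq:majorizing-dynamics} needs care on switching surfaces --- are equally present in, and not addressed by, the paper's own proof.
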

\begin{proof}
Let us consider the following Lagrangian function 
\begin{align}\nonumber
L(\boldsymbol{x},\boldsymbol{\lambda})=\sum_{i=1}^n f_i(\boldsymbol{x}_i)+\frac{1}{2}\sum_{i\neq j}\lambda_{ij}g_{ij}(x_i,x_j),
\end{align}
and consider the BCD method applied to this function when $\boldsymbol{\lambda}\in [0,1]^{n(n-1)}$ and $\boldsymbol{x}\in \mathbb{R}^n$. If the state variable is fixed to $\boldsymbol{x}^k$, and we set $\boldsymbol{\lambda}^k:=\argmin_{\boldsymbol{\lambda}\in [0,1]^{n(n-1)}}L(\boldsymbol{x}^k,\boldsymbol{\lambda})$, it is easy to see that $\boldsymbol{\lambda}^k$ precisely captures the network structure among the agents at the current state $\boldsymbol{x}^k$. Next, let us fix the network variable to $\boldsymbol{\lambda}^k$, and consider 
\begin{align}\nonumber
L_k(\boldsymbol{x}):=L(\boldsymbol{x},\boldsymbol{\lambda}^k)&=\sum_{i=1}^n f_i(\boldsymbol{x}_i)+\frac{1}{2}\sum_{i\neq j}\lambda^k_{ij}g_{ij}(x_i,x_j)\cr 
&=\sum_{i=1}^n f_i(\boldsymbol{x}_i)+\frac{1}{2}\sum_i\sum_{j\in N_i(\boldsymbol{x}^k)} g_{ij}(x_i,x_j).
\end{align}    
Ideally, we want to set the state at the next time step $\boldsymbol{x}^{k+1}$ to a global minimizer of $L_k(\boldsymbol{x})$. However, since it might be difficult to solve the minimization problem $\min_{\boldsymbol{x}\in\mathbb{R}^n}L_k(\boldsymbol{x})$, we use an inexact BCD method, with which a quadratic upper approximation of $L_k(\boldsymbol{x})$ is minimized. More precisely, consider the quadratic approximation of $L_k(\boldsymbol{x})$ at the current point $\boldsymbol{x}^k$:
\begin{align}\label{eq:upper-appx-quad}
L_k(\boldsymbol{x})\simeq L(\boldsymbol{x}^k)+(\boldsymbol{x}-\boldsymbol{x}^k)^T\nabla L_k(\boldsymbol{x}^k)+\frac{1}{2}(\boldsymbol{x}-\boldsymbol{x}^k)^T\nabla^2 L_k(\boldsymbol{x}^k)(\boldsymbol{x}-\boldsymbol{x}^k),
\end{align}  
where $\nabla L_k(\boldsymbol{x}^k)$ is the gradient of $L_k(\boldsymbol{x})$ at $\boldsymbol{x}^k$, whose $i$th component is given by 
\begin{align}\nonumber
[\nabla L_k(\boldsymbol{x}^k)]_i=\frac{\partial}{\partial x_i}f_i(x_i)+\sum_{j\in N_i(\boldsymbol{x}^k)} \frac{\partial}{\partial x_i}g_{ij}(x^k_i,x^k_j).
\end{align}
Moreover, $\nabla^2 L_k(\boldsymbol{x}^k)$ is the Hessian of $L_k(\boldsymbol{x})$ at $\boldsymbol{x}^k$, with the Hessian matrix function
\begin{align}\nonumber
[\nabla^2 L_k(\boldsymbol{x})]_{ij}=\begin{cases}
\frac{\partial^2}{\partial x^2_i}f_i(x_i)+\sum_{j\in N_i(\boldsymbol{x}^k)} \frac{\partial^2 g_{ij}(x_i,x_j)}{\partial x^2_i} & \mbox{if} \ j=i\\
\frac{\partial^2 g_{ij}(x_i,x_j)}{\partial x_i \partial x_j} & \mbox{if} \ j\in N_i(\boldsymbol{x}^k)\\
0 & \mbox{if} \ j\notin N_i(\boldsymbol{x}^k).
\end{cases}
\end{align}  
Now, if we assume $|\frac{\partial^2 g_{ij}}{\partial x_i \partial x_j}|\leq m, |\frac{\partial^2f_{i}}{\partial x^2_i}|\leq m, \forall i,j$, and use the Gershgorin Circle Theorem, one can see that for any $\boldsymbol{x}$, the Hessian $\nabla^2 L_k(\boldsymbol{x})$ is dominated by the diagonal matrix $Q_k:=2m\cdot \mbox{diag}(|N_1(\boldsymbol{x}^k)|+1,\ldots,|N_n(\boldsymbol{x}^k)|+1)$. Via the Tailor expansion, for every $\boldsymbol{x}\in \mathbb{R}^n$ there exists an $\boldsymbol{\zeta}_x\in\mathbb{R}^n$ such that,
\begin{align}\nonumber
L_k(\boldsymbol{x})&=L(\boldsymbol{x}^k)+(\boldsymbol{x}-\boldsymbol{x}^k)^T\nabla L_k(\boldsymbol{x}^k)+\frac{1}{2}(\boldsymbol{x}-\boldsymbol{x}^k)^T\nabla^2 L_k(\boldsymbol{\zeta}_x)(\boldsymbol{x}-\boldsymbol{x}^k)\cr 
&\leq L(\boldsymbol{x}^k)+(\boldsymbol{x}-\boldsymbol{x}^k)^T\nabla L_k(\boldsymbol{x}^k)+\frac{1}{2}(\boldsymbol{x}-\boldsymbol{x}^k)^TQ_k(\boldsymbol{x}-\boldsymbol{x}^k):=u_k(\boldsymbol{x}).
\end{align}  
Therefore, $u_k(\boldsymbol{x})$ is a quadratic upper approximation for $L_k(\boldsymbol{x})$ for any $\boldsymbol{x}$. Let
\begin{align}\label{eq:majorizing-compact-dynamics}
\boldsymbol{x}^{k+1}=\argmin_{\boldsymbol{x}\in \mathbb{R}^n} u_k(\boldsymbol{x})=\boldsymbol{x}^k-Q_k^{-1}\nabla L_k(\boldsymbol{x}^k).
\end{align}
We can write,
\begin{align}\nonumber
L_k(\boldsymbol{x}^{k+1})\leq u_k(\boldsymbol{x}^{k+1})\leq u_k(\boldsymbol{x}^k)=L_k(\boldsymbol{x}^k).
\end{align}
That shows that the state-dependent network dynamics,
\begin{align}\nonumber
\boldsymbol{x}_i^{k+1}=\boldsymbol{x}_i^k-\frac{\frac{\partial}{\partial x_i}f_i(x^k_i)+\sum_{j\in N_i(\boldsymbol{x}^k)} \frac{\partial}{\partial x_i}g_{ij}(x^k_i,x^k_j)}{2m(|N_i(\boldsymbol{x}^k)|+1)},
\end{align}
are Lyapunov stable (i.e., $L(\cdot)$ decreases regardless of the state or network updates), and the function 
\begin{align}\nonumber
V(\boldsymbol{x}):=\min_{\boldsymbol{\lambda}\in [0,1]^{n(n-1)}}L(\boldsymbol{x},\boldsymbol{\lambda})=\sum_i f_i(x_i)+\frac{1}{2}\sum_{i,j}\min\{g_{ij}(\boldsymbol{x}),0\}
\end{align}
serves as a Lyapunov function. Moreover, the drift of this Lyapunov is bounded by
\begin{align}\nonumber
V(\boldsymbol{x}^k)-V(\boldsymbol{x}^{k+1})&= L(\boldsymbol{x}^k,\boldsymbol{\lambda}^k)-L(\boldsymbol{x}^{k+1},\boldsymbol{\lambda}^{k+1})\ge L(\boldsymbol{x}^k,\boldsymbol{\lambda}^k)-L(\boldsymbol{x}^{k+1},\boldsymbol{\lambda}^{k})\cr 
&=L_k(\boldsymbol{x}^k)-L_k(\boldsymbol{x}^{k+1})=u_k(\boldsymbol{x}^k)-L_k(\boldsymbol{x}^{k+1})\cr 
&\ge u_k(\boldsymbol{x}^k)-u_k(\boldsymbol{x}^{k+1})= \frac{1}{2}(\nabla L_k(\boldsymbol{x}^k))^TQ_k^{-1}\nabla L_k(\boldsymbol{x}^k)\cr 
&=\frac{1}{2}\|Q_k^{-1}\nabla L_k(\boldsymbol{x}^k)\|^2_{Q_k}=\frac{1}{2}\|\boldsymbol{x}^k-\boldsymbol{x}^{k+1}\|^2_{Q_k}\ge m\|\boldsymbol{x}^k-\boldsymbol{x}^{k+1}\|^2,
\end{align}
where the last equality follows from \eqref{eq:majorizing-compact-dynamics}, and the last inequality holds as all the diagonal entries of $Q_k$ are greater than $2m$. Therefore, $V(\boldsymbol{x}^{k+1})\leq V(\boldsymbol{x}^k)-m\|\boldsymbol{x}^k-\boldsymbol{x}^{k+1}\|^2$. Since $V(\cdot)$ is lower bounded by a finite value, we get $\lim_{k\to \infty} \|\boldsymbol{x}^{k+1}-\boldsymbol{x}^k\|=0$. That, in view of \eqref{eq:majorizing-compact-dynamics} and the fact that diagonal entries of $Q_k^{-1}$ are lower bounded by $\frac{1}{2mn}$, implies $\lim_{k\to \infty}\nabla L_k(\boldsymbol{x}^k)=\boldsymbol{0}$.

Next, to show the convergence of the dynamics \eqref{eq:majorizing-dynamics} in the case of convex measurements $g_{ij}(\cdot)$ and strictly convex functions $f_i(\cdot)$, we note that for any $k$, $L_k(\boldsymbol{x})$ belongs to the following \emph{finite} family of strictly convex functions 
\begin{align}\nonumber
\mathcal{H}:=\Big\{\sum_if_i(x_i)+\frac{1}{2}\sum_{i,j}\lambda_{ij}g_{ij}(x_i,x_j): \lambda_{ij}\in\{0,1\}, \forall i,j\Big\},
\end{align}
which contains at most $O(2^{n^2})$ functions. The reason is that $L_k(\boldsymbol{x})=L(\boldsymbol{x},\boldsymbol{\lambda}^k)$, where $\boldsymbol{\lambda}^k$ is the solution of the linear program $\min_{\boldsymbol{\lambda}\in [0,1]^{n(n-1)}}L(\boldsymbol{x}^k,\boldsymbol{\lambda})$, and must be an extreme point of $[0,1]^{n(n-1)}$. Now, given any $h(\boldsymbol{x})\in \mathcal{H}$, let ${\rm h}_1<{\rm h}_2<\ldots$, be all the indicies $k$ for which $L_k(\boldsymbol{x})=h(\boldsymbol{x})$. Then we can partition the sequence $\{\boldsymbol{x}^k\}$ into at most $|\mathcal{H}|$ subsequences $\{\{\boldsymbol{x}^{{\rm h}_{\ell}}\}_{\ell\ge 1},h\in \mathcal{H}\}$. Since $\lim_{k\to \infty}\nabla L_k(\boldsymbol{x}^k)=\boldsymbol{0}$, that means that for any subsequence $\{\boldsymbol{x}^{{\rm h}_{\ell}}\}$ we have, $\lim_{\ell\to \infty}\nabla h(\boldsymbol{x}^{{\rm h}_{\ell}})=\lim_{\ell\to \infty}\nabla L_{{\rm h}_{\ell}}(\boldsymbol{x}^{{\rm h}_{\ell}})=\boldsymbol{0}$. As $h(\cdot)$ is a strictly convex function with a finite global minimum, the subsequence $\{\boldsymbol{x}^{{\rm h}_{\ell}}\}_{\ell\ge 1}$ must converge to the unique minimizer of $h(\cdot)$, denoted by $\boldsymbol{x}_h$. Since there are a finite number of such subsequences, for any $\epsilon>0$, there exists $K_{\epsilon}$ such that $\|\boldsymbol{x}^{{\rm h}_{\ell}}-\boldsymbol{x}_h\|<\epsilon, \forall h\in\mathcal{H}, \ell>K_{\epsilon}$. Let $\mathcal{X}=\{\boldsymbol{x}_h=\argmin h(\boldsymbol{x}): h\in \mathcal{H}\}$ be the finite set of minimizers of all the functions in $\mathcal{H}$, and choose $\epsilon:=\frac{1}{3}\min_{\boldsymbol{x}_p\neq \boldsymbol{x}_q\in \mathcal{X}}\|\boldsymbol{x}_p-\boldsymbol{x}_q\|$. Then for $\ell>K_{\epsilon}$, each subsequence $\{\boldsymbol{x}^{{\rm h}_{\ell}}\}_{\ell\ge 1}$ lies in an $\epsilon$-neighborhood of its limit point $\boldsymbol{x}_h$, and, moreover, there is no jump of the iterates between two distinct $\epsilon$-neighborhoods (Otherwise, $\|\boldsymbol{x}^{k+1}-\boldsymbol{x}^{k}\|>\frac{\epsilon}{3}$ for some $k$, contradicting the fact that $\lim_{k\to \infty} \|\boldsymbol{x}^{k+1}-\boldsymbol{x}^k\|=0$.) Thus, for $\ell>K_{\epsilon}$, all the subsequences $\{\{\boldsymbol{x}^{{\rm h}_{\ell}}\}_{\ell\ge 1}, h\in\mathcal{H}\}$ must lie in the same $\epsilon$-neighborhood, and hence the sequence $\{\boldsymbol{x}^k\}$ converges to a limit point $\boldsymbol{x}^*\in \mathcal{X}$.   
\end{proof}

\begin{example}\label{ex:exact}
Let $g_{ij}(x_i,x_j)=\frac{(x_i-x_j)^2}{2}-\frac{\epsilon^2}{2}, i\neq j$ be symmetric quadratic measurements and $f_i(x_i)=0, \forall i\in[n]$. Clearly, these functions satisfy the statement of Theorem \ref{thm:majorizing} with $m=1$. By applying Theorem \ref{thm:majorizing} directly to these functions, we obtain 
\begin{align}\nonumber
x_i^{k+1}=x_i^k-\frac{\sum_{j\in N_i(\boldsymbol{x}^k)} (x_i^k-x_j^k)}{2(|N_i(\boldsymbol{x}^k)|+1)}=\frac{|N_i|+2}{2(|N_i|+1)}x_i^k+\frac{1}{2}\frac{\sum_{j\in N_i(\boldsymbol{x}^k)}x_j^k}{|N_i(\boldsymbol{x}^k)|+1}, \ \ i\in [n],
\end{align}   
which has been shown to be Lyapunov stable. However, the above dynamics are not exactly the homogeneous HK dynamics, but rather a ``lazy" version of them wherein each agent puts a higher weight of (nearly) $\frac{1}{2}$ on its own state. The reason for losing a factor of $\frac{1}{2}$ for the \emph{general} (nonquadratic) measurements is that the quadratic upper approximation in \eqref{eq:upper-appx-quad} may not be exact, and the cost of such approximation is reflected by an extra factor of $\frac{1}{2}$ in the underlying \emph{nonlinear} dynamics \eqref{eq:majorizing-dynamics}. However, if the measurement functions are quadratic (as in the HK model), the quadratic approximation in \eqref{eq:upper-appx-quad} becomes exact, and one can skip the approximation step in \eqref{eq:majorizing-compact-dynamics} by directly computing the gradient and Hessian matrices in a closed form to show that $L_k(\boldsymbol{x}^{k+1})\leq L_k(\boldsymbol{x}^k)$. More precisely, for quadratic measurements we have
\begin{align}\nonumber
L_k(\boldsymbol{x})=L(\boldsymbol{x}^k)+(\boldsymbol{x}-\boldsymbol{x}^k)^T\nabla L_k(\boldsymbol{x}^k)+\frac{1}{2}(\boldsymbol{x}-\boldsymbol{x}^k)^T\nabla^2 L_k(\boldsymbol{x}^k)(\boldsymbol{x}-\boldsymbol{x}^k),
\end{align}
with a closed-form gradient $\nabla L_k(\boldsymbol{x}^k)=(D_k-A_k)\boldsymbol{x}^k$ and Hessian $\nabla^2 L_k(\boldsymbol{x}^k)=D_k+A_k$, where $A_k$ is the adjacency matrix of the communication network at state $\boldsymbol{x}^k$ and $D_k=\mbox{diag}(|N_1(\boldsymbol{x}^k)|,\ldots,|N_n(\boldsymbol{x}^k)|)$. If we take $Q_k=\mbox{diag}(|N_1(\boldsymbol{x}^k)|+1,\ldots,|N_n(\boldsymbol{x}^k)|+1)$ (i.e., without an extra factor of 2) and note that $\boldsymbol{x}^{k}-\boldsymbol{x}^{k+1}=Q_k^{-1}\nabla L_k(\boldsymbol{x}^k)$, we get
\begin{align}\nonumber
L_k(\boldsymbol{x}^{k+1})-L_k(\boldsymbol{x}^{k})&=(\boldsymbol{x}^{k+1}\!-\!\boldsymbol{x}^k)^T\nabla L_k(\boldsymbol{x}^k)+\frac{1}{2}(\boldsymbol{x}^{k+1}\!-\!\boldsymbol{x}^k)^T(D_k+A_k)(\boldsymbol{x}^{k+1}\!-\!\boldsymbol{x}^k)\cr 
&=\frac{1}{2}(\boldsymbol{x}^{k+1}\!-\!\boldsymbol{x}^k)^T\big[D_k+A_k-2Q_k\big](\boldsymbol{x}^{k+1}\!-\!\boldsymbol{x}^k)\leq -\|\boldsymbol{x}^{k+1}-\boldsymbol{x}^k\|^2,
\end{align}   
where the last inequality holds because $D_k+A_k-2Q_k=-2I-(D_k-A_k)\leq -2I$. Therefore, Lyapunov stability of the homogeneous HK dynamics 
\begin{align}\nonumber
\boldsymbol{x}^{k+1}=\boldsymbol{x}^{k}-Q_k^{-1}\nabla L_k(\boldsymbol{x}^k)=Q_k^{-1}(I+A_k)\boldsymbol{x}^k
\end{align}
can be viewed as a special case of Theorem \ref{thm:majorizing} for specific quadratic measurements with an associated Lyapunov function $V(\boldsymbol{x})=\sum_{i,j}\min\{\frac{(x_i-x_j)^2}{2}-\frac{\epsilon^2}{2},0\}$.
\end{example}

 \medskip
\begin{example}
Let $\mathcal{G}=([n],\mathcal{E})$ be a fixed undirected graph with a positive weight $a_{ij}=a_{ji}>0$ on each edge $\{i,j\}\in \mathcal{E}$. (We set $a_{ij}=a_{ji}=0$ if $\{i,j\}\notin \mathcal{E}$.) Assume that $\sum_{j\in N_i\cup\{i\}}a_{ij}=1, \forall i$, where $N_i$ denotes the fixed set of neighbors of agent $i$. Now let us define $f_i(x_i)=0, \forall i\in[n]$. Moreover, let $K>0$ be a very large constant. Consider the following symmetric measurements:
\begin{align}\nonumber
g_{ij}(x_i,x_j)=\begin{cases}\frac{a_{ij}}{2}(x_i-x_j)^2-K, \ &\mbox{if} \ \{i,j\}\in \mathcal{E}\\
 1 \ &\mbox{if} \ \{i,j\}\notin \mathcal{E}.
 \end{cases}
\end{align}
Clearly, these functions satisfy the statement of Theorem \ref{thm:majorizing}. As $K$ is chosen to be a very large number, regardless of the state $\boldsymbol{x}^k$ of the dynamics (to be defined soon), we always have $N_i(\boldsymbol{x}^k)=\{j: g_{ij}(x_i^k,x_j^k)\leq 0\}=N_i$. If we leverage the quadratic structure of the measurement functions, and use the exact approach as in Example \ref{ex:exact}, we obtain $\nabla L_k(\boldsymbol{x}^k)=(I-D-A)\boldsymbol{x}^k$ and $\nabla^2 L_k(\boldsymbol{x}^k)=I-D+A$, where $A=(a_{ij})$ is the weighted adjacency matrix of the graph $\mathcal{G}$ (with zero diagonal entries), and $D=\mbox{diag}(a_{11},\ldots,a_{nn})$ is the diagonal matrix of self-degrees. Therefore, $Q_k:=I$ is a diagonal matrix dominating the Hessian matrix (as $I-D+A\leq I$ by the Gershgorin Theorem and since $a_{ii}=1-\sum_{j\in N_i}a_{ij}, \forall i$). Thus, the dynamics   
\begin{align}\nonumber
\boldsymbol{x}^{k+1}=\boldsymbol{x}^{k}-Q_k^{-1}\nabla L_k(\boldsymbol{x}^k)=\boldsymbol{x}^{k}-(I-D-A)\boldsymbol{x}^k=(D+A)\boldsymbol{x}^k
\end{align}  
are Lyapunov stable with a Lyapunov function 
\begin{align}\nonumber
V(x)=\frac{1}{2}\sum_{i,j}\min\{g_{ij}(x_i,x_j), 0\}=-|\mathcal{E}|K+\frac{1}{2}\sum_{\{i,j\}\in \mathcal{E}}a_{ij}(x_i-x_j)^2 .
\end{align}
That is exactly the well-known Laplacian Lyapunov function for the conventional averaging dynamics $\boldsymbol{x}^{k+1}=(D+A)\boldsymbol{x}^k$, which is shifted by a constant $-|\mathcal{E}|K$. Hence, we recover the result of Theorem \ref{thm-consensus}. 
\end{example}

 \medskip
\begin{example}
In the proof of Theorem \ref{thm:majorizing}, we restricted our attention to quadratic upper approximations. However, motivated by the mirror descent algorithm from convex optimization \cite{bubeck2015convex}, we can use any smooth convex mirror map $\Psi:\mathbb{R}^n\to \mathbb{R}$ to construct an upper approximation for $L_k(\boldsymbol{x})$ at the point $x^k$. By doing so, we obtain alternative state-dependent network dynamics whose Lyapunov stability and convergence can be established via an approach similar to that in Theorem \ref{thm:majorizing}. More precisely, Let $\Psi\in \mathcal{C}$ be a strictly convex function with $\nabla^2 \Psi(\boldsymbol{x})\ge 2mn I$. Then, $u_k(\boldsymbol{x}):=L(\boldsymbol{x}^k)+(\boldsymbol{x}-\boldsymbol{x}^k)^T\nabla L_k(\boldsymbol{x}^k)+D_{\Psi}(\boldsymbol{x},\boldsymbol{x}^k)$ serves as a convex upper approximation for $L_k(\boldsymbol{x})$; hence, if we update the state at the next time step to $\boldsymbol{x}^{k+1}=\argmin_{\boldsymbol{x}\in\mathbb{R}^n} u_k(\boldsymbol{x})$, or equivalently to the solution of  
\begin{align}\label{eq:mirror-map}
\nabla \Psi(\boldsymbol{x}^{k+1})=\nabla \Psi(\boldsymbol{x}^{k})-\nabla L_k(\boldsymbol{x}^{k}),
\end{align} 
that will guarantee a decrease in the Lyapunov $V(\boldsymbol{x})\!=\!\sum_i f_i(x_i)+\frac{1}{2}\sum_{i,j}\min\{g_{ij}(\boldsymbol{x}),0\}$. For instance, if we choose the mirror map to be the negative entropy function, i.e., $\Psi(\boldsymbol{x}):=\sum_{i=1}^{n}x_i\ln x_i$, and use \eqref{eq:mirror-map}, we obtain the following Lyapunov stable multiplicative dynamics:
\begin{align}\nonumber
x_i^{k+1}=x_i^k\cdot\exp \big(-\frac{\partial f(x_i)}{\partial x_i}-\!\!\!\!\!\sum_{j\in N_i(\boldsymbol{x}^k)} \!\!\!\frac{\partial g_{ij}(x^k_i,x^k_j)}{\partial x_i}\big).
\end{align}    
\end{example}

\subsection{Asymmetric State-Dependent Network Dynamics}
Asymmetric (directed) interconnections among  agents often introduce a significant challenge in the analysis of multiagent network dynamics. Unfortunately, the gradient operator is ``symmetric," meaning that fixing the network variable in the BCD method and updating the state variable in the negative direction of the gradient will always generate a symmetric class of averaging dynamics. However, one way to tackle that issue using sequential optimization is to introduce an independent copy of the state variable while making sure that the two copies remain close to each other. In other words, we capture the asymmetry between the agents by introducing an extra block variable into the BCD method and adding a penalty term (possibly asymmetric) to the objective function. That ensures that the two copies of the state variables remain close to each other. Here, the choice of the penalty function can be very problem-specific, resulting in different asymmetric state-dependent network dynamics. However, one natural choice for the penalty function is the Bregman divergence between the two copies of the state variables, as shown in the following theorem.
  
\begin{theorem}\label{thm:asymmetric}
Let $f_i(x_i)\in \mathcal{C}^2$ and $g_{ij}(x_i,x_j)\in \mathcal{C}^2$ be $L$-Lipschitz continuous functions such that $|\frac{\partial^2 g_{ij}}{\partial x_i \partial x_j}|\leq m, |\frac{\partial^2f_{i}}{\partial x^2_i}|\leq m, \forall i,j$. If $\|\boldsymbol{y}-\boldsymbol{x}\|_1\leq \frac{1}{nL}D_{f}(\boldsymbol{y},\boldsymbol{x}), \forall \boldsymbol{x},\boldsymbol{y}$, where $f(\boldsymbol{x})=\sum_{i=1}^nf_i(x_i)$, then the dynamics
\begin{align}\label{eq:asymmetric}
x_i^{k+1}=x_i^{k}-\frac{\sum_{j\in N_i(\boldsymbol{x}^k)}\frac{\partial}{\partial x_i}g_{ij}(x_i^k,x_j^k)}{m(|N_i(\boldsymbol{x}^k)|+1)}, \ \ \ i\in[n],
\end{align}  
are Lyapunov stable with a Lyapunov function $V(\boldsymbol{x})=\sum_{i,j}\min\{g_{ij}(x_i,x_j),0\}$.
\end{theorem}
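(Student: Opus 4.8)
\emph{Proof plan.} The plan is to reuse the block-coordinate-descent / majorization scheme of Theorem~\ref{thm:majorizing}, but on an enlarged Lagrangian that carries an auxiliary \emph{copy} $\boldsymbol{y}\in\mathbb{R}^n$ of the state block --- the symmetry-breaking device announced just before the theorem. Concretely, I would work with the three-block function
\begin{align}\nonumber
L(\boldsymbol{x},\boldsymbol{y},\boldsymbol{\lambda})=D_f(\boldsymbol{x},\boldsymbol{y})+\sum_{i\neq j}\lambda_{ij}\,g_{ij}(x_i,y_j),\qquad \boldsymbol{x},\boldsymbol{y}\in\mathbb{R}^n,\ \boldsymbol{\lambda}\in[0,1]^{n(n-1)}.
\end{align}
In each coupling term agent $i$ ``owns'' the first slot of $g_{ij}$ and sees its neighbor only through the copy $y_j$, so $\partial L/\partial x_i$ involves $g_{ij}(x_i,\cdot)$ but never $g_{ji}$ --- exactly the one-sided coupling appearing in~\eqref{eq:asymmetric}, which the gradient of the naive Lagrangian $\sum_{i\neq j}\lambda_{ij}g_{ij}(x_i,x_j)$ could not produce. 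Since $D_f(\boldsymbol{x},\boldsymbol{x})=0$, we have $\min_{\boldsymbol{\lambda}\in[0,1]^{n(n-1)}}L(\boldsymbol{x},\boldsymbol{x},\boldsymbol{\lambda})=\sum_{i\neq j}\min\{g_{ij}(x_i,x_j),0\}=V(\boldsymbol{x})$ (the factor $1/2$ present in Theorem~\ref{thm:majorizing} is gone because the ordered pairs $(i,j)$ are now genuinely distinct), and the hypothesis $\|\boldsymbol{y}-\boldsymbol{x}\|_1\le\frac{1}{nL}D_f(\boldsymbol{y},\boldsymbol{x})$ for all $\boldsymbol{x},\boldsymbol{y}$ already forces $D_f\ge0$, so $f$ is convex and the Bregman penalty is legitimate.

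Starting each round from $\boldsymbol{x}=\boldsymbol{y}=\boldsymbol{x}^k$, I would perform three block updates. (i) Minimizing the linear-in-$\boldsymbol{\lambda}$ map $L(\boldsymbol{x}^k,\boldsymbol{x}^k,\cdot)$ over $[0,1]^{n(n-1)}$ sets $\lambda^k_{ij}=1$ iff $g_{ij}(x^k_i,x^k_j)\le0$, so $\boldsymbol{\lambda}^k$ encodes $N_i(\boldsymbol{x}^k)$, with value $V(\boldsymbol{x}^k)$. (ii) Freezing $\boldsymbol{y}=\boldsymbol{x}^k$ and $\boldsymbol{\lambda}=\boldsymbol{\lambda}^k$, the map $\boldsymbol{x}\mapsto L(\boldsymbol{x},\boldsymbol{x}^k,\boldsymbol{\lambda}^k)$ is \emph{separable} in $\boldsymbol{x}$ with a \emph{diagonal} Hessian whose $i$th entry $f_i''(x_i)+\sum_{j\in N_i(\boldsymbol{x}^k)}\partial^2 g_{ij}/\partial x_i^2$ is at most $m(|N_i(\boldsymbol{x}^k)|+1)$ under the regularity bounds of Theorem~\ref{thm:majorizing} (e.g.\ $m$-smoothness of the $g_{ij}$); hence $Q_k:=m\,\mbox{diag}(|N_1(\boldsymbol{x}^k)|+1,\dots,|N_n(\boldsymbol{x}^k)|+1)$ majorizes it, and no Gershgorin step is needed since the Hessian is already diagonal --- which is why the denominator here is $m$ rather than $2m$. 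Minimizing the quadratic majorizer $u_k(\boldsymbol{x})=L(\boldsymbol{x}^k,\boldsymbol{x}^k,\boldsymbol{\lambda}^k)+(\boldsymbol{x}-\boldsymbol{x}^k)^T\nabla_{\boldsymbol{x}}L(\boldsymbol{x}^k,\boldsymbol{x}^k,\boldsymbol{\lambda}^k)+\frac{1}{2}(\boldsymbol{x}-\boldsymbol{x}^k)^TQ_k(\boldsymbol{x}-\boldsymbol{x}^k)$ yields $\boldsymbol{x}^{k+1}=\boldsymbol{x}^k-Q_k^{-1}\nabla_{\boldsymbol{x}}L(\boldsymbol{x}^k,\boldsymbol{x}^k,\boldsymbol{\lambda}^k)$, and since the $\boldsymbol{x}$-gradient of $D_f(\boldsymbol{x},\boldsymbol{y})$ vanishes at $\boldsymbol{x}=\boldsymbol{y}$ this is exactly~\eqref{eq:asymmetric}, with $L(\boldsymbol{x}^{k+1},\boldsymbol{x}^k,\boldsymbol{\lambda}^k)\le u_k(\boldsymbol{x}^{k+1})\le u_k(\boldsymbol{x}^k)=V(\boldsymbol{x}^k)$, strict unless $\nabla_{\boldsymbol{x}}L(\boldsymbol{x}^k,\boldsymbol{x}^k,\boldsymbol{\lambda}^k)=\boldsymbol{0}$ (a fixed point). (iii) Finally I would ``snap'' the copy back, $\boldsymbol{y}^{k+1}:=\boldsymbol{x}^{k+1}$, so the copies agree for the next round.

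Step (iii) is the one real obstacle, and it is exactly where the otherwise unused function $f$ pays off: the copy cannot be updated by an exact block minimization (that subproblem has no clean closed form and would destroy the form~\eqref{eq:asymmetric}), so instead one verifies directly that resetting it cannot increase $L$. Writing
\begin{align}\nonumber
L(\boldsymbol{x}^{k+1},\boldsymbol{x}^{k+1},\boldsymbol{\lambda}^k)-L(\boldsymbol{x}^{k+1},\boldsymbol{x}^k,\boldsymbol{\lambda}^k)=-D_f(\boldsymbol{x}^{k+1},\boldsymbol{x}^k)+\sum_{i\neq j}\lambda^k_{ij}\big(g_{ij}(x^{k+1}_i,x^{k+1}_j)-g_{ij}(x^{k+1}_i,x^k_j)\big),
\end{align}
the $L$-Lipschitz continuity of each $g_{ij}$ bounds the sum by $\sum_{i\neq j}\lambda^k_{ij}L|x^{k+1}_j-x^k_j|\le(n-1)L\|\boldsymbol{x}^{k+1}-\boldsymbol{x}^k\|_1$, while the hypothesis applied to the pair $(\boldsymbol{x}^{k+1},\boldsymbol{x}^k)$ gives $D_f(\boldsymbol{x}^{k+1},\boldsymbol{x}^k)\ge nL\|\boldsymbol{x}^{k+1}-\boldsymbol{x}^k\|_1$, so the difference is at most $-L\|\boldsymbol{x}^{k+1}-\boldsymbol{x}^k\|_1\le0$. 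Chaining (i)--(iii),
\begin{align}\nonumber
V(\boldsymbol{x}^{k+1})\le L(\boldsymbol{x}^{k+1},\boldsymbol{x}^{k+1},\boldsymbol{\lambda}^k)\le L(\boldsymbol{x}^{k+1},\boldsymbol{x}^k,\boldsymbol{\lambda}^k)\le V(\boldsymbol{x}^k),
\end{align}
so $V$ strictly decreases along any trajectory not already at a fixed point, i.e.\ the dynamics~\eqref{eq:asymmetric} are Lyapunov stable with Lyapunov function $V$. Morally, the scaled-$\ell_1$ Bregman bound is calibrated so that $D_f(\boldsymbol{x}^{k+1},\boldsymbol{x}^k)$ dominates the total perturbation $(n-1)L\|\boldsymbol{x}^{k+1}-\boldsymbol{x}^k\|_1$ incurred by replacing the copies $y_j$ with $x^{k+1}_j$ across the $n(n-1)$ coupling terms; a small bookkeeping point to get right is the \emph{orientation} of the Bregman divergence in $L$ --- placing the current state first, $D_f(\boldsymbol{x},\boldsymbol{y})$, keeps the $\boldsymbol{x}$-Hessian equal to $\nabla^2 f(\boldsymbol{x})$ in step (ii) while still matching the hypothesis in step (iii) with $(\boldsymbol{y},\boldsymbol{x})\mapsto(\boldsymbol{x}^{k+1},\boldsymbol{x}^k)$.
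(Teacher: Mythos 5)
Your proposal is correct and follows essentially the same route as the paper's proof: the same three-block Lagrangian with an auxiliary copy of the state penalized by the Bregman divergence $D_f$, the same diagonal quadratic majorization giving the $m(|N_i(\boldsymbol{x}^k)|+1)$ denominator, and the same Lipschitz-versus-Bregman estimate to justify snapping the copy back (you merely swap the names of the updated block and the copy, and your $(n-1)L$ bound sharpens the paper's $nL$ harmlessly). The only shared imprecision is that both you and the paper bound the pure second partials $\partial^2 g_{ij}/\partial x_i^2$ by $m$ while the hypothesis literally states only the mixed-partial bound, which is resolved by reading the hypothesis as $m$-smoothness as the paper's footnote intends.
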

\begin{proof}
Let $c_i(\boldsymbol{\boldsymbol{x}},\boldsymbol{\lambda}_i):=\sum_{j\neq i}\lambda_{ij}g_{ij}(x_i,x_j)$ denote the cost of agent $i$ with respect to its neighbors, and let $\boldsymbol{y}$ be an independent copy of the state variable $\boldsymbol{x}$. Consider the following function with three independent block variables $\boldsymbol{\lambda}\in [0,1]^{n(n-1)}, \boldsymbol{x},\boldsymbol{y}\in \mathbb{R}^n$:
\begin{align}\nonumber
L(\boldsymbol{y},\boldsymbol{x},\boldsymbol{\lambda}):=\sum_{i=1}^{n}c_i(y_i,\boldsymbol{\boldsymbol{x}}_{-i},\boldsymbol{\lambda}_i)+D_{f}(\boldsymbol{y},\boldsymbol{x})=\sum_{i=1}^{n}\sum_{j\neq i}\lambda_{ij}g_{ij}(y_i,x_j)+D_{f}(\boldsymbol{y},\boldsymbol{x}),
\end{align}
where $D_{f}(\boldsymbol{y},\boldsymbol{x}):=f(\boldsymbol{y})-f(\boldsymbol{x})-(\boldsymbol{y}-\boldsymbol{x})^T\nabla f(\boldsymbol{x})$. Note that here we no longer require the symmetry assumption, so that in general, $g_{ij}(x_i,x_j)\neq g_{ji}(x_j,x_i)$. The reason for introducing the Bregman distance $D_{f}(\boldsymbol{y},\boldsymbol{x})$ into the objective function is that it would be ideal if the two copies of the state variables coincide. But instead of adding the hard constraint $\boldsymbol{y}=\boldsymbol{x}$ into our optimization problem, we relax this constraint by adding a soft penalty term to the objective function. Now, let us apply the BCD method to the following minimization:
\begin{align}\label{eq:three-block-objective}
\min_{\boldsymbol{\lambda}\in [0,1]^{n(n\!-\!1)}}\min_{\boldsymbol{x}\in\mathbb{R}^n}\min_{\boldsymbol{y}\in\mathbb{R}^n}\{\sum_{i=1}^{n}\sum_{j\neq i}\lambda_{ij}g_{ij}(y_i,x_j)+D_{f}(\boldsymbol{x},\boldsymbol{y})\}.
\end{align}  
First, assume that both state variables are fixed to $\boldsymbol{y}=\boldsymbol{x}=\boldsymbol{x}^k$. Then, if we minimize the objective function \eqref{eq:three-block-objective} with respect to $\boldsymbol{\lambda}\in [0,1]^{n(n\!-\!1)}$, we precisely capture the asymmetric network structure $\boldsymbol{\lambda}^k$ associated with the state $\boldsymbol{x}^k$. (I.e., $\lambda^k_{ij}=1$ if and only if $g_{ij}(x_i^k,x_j^k)\leq 0$.)     
Next, let us fix $\boldsymbol{\lambda}=\boldsymbol{\lambda}^k$ and $\boldsymbol{x}=\boldsymbol{x}^k$, and consider minimizing \eqref{eq:three-block-objective} with respect to the $\boldsymbol{y}$ variable. However, to obtain a closed form for the optimal solution, instead of solving the minimization exactly, we minimize its quadratic upper approximation at the current state $\boldsymbol{x}^k$, given by
\begin{align}\nonumber
z_k(\boldsymbol{y}):=L(\boldsymbol{x}^k,\boldsymbol{x}^k,\boldsymbol{\lambda}^k)+(\boldsymbol{y}-\boldsymbol{x}^k)^T\nabla_y L(\boldsymbol{x}^k,\boldsymbol{x}^k,\boldsymbol{\lambda}^k)+\frac{1}{2}(\boldsymbol{y}-\boldsymbol{x}^k)^TP_k(\boldsymbol{y}-\boldsymbol{x}^k),
\end{align}
where $P_k$ is a diagonal matrix whose $i$th diagonal entry is $m(|N_i(\boldsymbol{x}^k)|+1)$. To see why $L(\boldsymbol{y},\boldsymbol{x}^k,\boldsymbol{\lambda}^k)\leq z_k(\boldsymbol{y}), \forall \boldsymbol{y}$, we note that 
\begin{align}\nonumber
[\nabla_y L(\boldsymbol{y},\boldsymbol{x}^k,\boldsymbol{\lambda}^k)]_i=\sum_{j\in N_i(\boldsymbol{x}^k)}\frac{\partial}{\partial y_i}g_{ij}(y_i,x^k_j)+(\frac{\partial}{\partial y_i}f_i(y_i)-\frac{\partial}{\partial x_i}f_i(x_i^k)),
\end{align}
which implies that $\nabla^2 L(\boldsymbol{y},\boldsymbol{x}^k,\boldsymbol{\lambda}^k)$ is a \emph{diagonal} matrix whose diagonal entries are   
\begin{align}\nonumber
[\nabla^2 L(\boldsymbol{y},\boldsymbol{x}^k,\boldsymbol{\lambda}^k)]_{ii}=\frac{\partial^2}{\partial y^2_i}f_i(y_i)+\sum_{j\in N_i(\boldsymbol{x}^k)}\frac{\partial g_{ij}(y_i,x^k_{j})}{\partial^2 y_i}.
\end{align}
As $|\frac{\partial^2 f_i(y_i)}{\partial y^2_i}|\leq m$ and $|\frac{\partial g_{ij}(y_i,x^k_{j})}{\partial y^2_i}|\leq m$, for all $i\in [n]$, the Hessian matrix is dominated by $P_k$, and the result follows from the Tailor expansion. Thus, the optimal solution to $\min_{\boldsymbol{y}\in \mathbb{R}^n} z_k(\boldsymbol{y})$ is given by $\boldsymbol{x}^k-P_k^{-1}\nabla_y L(\boldsymbol{x}^k,\boldsymbol{x}^k,\boldsymbol{\lambda}^k)$, which is precisely the next state of the dynamics \eqref{eq:asymmetric}. Therefore, if the block variable $\boldsymbol{y}$ is updated to $\boldsymbol{x}^{k+1}$, while the other variables are fixed to $\boldsymbol{\lambda}=\boldsymbol{\lambda}^k, \boldsymbol{x}=\boldsymbol{x}^k$, that will decrease the objective function because
\begin{align}\nonumber
L(\boldsymbol{x}^{k+1},\boldsymbol{x}^k,\boldsymbol{\lambda}^k)\leq z_k(\boldsymbol{x}^{k+1})=\min_{\boldsymbol{y}\in \mathbb{R}^n}z_k(\boldsymbol{y})\leq z_k(\boldsymbol{x}^{k})=L(\boldsymbol{x}^{k},\boldsymbol{x}^k,\boldsymbol{\lambda}^k).
\end{align} 
Finally, let us fix $\boldsymbol{y}=\boldsymbol{x}^{k+1}, \boldsymbol{\lambda}=\boldsymbol{\lambda}^k$ (which are the solutions to their corresponding sub-optimizations), and consider $\min_{\boldsymbol{x}\in \mathbb{R}^n}L(\boldsymbol{x}^{k+1},\boldsymbol{x},\boldsymbol{\lambda}^k)$. In particular, if we show that $L(\boldsymbol{x}^{k+1},\boldsymbol{x}^{k+1},\boldsymbol{\lambda}^k)\leq L(\boldsymbol{x}^{k+1},\boldsymbol{x}^k,\boldsymbol{\lambda}^k)$, that will complete the BCD loop and imply that $L(\boldsymbol{y},\boldsymbol{x},\boldsymbol{\lambda})$ is decreasing along the trajectory of the asymmetric dynamics \eqref{eq:asymmetric}. Here, the role of the penalty term in the objective function comes into play. More precisely, using the Lipschitz continuity and the fact that $D_f(\boldsymbol{x}^{k+1},\boldsymbol{x}^{k+1})=0$, we can write, 
\begin{align}\nonumber
L(\boldsymbol{x}^{k+1},\boldsymbol{x}^{k+1},\boldsymbol{\lambda}^k)&-L(\boldsymbol{x}^{k+1},\boldsymbol{x}^k,\boldsymbol{\lambda}^k)=-D_f(\boldsymbol{x}^{k+1},\boldsymbol{x}^k)\cr 
&\qquad+\sum_{i=1}^{n}\sum_{j\in N_i(\boldsymbol{x}^k)}\Big(g_{ij}(x^{k+1}_i,x^{k+1}_j)-g_{ij}(x^{k+1}_i,x^{k}_j)\Big)\cr 
&\leq -D_f(\boldsymbol{x}^{k+1},\boldsymbol{x}^k)+\sum_{i,j}|g_{ij}(x^{k+1}_i,x^{k+1}_j)-g_{ij}(x^{k+1}_i,x^{k}_j)|\cr 
&\leq -D_f(\boldsymbol{x}^{k+1},\boldsymbol{x}^k)+nL\sum_{j}|x^{k+1}_j-x^{k}_j|\cr 
&= -D_f(\boldsymbol{x}^{k+1},\boldsymbol{x}^k)+nL\|\boldsymbol{x}^{k+1}-\boldsymbol{x}^k\|_1<0,
\end{align} 
where the last inequality follows from the assumption on the choice of the Bregman map $f(\cdot)$. Therefore, $V(\boldsymbol{x}):=\min_{\boldsymbol{\lambda}\in [0,1]^{n(n-1)}}L(\boldsymbol{x},\boldsymbol{x},\boldsymbol{\lambda})=\sum_{i,j}\min\{g_{ij}(x_i,x_j),0\}$ is a decreasing function along the trajectories of the asymmetric dynamics \eqref{eq:asymmetric}. 
\end{proof}

\subsection{BCD Method with Change of Variables}\label{subsec:change-variable}
In this section, we show how a suitable change of block variables in the BCD method can generate new state-dependent network dynamics, whose Lyapunov stability can be established using the same approach as before. The change of variable can be applied to either the state or the network variable. However, in this section, we focus on a more compelling case, in which the change of variable is applied on the network variable; we only illustrate the idea of a change of variable for the state through the following simple example.   
\begin{example}
Let us recall the homogeneous HK model in which the state of agent $i$ at the next time step is updated to $x_i^{k+1}=\frac{x_i^k+\sum_{j\in N_i(\boldsymbol{x}^k)} x_j^k}{1+|N_i(\boldsymbol{x}^k)|}, i\in [n]$, where the neighborhood set of agent $i$ is given by $N_i(\boldsymbol{x}^k)=\{j\in[n]\setminus\{i\}: |x_i^k-x_j^k|\leq \epsilon\}$. As the dynamics are invariant with respect to translation of the states, without loss of generality, we may assume that $x_i^k> 1, \forall i, k$. Now let us define a new state variable by setting $y_i:=\ln (x_i), i\in [n]$. If we apply the HK dynamics on those new logarithmic states, we obtain $y_i^{k+1}=\frac{y_i^k+\sum_{j\in N_i(\boldsymbol{y}^k)} y_j^k}{1+|N_i(\boldsymbol{y}^k)|}$, with $N_i(\boldsymbol{y}^k)=\{j\in[n]\setminus\{i\}: |y_i^k-y_j^k|\leq \epsilon\}$, which are Lyapunov stable and converge to an equilibrium. If we rewrite the dynamics in terms of $x$ variables, we obtain a new class of state-dependent geometric averaging dynamics $x_i^{k+1}=(\prod_{j\in \bar{N}_i(\boldsymbol{x}^k)}x_j^k)^{\frac{1}{|\bar{N}_i(\boldsymbol{x}^k)|}}$, with a new definition of a neighborhood set $\bar{N}_i(\boldsymbol{x}^k)=\{j\in[n]:  e^{-\epsilon}\leq \frac{x_i^k}{x_j^k}\leq e^{\epsilon}\}$, that are also Lyapunov stable and converge.
\end{example}

Next, we turn our attention to the case of changing the network variables. So far, the dual variable $\lambda_{ij}$ in the Lagrangian function $L(\boldsymbol{x},\boldsymbol{\lambda})$ has been used to capture the existence of an edge from agent $i$ to agent $j$. In particular, we saw that restricting $\lambda_{ij}$ to the unit interval $[0,1]$ and minimizing $L(\boldsymbol{x},\boldsymbol{\lambda})$ for the network variable in the BCD method would automatically force $\lambda_{ij}$ to take binary values in $\{1,0\}$ (and hence capture the switching behavior of an edge formation between $i$ and $j$). In particular, if we fix the state block variable in $L(\boldsymbol{x},\boldsymbol{\lambda})=\sum_{i,j}\lambda_{ij}g_{ij}(x_i,x_j)$ and minimize it with respect to $\boldsymbol{\lambda}\in [0,1]^{n(n-1)}$, that will give us the network structure for that state, i.e., $\lambda^*_{ij}=1-\mbox{sgn}(g_{ij}(x_i,x_j))$, where $\mbox{sgn}(\cdot)$ is the sign function. This relation will give us a complicated characterization of $\lambda^*_{ij}$ that is a combination of the sign function and the measurement function. An alternative way to recover the same network structure would start by transforming the original network variables from $\lambda_{ij}$ to $f_{ij}(\lambda_{ij}):=1-\mbox{sgn}(\lambda_{ij})$, which would cause the transformed Lagrangian function to become $\hat{L}(\boldsymbol{x},\boldsymbol{\lambda})=\sum_{i,j}f_{ij}(\lambda_{ij})g_{ij}(x_i,x_j)$. If we then apply the BCD method to $\hat{L}(\boldsymbol{x},\boldsymbol{\lambda})$ by fixing the state variable and optimizing $\hat{L}(\boldsymbol{x},\boldsymbol{\lambda})$ with respect to the \emph{unconstrained} network variable $\lambda_{ij}\in \mathbb{R}$, we obtain a simpler optimal network variable $\hat{\lambda}_{ij}=g_{ij}(x_i,x_j)$. Note that 
\begin{align}\nonumber
\hat{L}(\boldsymbol{x},\hat{\boldsymbol{\lambda}})=\min_{\boldsymbol{\lambda}\in \mathbb{R}^{n(n-1)}}\hat{L}(\boldsymbol{x},\boldsymbol{\lambda})=\min_{\boldsymbol{\lambda}\in [0,1]^{n(n-1)}}L(\boldsymbol{x},\boldsymbol{\lambda})=L(\boldsymbol{x},\boldsymbol{\lambda}^*).  
\end{align}
Such a transformation of the network variables has three advantages: i) it removes the box constraints on the network variables which cause switching and absorbs them into the structure of the transformation function; ii) the optimal network variable in the BCD method has a simpler form after the change of variable; and iii) by choosing different transfer functions, one can obtain different types of state-dependent network dynamics. The following theorem provides a sample result using the idea of a change of network variables.  

\begin{theorem}\label{thm:change-variable}
Let $g_{ij}(x_i,x_j)\in \mathcal{C}^2$ be symmetric functions such that $|\frac{\partial^2 g_{ij}}{\partial x_i\partial x_j}|\leq m, \forall i,j$. Moreover, let $f_{ij}(\lambda)\in \mathcal{C}$ be symmetric and nonnegative decreasing functions. Then the state-dependent network dynamics
\begin{align}\label{eq:change-variable-dynamics}
x_i^{k+1}=x_i^k-\frac{\sum_{j}f_{ij}\big(g_{ij}(x_i^k,x_j^k)\big)\frac{\partial}{\partial x_i}g_{ij}(x^k_i,x^k_j)}{2m\sum_{j}f_{ij}\big(g_{ij}(x_i^k,x_j^k)\big)}, \ \  i\in [n]
\end{align}
admit a Lyapunov function $V(\boldsymbol{x})\!=\!\sum_{i\neq j}\int_{0}^{g_{ij}(x_i,x_j)}f_{ij}(\lambda)d\lambda$. Moreover, for each $k$ there exists a positive-definite matrix $Q_k$ such that $V(\boldsymbol{x}^{k})-V(\boldsymbol{x}^{k+1})\ge \|\boldsymbol{x}^k-\boldsymbol{x}^{k+1}\|_{Q_k}^2$, and $\lim_{k\to \infty}\|\boldsymbol{x}^k-\boldsymbol{x}^{k+1}\|_{Q_k}=0$.
\end{theorem}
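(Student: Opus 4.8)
The plan is to adapt the majorization/BCD argument of Theorem~\ref{thm:majorizing} to the \emph{transformed} Lagrangian produced by the change of network variables discussed above. Set $G_{ij}(t):=\int_0^t f_{ij}(s)\,ds$, so that $V(\boldsymbol{x})=\sum_{i\neq j}G_{ij}\big(g_{ij}(x_i,x_j)\big)$, and introduce the two-block function
\[
\hat L(\boldsymbol{x},\boldsymbol{\lambda}):=\sum_{i\neq j}\Big[f_{ij}(\lambda_{ij})\,g_{ij}(x_i,x_j)+G_{ij}(\lambda_{ij})-\lambda_{ij}f_{ij}(\lambda_{ij})\Big],
\]
with state block $\boldsymbol{x}\in\mathbb{R}^n$ and \emph{unconstrained} network block $\boldsymbol{\lambda}\in\mathbb{R}^{n(n-1)}$. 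For the ``network'' BCD step, fix $\boldsymbol{x}=\boldsymbol{x}^k$: each summand depends on $\lambda_{ij}$ alone and its derivative equals $f_{ij}'(\lambda_{ij})\big(g_{ij}(x_i^k,x_j^k)-\lambda_{ij}\big)$, which is $\leq 0$ for $\lambda_{ij}\leq g_{ij}(x_i^k,x_j^k)$ and $\geq 0$ beyond it since $f_{ij}'\leq 0$ (decreasing). Hence $\lambda_{ij}^k:=g_{ij}(x_i^k,x_j^k)$ is the global minimizer of each summand, and $\min_{\boldsymbol{\lambda}}\hat L(\boldsymbol{x},\boldsymbol{\lambda})=V(\boldsymbol{x})$, attained at $\boldsymbol{\lambda}^k$. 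Thus $V$ is the value function of the inner optimization and plays the role of the constrained Lagrangian, exactly as in the earlier theorems.

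For the ``state'' step I freeze $\boldsymbol{\lambda}=\boldsymbol{\lambda}^k$ and set $w_{ij}^k:=f_{ij}\big(g_{ij}(x_i^k,x_j^k)\big)\geq 0$; note $w_{ij}^k=w_{ji}^k$ since both $f_{ij}$ and $g_{ij}$ are symmetric. The frozen function is $\hat L_k(\boldsymbol{x}):=\hat L(\boldsymbol{x},\boldsymbol{\lambda}^k)=c_k+\sum_{i\neq j}w_{ij}^k\,g_{ij}(x_i,x_j)$; equivalently, because $G_{ij}$ is concave ($G_{ij}''=f_{ij}'\leq 0$), $\hat L_k$ is exactly the tangent majorizer of $V$ at $\boldsymbol{x}^k$, i.e. $V(\boldsymbol{x})\leq\hat L_k(\boldsymbol{x})$ with equality at $\boldsymbol{x}^k$. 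Pairing the $(i,j)$ and $(j,i)$ terms via symmetry gives $[\nabla\hat L_k(\boldsymbol{x}^k)]_i=2\sum_{j}w_{ij}^k\frac{\partial}{\partial x_i}g_{ij}(x_i^k,x_j^k)$, while $[\nabla^2\hat L_k(\boldsymbol{x})]_{ii}=2\sum_{j}w_{ij}^k\frac{\partial^2}{\partial x_i^2}g_{ij}$ and $[\nabla^2\hat L_k(\boldsymbol{x})]_{ij}=2w_{ij}^k\frac{\partial^2 g_{ij}}{\partial x_i\partial x_j}$ for $i\neq j$. Using $|\partial^2 g_{ij}/\partial x_i\partial x_j|\leq m$ (together with $|\partial^2 g_{ij}/\partial x_i^2|\leq m$, which holds for $m$-smooth $g_{ij}$ as in the footnote to Theorem~\ref{thm:majorizing}), the Gershgorin Circle Theorem bounds every eigenvalue of $\nabla^2\hat L_k(\boldsymbol{x})$ by $4m\sum_j w_{ij}^k$, so $\hat L_k$ is majorized by the quadratic $u_k(\boldsymbol{x}):=\hat L_k(\boldsymbol{x}^k)+(\boldsymbol{x}-\boldsymbol{x}^k)^T\nabla\hat L_k(\boldsymbol{x}^k)+\frac12\|\boldsymbol{x}-\boldsymbol{x}^k\|^2_{\tilde Q_k}$ with $\tilde Q_k:=4m\,\mbox{diag}\big(\sum_j w_{1j}^k,\ldots,\sum_j w_{nj}^k\big)$. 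Minimizing $u_k$ yields $\boldsymbol{x}^{k+1}=\boldsymbol{x}^k-\tilde Q_k^{-1}\nabla\hat L_k(\boldsymbol{x}^k)$, and substituting the gradient and $\tilde Q_k$ reproduces precisely the dynamics \eqref{eq:change-variable-dynamics}.

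The drift estimate then follows by chaining inequalities as in Theorem~\ref{thm:majorizing}: from $V(\boldsymbol{x}^k)=\hat L(\boldsymbol{x}^k,\boldsymbol{\lambda}^k)=\hat L_k(\boldsymbol{x}^k)=u_k(\boldsymbol{x}^k)$, $V(\boldsymbol{x}^{k+1})\leq\hat L(\boldsymbol{x}^{k+1},\boldsymbol{\lambda}^k)=\hat L_k(\boldsymbol{x}^{k+1})\leq u_k(\boldsymbol{x}^{k+1})$, we obtain $V(\boldsymbol{x}^k)-V(\boldsymbol{x}^{k+1})\geq u_k(\boldsymbol{x}^k)-u_k(\boldsymbol{x}^{k+1})=\frac12\|\boldsymbol{x}^k-\boldsymbol{x}^{k+1}\|^2_{\tilde Q_k}=\|\boldsymbol{x}^k-\boldsymbol{x}^{k+1}\|^2_{Q_k}$ with $Q_k:=\frac12\tilde Q_k=2m\,\mbox{diag}\big(\sum_j f_{1j}(g_{1j}(x_1^k,x_j^k)),\ldots,\sum_j f_{nj}(g_{nj}(x_n^k,x_j^k))\big)$; this is positive definite, since each diagonal entry $2m\sum_j f_{ij}(g_{ij}(x_i^k,x_j^k))$ is nonnegative and strictly positive unless agent $i$ is already at rest, in which case that entry may be assigned any positive value without affecting the inequality. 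Finally, $V$ is bounded below: each $g_{ij}\in\mathcal{C}^2$ attains a finite global minimum $\underline g_{ij}$ and $G_{ij}$ is nondecreasing (as $f_{ij}\geq 0$), so $V(\boldsymbol{x})\geq\sum_{i\neq j}G_{ij}(\underline g_{ij})=:M>-\infty$. Telescoping the drift inequality gives $\sum_{k\geq 0}\|\boldsymbol{x}^k-\boldsymbol{x}^{k+1}\|^2_{Q_k}\leq V(\boldsymbol{x}^0)-M<\infty$, hence $\lim_{k\to\infty}\|\boldsymbol{x}^k-\boldsymbol{x}^{k+1}\|_{Q_k}=0$.

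I expect the only genuine subtleties to be: (i) choosing the transformed Lagrangian correctly — the regularizer $G_{ij}(\lambda_{ij})-\lambda_{ij}f_{ij}(\lambda_{ij})$, whose $\lambda_{ij}$-derivative is $-\lambda_{ij}f_{ij}'(\lambda_{ij})$, is exactly what pins the stationary point of the network step to $\lambda_{ij}=g_{ij}(x_i,x_j)$, and monotonicity of $f_{ij}$ is what upgrades that stationary point to a global minimizer (so that $\min_{\boldsymbol{\lambda}}\hat L=V$); and (ii) the factor-of-two bookkeeping between the ordered-pair sum defining $V$ and the neighbor sums appearing in \eqref{eq:change-variable-dynamics}, which fixes the precise constant in $Q_k$. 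Once one recognizes that the majorization step is simply concavity of $G_{ij}=\int f_{ij}$ together with the Gershgorin bound on $\nabla^2\hat L_k$, the remainder is routine and runs in parallel with the proof of Theorem~\ref{thm:majorizing}.
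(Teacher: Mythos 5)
Your proposal is correct and follows essentially the same route as the paper's proof: the same transformed Lagrangian (your regularizer $G_{ij}(\lambda_{ij})-\lambda_{ij}f_{ij}(\lambda_{ij})$ equals the paper's $-\int_0^{\lambda_{ij}}sf_{ij}'(s)\,ds$ by integration by parts), the same identification of $\lambda_{ij}^*=g_{ij}(x_i^k,x_j^k)$ via monotonicity of $f_{ij}$, the same Gershgorin-based quadratic majorization of the frozen Lagrangian, and the same telescoping of the drift. Your handling of the lower bound on $V$ (via the finite global minimum of each $g_{ij}\in\mathcal{C}^2$ and monotonicity of $G_{ij}$, rather than claiming $V\ge 0$ outright) and your explicit normalization $Q_k=\tfrac12\tilde Q_k$ are minor tightenings of the same argument, not a different approach.
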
   
\begin{proof}
Let us consider the BCD method applied to the transformed Lagrangian
\begin{align}\nonumber
\hat{L}(\boldsymbol{x},\boldsymbol{\lambda}):=\sum_{i\neq j}\Big(f_{ij}(\lambda_{ij})g_{ij}(x_i,x_j)-h_{ij}(\lambda_{ij})\Big),
\end{align}
with $\boldsymbol{x}\in \mathbb{R}^n$ and $\boldsymbol{\lambda}\in \mathbb{R}^{n(n-1)}$, for some real-valued functions $h_{ij}(\lambda_{ij})$ to be determined later. If for any fixed state $\boldsymbol{x}$, the function $\hat{L}(\boldsymbol{x},\boldsymbol{\lambda})$ has a unique minimum with respect to $\boldsymbol{\lambda}\in \mathbb{R}^{n(n-1)}$, we can apply the BCD method and be assured that the function will decrease because of the network updates. Now let us first fix the state variable to $\boldsymbol{x}^k$. Assuming that the functions $f_{ij}, h_{ij}$ are differentiable, to find $\argmin_{\boldsymbol{\lambda}\in \mathbb{R}^{n(n-1)}}\hat{L}(\boldsymbol{x}^k,\boldsymbol{\lambda})$, we set
\begin{align}\label{eq:phi-lamda-stationary}
\frac{\partial}{\partial \lambda_{ij}}\hat{L}(\boldsymbol{x}^k,\boldsymbol{\lambda})=f'_{ij}(\lambda_{ij})g_{ij}(x^k_i,x^k_j)-h'_{ij}(\lambda_{ij})=0,
\end{align}
which implies $\frac{h'_{ij}(\lambda_{ij})}{f'_{ij}(\lambda_{ij})}=g_{ij}(x^k_i,x^k_j)$. Therefore, if we define $h'_{ij}(\lambda):=\lambda f'_{ij}(\lambda)$, or, equivalently, $h_{ij}(\lambda):=\int_{0}^{\lambda}sf'_{ij}(s)ds$,  equation \eqref{eq:phi-lamda-stationary} has a unique solution $\lambda^*_{ij}=g_{ij}(x^k_i,x^k_j)$. To show that this solution is the minimizer of $\hat{L}(\boldsymbol{x}^k,\boldsymbol{\lambda})$, we note that
\begin{align}\nonumber
\frac{\partial}{\partial \lambda_{ij}}\hat{L}(\boldsymbol{x}^k,\boldsymbol{\lambda})=f'_{ij}(\lambda_{ij})\Big(g_{ij}(x^k_i,x^k_j)-\lambda_{ij}\Big).
\end{align}
Since $f_{ij}(\cdot)$ is a decreasing function, $f'_{ij}(\lambda_{ij})< 0$. Thus, for $\lambda_{ij}\leq g_{ij}(x^k_i,x^k_j)$, the function $\hat{L}(\boldsymbol{x}^k,\boldsymbol{\lambda})$ is decreasing with respect to $\lambda_{ij}$, and for $\lambda_{ij}\ge  g_{ij}(x^k_i,x^k_j)$, it is increasing. (Note that $\hat{L}(\boldsymbol{x}^k,\boldsymbol{\lambda})$ is splittable over its $\boldsymbol{\lambda}$-components, so we can analyze each of its summands separately.) Thus, given a fixed state $\boldsymbol{x}^k$, the unique global minimum of $\hat{L}(\boldsymbol{x}^k,\boldsymbol{\lambda})$ is obtained at $\boldsymbol{\lambda}=\boldsymbol{\lambda}^*$, where $\lambda^*_{ij}=g_{ij}(x^k_i,x^k_j)$. 

The rest of the proof follows along the same analysis used in Theorem \ref{thm:majorizing}, and we only sketch it here. Let us fix the network variable to $\boldsymbol{\lambda}^*$, and consider $\min_{\boldsymbol{x}\in \mathbb{R}^n} \hat{L}(\boldsymbol{x},\boldsymbol{\lambda}^*)$. To find a minimizer we use an inexact method by using a quadratic upper approximation of $\hat{L}(\boldsymbol{x},\boldsymbol{\lambda}^*)$ at $\boldsymbol{x}^k$. The $i$th component of the gradient of $\hat{L}(\boldsymbol{x},\boldsymbol{\lambda}^*)$ at $\boldsymbol{x}^k$ is given by 
\begin{align}\nonumber
[\nabla_{\boldsymbol{x}} \hat{L}(\boldsymbol{x}^k,\boldsymbol{\lambda}^*)]_i&=\sum_{j}\Big(f_{ij}(\lambda^*_{ij})\frac{\partial}{\partial x_i}g_{ij}(x^k_i,x^k_j)+f_{ji}(\lambda^*_{ji})\frac{\partial}{\partial x_i}g_{ji}(x^k_j,x^k_i)\Big)\cr 
&=2\sum_{j}f_{ij}(\lambda^*_{ij})\frac{\partial}{\partial x_i}g_{ij}(x^k_i,x^k_j)\cr 
&=2\sum_{j}f_{ij}(g_{ij}(x_i^k,x_j^k))\frac{\partial}{\partial x_i}g_{ij}(x^k_i,x^k_j),
\end{align}
where the second equality is by the symmetry of $f_{ij}, g_{ij}$. The Hessian matrix equals
\begin{align}\nonumber
[\nabla^2 \hat{L}(\boldsymbol{x},\boldsymbol{\lambda}^*)]_{ij}=\begin{cases}
2\sum_{j} f_{ij}(g_{ij}(x_i^k,x_j^k))\frac{\partial^2g_{ij}(x_i,x_j)}{\partial x^2_i} & \mbox{if} \ j=i\\
2f_{ij}(g_{ij}(x_i^k,x_j^k))\frac{\partial^2g_{ij}(x_i,x_j)}{\partial x_i \partial x_j} & \mbox{if} \ j\neq i,
\end{cases}
\end{align} 
which is dominated by a diagonal matrix $Q_k$ whose $i$th diagonal entry is given by $[Q_k]_{ii}=2+4m\sum_{j} f_{ij}(g_{ij}(x_i^k,x_j^k))$. Thus, the optimal solution to the quadratic upper approximation of $\hat{L}(\boldsymbol{x},\boldsymbol{\lambda}^*)$ at the point $\boldsymbol{x}=\boldsymbol{x}^k$ is given by $\boldsymbol{x}^k-Q_k^{-1}\nabla_{\boldsymbol{x}} \hat{L}(\boldsymbol{x}^k,\boldsymbol{\lambda}^*)$, which in view of \eqref{eq:change-variable-dynamics} equals to $\boldsymbol{x}^{k+1}$. As a result, $\hat{L}(\boldsymbol{x}^{k+1},\boldsymbol{\lambda}^*)< \hat{L}(\boldsymbol{x}^k,\boldsymbol{\lambda}^*)$, and
\begin{align}\nonumber
V(\boldsymbol{x})=\min_{\boldsymbol{\lambda}\in \mathbb{R}^{n(n-1)}}\hat{L}(\boldsymbol{x},\boldsymbol{\lambda})&=\sum_{i\neq j}\Big(f_{ij}(g_{ij}(x_i,x_j))g_{ij}(x_i,x_j)-\int_{0}^{g_{ij}(x_i,x_j)}\!\!\!\!\!\!\!\lambda f'_{ij}(\lambda)d\lambda\Big)\cr 
&=\sum_{i\neq j}\int_{0}^{g_{ij}(x_i,x_j)}\!\!\!\!\!\!f_{ij}(\lambda)d\lambda
\end{align}  
serves as a Lyapunov function for the dynamics \eqref{eq:change-variable-dynamics}, and the last equality is obtained using integration by parts. Now, as in Theorem \ref{thm:majorizing}, we can bound the drift of this Lyapunov function at time $k$ as
\begin{align}\nonumber
V(\boldsymbol{x}^{k})-V(\boldsymbol{x}^{k+1})&\ge \hat{L}(\boldsymbol{x}^{k},\boldsymbol{\lambda}^*)-\hat{L}(\boldsymbol{x}^{k+1},\boldsymbol{\lambda}^*)\cr 
&\ge \frac{1}{2}(\nabla_{\boldsymbol{x}} \hat{L}(\boldsymbol{x}^k,\boldsymbol{\lambda}^*))^TQ_k^{-1}\nabla_{\boldsymbol{x}} \hat{L}(\boldsymbol{x}^k,\boldsymbol{\lambda}^*)\cr 
&=\frac{1}{2}\|Q_k^{-1}\nabla_{\boldsymbol{x}} \hat{L}(\boldsymbol{x}^k,\boldsymbol{\lambda}^*)\|^2_{Q_k}\cr 
&=\frac{1}{2}\|\boldsymbol{x}^k-\boldsymbol{x}^{k+1}\|^2_{Q_k}.
\end{align}
Finally, $V(\cdot)$ is a nonnegative function due to the nonnegativity of $f_{ij}(\cdot)$. Therefore, if we sum the above relations for $k=1,2,\ldots$, we get $\sum_{k=1}^{\infty}\|\boldsymbol{x}^k-\boldsymbol{x}^{k+1}\|^2_{Q_k}\leq V(\boldsymbol{x}^0)<\infty$. This implies that $\lim_{k\to \infty}\|\boldsymbol{x}^k-\boldsymbol{x}^{k+1}\|_{Q_k}=0$.  
\end{proof}

\begin{remark}
In fact, the differentiability of the transfer functions $f_{ij}$ in the above theorem can be further relaxed to any nonnegative decreasing symmetric functions $f_{ij}$. In particular, Theorem \ref{thm:change-variable} is a heterogeneous extension of \cite[Corollary 1]{roozbehani2008lyapunov} (see also \cite{jabin2014clustering}) when $f_{ij}(\lambda)=f(\sqrt{\lambda}), \forall i,j$ and $g_{ij}(x_i,x_j)=(x_i-x_j)^2$. 
\end{remark}

\section{Stability Using Discrete-Time Saddle-Point Dynamics}\label{sec:Saddle}

In the previous section, we considered the stability of state-dependent network dynamics when the network structure and agents' states are aligned with each other. More precisely, in the application of the BCD method on the Lagrangian function $L(\boldsymbol{x},\boldsymbol{\lambda})$, we considered a double minimization problem $\min_{\boldsymbol{x}}\min_{\boldsymbol{\lambda}}L(\boldsymbol{x},\boldsymbol{\lambda})$, which essentially means that the network coordinator (viewed as a \emph{network player}) breaks/adds the links in favor of the agents' states (viewed as a \emph{state player}). In essence, that means that there is no conflict between the network and state players, as they are both minimizing the same Lagrangian function. But what if the network and state players have conflicting objectives? In that case, we have a 2-player zero-sum game between the network and the state with the payoff function $L(\boldsymbol{x},\boldsymbol{\lambda})$, so that the network player aims to maximize the payoff function while the state player aims to minimize it, i.e., $\min_{\boldsymbol{x}}\max_{\boldsymbol{\lambda}}L(\boldsymbol{x},\boldsymbol{\lambda})$. 
  
To model such conflicting behavior, we assume that each agent $i\in[n]$ holds $n-1$ \emph{convex} measurement functions $g_{ij}(x_i,x_j), j\in[n]\setminus\{i\}$. In this section, we restrict our attention to symmetric measurement functions; however, for asymmetric measurement functions, results similar to those in Theorem \ref{thm:asymmetric} can be obtained. For a given state $\boldsymbol{x}$, two agents $i$ and $j$ become each other's neighbors if $g_{ij}(x_i,x_j)\ge 0$. (Note that the logic constraint is now the reverse of what it was in the previous section.) Intuitively, an edge is formed between two agents $i$ and $j$ if and only if their states are far from each other. Now let us consider the following convex program:
\begin{align}\label{eq:convex-saddle-point}
&\min \ f(\boldsymbol{x}):=\sum_{i=1}^{n}f_i(x_i)\cr 
&\mbox{s.t.} \ \ \frac{1}{2}g_{ij}(x_i,x_j)\leq 0, \ \forall i\neq j, \cr 
&\qquad \ \boldsymbol{x}\in \mathbb{R}^n,
\end{align} 
where $f_i(x_i), i\in [n]$ are agents' private convex functions.\footnote{Here, each constraint is scaled by $\frac{1}{2}$ without causing changes to the feasible set.} To solve this problem, one can form the Lagrangian function $L(\boldsymbol{x},\boldsymbol{\lambda})=f(\boldsymbol{x})+\frac{1}{2}\sum_{i,j}\lambda_{ij}g_{ij}(x_i,x_j)$, and solve the following saddle-point problem: $\min_{\boldsymbol{x}\in \mathbb{R}^n}\max_{\boldsymbol{\lambda}\ge \boldsymbol{0}}L(\boldsymbol{x},\boldsymbol{\lambda})$.\footnote{To do so, one must find a solution $(\bar{\boldsymbol{x}},\bar{\boldsymbol{\lambda}})$ such that 
$L(\bar{\boldsymbol{x}},\boldsymbol{\lambda})\leq L(\bar{\boldsymbol{x}},\bar{\boldsymbol{\lambda}})\leq L(\boldsymbol{x},\bar{\boldsymbol{\lambda}}), \ \forall \boldsymbol{x}, \forall \boldsymbol{\lambda}\ge \boldsymbol{0}.$}

By using KKT optimality conditions, we know that if the constraint $g_{ij}(x_i,x_j)\leq 0$ is satisfied but not \emph{tight} (i.e., $g_{ij}(x_i,x_j)< 0$), then the corresponding optimal dual variable must be zero, i.e., $\lambda_{ij}=0$. If the dual variables are viewed as network variables, that means that there is no edge between the agents $i$ and $j$, and that is consistent with the logical condition of not having an edge between $i$ and $j$. On the other hand, if the constraint $g_{ij}(x_i,x_j)\leq 0$ is not satisfied (i.e., $g_{ij}(x_i,x_j)> 0$), then one must set the corresponding dual variable to $\lambda_{ij}=\infty$ to maximize $\max_{\boldsymbol{\lambda}\ge \boldsymbol{0}}L(\boldsymbol{x},\boldsymbol{\lambda})$. However, if the dual variables are upper-bounded by 1, to achieve the maximum value in $\max_{\boldsymbol{\lambda}\in [0,1]^{n(n-1)}}\mathcal{L}(\boldsymbol{x},\boldsymbol{\lambda})$, we must set $\lambda_{ij}$ to its upper bound, i.e., $\lambda_{ij}=1$. That is again consistent with the logical condition of having an edge between $i$ and $j$. These facts together suggest that the network switches that may occur during the update process of state-dependent network dynamics are merely the KKT optimality conditions that guide the iterates to the optimal solution of \eqref{eq:convex-saddle-point}, assuming that there is a budget constraint on the dual variables. In other words, if the dual variables could have been freely chosen from $[0,\infty)$, then the iterates of the dynamics would converge to the optimal solution of \eqref{eq:convex-saddle-point}. However, the budget constraints on the dual variables do not allow us to penalize the violated constraints arbitrarily large and force them to be feasible. Therefore, the solutions that are obtained from state-network updates may not necessarily generate a feasible solution to \eqref{eq:convex-saddle-point}. Nevertheless, this approach allows us to view the state-network dynamics as an iterative primal-dual algorithm guided by KKT optimality conditions for solving a saddle-point problem with box constraints on the dual variables. Alternatively, the state-dependent network dynamics can be viewed as Nash dynamics in a zero-sum game between a network player and a state player, with budget constraints on the action set of the network player. In the following, we use these observations to develop Lyapunov stable and convergent state-dependent network dynamics via static saddle-point problems.

\begin{theorem}\label{thm:subgradient}
Let $g_{ij}(x_i,x_j)\in \mathcal{C}$ be a symmetric convex function and $f_i(x_i)\in \mathcal{C}$ be a convex function. Assume that agent $i\in[n]$ updates its state as
\begin{align}\label{eq:dynamics-discrete}
x_i^{k+1}=x_i^k-\alpha^k\Big(\frac{\partial}{\partial x_i}f_i(x_i^k)+\!\!\!\!\sum_{j\in N_i(\boldsymbol{x}^k)}\!\!\!\frac{\partial}{\partial x_i}g_{ij}(x_i^k,x_j^k)\Big),
\end{align}   
where $N_i(\boldsymbol{x}^k)=\{j: g_{ij}(x_i^k,x_j^k)>0\}$ denotes the set of neighbors of agent $i$ at time $k$. Then, for any positive sequence $\alpha^k=\gamma_k[\sum_i\big(\frac{\partial f_i(x_i^k)}{\partial x_i}+\sum_{j\in N_i(\boldsymbol{x}^k)}\!\frac{\partial g_{ij}(x_i^k,x_j^k)}{\partial x_i}\big)^2]^{-\frac{1}{2}}$, with $\lim_k\gamma_k=0$ and $\sum_{k}\gamma_k=\infty$, the dynamics \eqref{eq:dynamics-discrete} converge to an equilibrium $\boldsymbol{x}^*$. Moreover, for sufficiently small $\alpha^k$, $V(\boldsymbol{x})=\|\boldsymbol{x}-\boldsymbol{x}^*\|^2$ serves as a Lyapunov function.
\end{theorem}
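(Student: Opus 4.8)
The plan is to recognize \eqref{eq:dynamics-discrete} as a normalized subgradient method for a convex objective and to read both assertions off standard convex-analytic facts. Set
$\Phi(\boldsymbol{x}):=\sum_{i=1}^{n}f_i(x_i)+\frac{1}{2}\sum_{i\neq j}\max\{g_{ij}(x_i,x_j),0\}$, which is precisely $\max_{\boldsymbol{\lambda}\in[0,1]^{n(n-1)}}L(\boldsymbol{x},\boldsymbol{\lambda})$ for the Lagrangian of \eqref{eq:convex-saddle-point}; since each $f_i,g_{ij}$ is convex and $t\mapsto\max\{t,0\}$ is convex and nondecreasing, $\Phi$ is convex. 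First I would check that the vector $\boldsymbol{g}^k$ with components $[\boldsymbol{g}^k]_i=\frac{\partial}{\partial x_i}f_i(x_i^k)+\sum_{j\in N_i(\boldsymbol{x}^k)}\frac{\partial}{\partial x_i}g_{ij}(x_i^k,x_j^k)$ is a subgradient of $\Phi$ at $\boldsymbol{x}^k$: the term $\max\{g_{ij},0\}$ contributes $\frac{\partial}{\partial x_i}g_{ij}$ when $g_{ij}>0$, contributes $0$ when $g_{ij}<0$, and $0\in\partial\max\{g_{ij},0\}$ when $g_{ij}=0$ (so the strict inequality defining $N_i$ is harmless); and by symmetry $g_{ij}(x_i,x_j)=g_{ji}(x_j,x_i)$ the ordered pairs $(i,j)$ and $(j,i)$ each contribute $\frac12\frac{\partial}{\partial x_i}g_{ij}(x_i^k,x_j^k)$, which add up. Thus \eqref{eq:dynamics-discrete} is exactly $\boldsymbol{x}^{k+1}=\boldsymbol{x}^k-\alpha^k\boldsymbol{g}^k$ with $\alpha^k=\gamma_k/\|\boldsymbol{g}^k\|$, i.e.\ a normalized subgradient step of length $\|\boldsymbol{x}^{k+1}-\boldsymbol{x}^k\|=\gamma_k$.

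For the Lyapunov claim, fix any minimizer $\boldsymbol{x}^*$ of $\Phi$ (existence discussed below) and expand $\|\boldsymbol{x}^{k+1}-\boldsymbol{x}^*\|^2=\|\boldsymbol{x}^k-\boldsymbol{x}^*\|^2-2\alpha^k\langle\boldsymbol{g}^k,\boldsymbol{x}^k-\boldsymbol{x}^*\rangle+(\alpha^k)^2\|\boldsymbol{g}^k\|^2$. The subgradient inequality gives $\langle\boldsymbol{g}^k,\boldsymbol{x}^k-\boldsymbol{x}^*\rangle\ge\Phi(\boldsymbol{x}^k)-\Phi(\boldsymbol{x}^*)\ge0$, which is strictly positive unless $\boldsymbol{x}^k$ is already optimal. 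Hence whenever $\alpha^k<2\langle\boldsymbol{g}^k,\boldsymbol{x}^k-\boldsymbol{x}^*\rangle/\|\boldsymbol{g}^k\|^2$ — equivalently $\gamma_k$ small, which holds for all large $k$ since $\gamma_k\to0$ — we get $V(\boldsymbol{x}^{k+1})<V(\boldsymbol{x}^k)$ for $V(\boldsymbol{x})=\|\boldsymbol{x}-\boldsymbol{x}^*\|^2$; in particular $\lim_k\|\boldsymbol{x}^k-\boldsymbol{x}^*\|$ exists for every minimizer.

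For convergence of the iterates, I would telescope $\|\boldsymbol{x}^{k+1}-\boldsymbol{x}^*\|^2\le\|\boldsymbol{x}^k-\boldsymbol{x}^*\|^2-2\gamma_k(\Phi(\boldsymbol{x}^k)-\Phi^*)/\|\boldsymbol{g}^k\|+\gamma_k^2$ to obtain $\min_{i\le k}(\Phi(\boldsymbol{x}^i)-\Phi^*)/\|\boldsymbol{g}^i\|\le(\|\boldsymbol{x}^0-\boldsymbol{x}^*\|^2+\sum_{i\le k}\gamma_i^2)/(2\sum_{i\le k}\gamma_i)$, whose right-hand side tends to $0$ by Stolz--Ces\`aro using $\gamma_k\to0$ and $\sum_k\gamma_k=\infty$ (regardless of whether $\sum_k\gamma_k^2$ is finite). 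Once the iterates are bounded, $\|\boldsymbol{g}^k\|$ is bounded (continuous gradients of $\mathcal{C}^2$ functions on a bounded set), so $\liminf_k\Phi(\boldsymbol{x}^k)=\Phi^*$; taking a subsequence $\boldsymbol{x}^{k_j}\to\bar{\boldsymbol{x}}$ with $\Phi(\bar{\boldsymbol{x}})=\Phi^*$ exhibits a minimizer $\bar{\boldsymbol{x}}$, and applying the previous paragraph with $\boldsymbol{x}^*=\bar{\boldsymbol{x}}$ forces $\lim_k\|\boldsymbol{x}^k-\bar{\boldsymbol{x}}\|=\lim_j\|\boldsymbol{x}^{k_j}-\bar{\boldsymbol{x}}\|=0$. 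Hence $\boldsymbol{x}^k\to\bar{\boldsymbol{x}}=:\boldsymbol{x}^*$, which minimizes $\Phi$ and therefore (where $\Phi$ is differentiable, i.e.\ $g_{ij}(x_i^*,x_j^*)\neq0$ for all $i,j$) satisfies $\nabla\Phi(\boldsymbol{x}^*)=\boldsymbol{0}$, so the increment in \eqref{eq:dynamics-discrete} vanishes there: an equilibrium.

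The main obstacle I anticipate is controlling the normalization. One must rule out the iterates escaping to infinity so that $\|\boldsymbol{g}^k\|$ stays bounded, and one must ensure the transient growth $\|\boldsymbol{x}^{k+1}-\boldsymbol{x}^*\|^2\le\|\boldsymbol{x}^k-\boldsymbol{x}^*\|^2+\gamma_k^2$ does not accumulate before the $-2\gamma_k(\Phi(\boldsymbol{x}^k)-\Phi^*)/\|\boldsymbol{g}^k\|$ term dominates; both require a genuine minimizer $\boldsymbol{x}^*$ to anchor the Fej\'er-type argument, which one gets from coercivity of $\Phi$ (plausibly forced by the finite-global-minimum hypothesis on the $f_i$ together with $\max\{g_{ij},0\}\ge0$) or must be taken as a standing assumption. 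By contrast, the subgradient-selection issue at ties $g_{ij}(x_i^k,x_j^k)=0$ is a minor point, already dispatched in the first step by $0\in\partial\max\{g_{ij},0\}$.
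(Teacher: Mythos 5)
Your proposal follows essentially the same route as the paper's proof: identify $\Phi(\boldsymbol{x})=f(\boldsymbol{x})+\frac{1}{2}\sum_{i,j}\max\{g_{ij}(x_i,x_j),0\}$ as the reduced max over the box-constrained dual, verify that the update is a normalized subgradient step for $\Phi$ via the same symmetry and tie-breaking argument, invoke diminishing-step subgradient convergence, and obtain the Lyapunov claim from the standard Fej\'er expansion of $\|\boldsymbol{x}^{k+1}-\boldsymbol{x}^*\|^2$. The only difference is that you spell out the telescoping/Fej\'er-monotonicity details (and the boundedness and minimizer-existence caveats) that the paper delegates to a citation of the standard subgradient-method analysis.
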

\begin{proof}
Let us consider the Lagrangian function
\begin{align}\nonumber
L(\boldsymbol{x},\boldsymbol{\lambda})=f(\boldsymbol{x})+\frac{1}{2}\sum_{i,j}\lambda_{ij}g_{ij}(x_i,x_j),
\end{align} 
where $f(\boldsymbol{x})=\sum_{i=1}^nf_i(x_i)$, and suppose that we want to solve the following saddle-point problem with box constraints on the dual variables:
\begin{align}\nonumber
\min_{\boldsymbol{x}\in \mathbb{R}^n}\max_{\boldsymbol{\lambda}\in [0,1]^{n(n-1)}}L(\boldsymbol{x},\boldsymbol{\lambda})&=\min_{\boldsymbol{x}\in \mathbb{R}^n}\max_{\boldsymbol{\lambda}\in [0,1]^{n(n-1)}}\{f(\boldsymbol{x})+\frac{1}{2}\sum_{i,j}\lambda_{ij}g_{ij}(x_i,x_j)\}\cr 
&=\min_{\boldsymbol{x}\in \mathbb{R}^n}\big\{f(\boldsymbol{x})+\frac{1}{2}\sum_{i,j}\max\{g_{ij}(x_i,x_j),0\}\big\}.
\end{align} 
If we define $\Phi(\boldsymbol{x}):=f(\boldsymbol{x})+\frac{1}{2}\sum_{i,j}\max\{g_{ij}(x_i,x_j),0\}$ and note that for any $i,j$, $\max\{g_{ij}(x_i,x_j),0\}$ is a convex function, we can easily see that $\Phi(\boldsymbol{x})$ is a convex function of $\boldsymbol{x}$. Therefore, if a subgradient algorithm is applied to the unconstrained convex problem $\min_{\boldsymbol{x}\in\mathbb{R}^n}\Phi(\boldsymbol{x})$ with an appropriate choice of step sizes $\alpha^k, k=1,2,\ldots$, the generated sequence will converge to a minimizer of $\Phi(\boldsymbol{x})$ denoted by $\boldsymbol{x}^*$. More precisely, let us use $g^k$ to denote the subgradient of $\Phi(\boldsymbol{x})$ at $\boldsymbol{x}^k$. Then, it is known that the discrete time dynamics
\begin{align}\label{eq:subgradient-dynamics}
\boldsymbol{x}^{k+1}=\boldsymbol{x}^k-\alpha_kg^k,
\end{align}  
with diminishing step length $\alpha^k=\frac{\gamma_k}{\|g^k\|}$ with $\lim_k\gamma_k=0$ and $\sum_{k}\gamma_k=\infty$, will converge to $\boldsymbol{x}^*$ \cite{boyd2004convex}. Now, let $J=\{(r,s): g_{rs}(x_r^k,x_s^k)>0\}$ and $\bar{J}=\{(r,s): g_{rs}(x_r^k,x_s^k)\leq 0\}$. Then, for every $(r,s)\in J$, the function $\max\{g_{rs}(x_r,x_s),0\}$ has a unique subgradient at $\boldsymbol{x}^k$, which is $\nabla g_{rs}(x^k_r,x^k_s)$. Moreover, for every $(r,s)\in \bar{J}$, the minimum of the convex function $\max\{g_{rs}(x_r,x_s),0\}$ equals $0$ and is achieved at $\boldsymbol{x}^k$. Thus, $\boldsymbol{0}$ is a subgradient of $\max\{g_{rs}(x_r,x_s),0\}$ at $\boldsymbol{x}^k$ for every $(i,j)\in \bar{J}$. Through use of the additivity rule of the subgradient, we conclude that $g^k=\nabla f(\boldsymbol{x})+\frac{1}{2}\sum_{(r,s)\in J}\nabla g_{rs}(x^k_r,x^k_s)$ is a subgradient for $\Phi(\cdot)$ at $\boldsymbol{x}^k$. In particular, the $i$th component of $g^k$ is given by  
\begin{align}\label{eq:subgradient}
g_i^k&=\frac{\partial}{\partial x_i}f(\boldsymbol{x}^k)+\frac{1}{2}\sum_{j}\Big(\boldsymbol{1}_{\{g_{ij}(x^k_i,x^k_j)> 0\}} \frac{\partial g_{ij}(x^k_i,x^k_j)}{\partial x_i}+\boldsymbol{1}_{\{g_{ji}(x^k_j,x^k_i)> 0\}} \frac{\partial g_{ji}(x^k_j,x^k_i)}{\partial x_i}\Big)\cr 
&=\frac{\partial}{\partial x_i}f_i(x_i^k)+\sum_{j}\Big(\boldsymbol{1}_{\{g_{ij}(x^k_i,x^k_j)> 0\}}\cdot \frac{\partial g_{ij}(x^k_i,x^k_j)}{\partial x_i}\Big)\cr
&=\frac{\partial}{\partial x_i}f_i(x_i^k)+\sum_{j\in N_i(\boldsymbol{x}^k)}\frac{\partial g_{ij}(x^k_i,x^k_j)}{\partial x_i},
\end{align}
where $\boldsymbol{1}_{\{\cdot\}}$ is the indicator function. In the above relations, the second equality holds by symmetry of the functions $g_{ij}(x_i,x_j)=g_{ji}(x_j,x_i)$, and the last equality is due to the definition of an edge emergence between two nodes $i$ and $j$. By substituting \eqref{eq:subgradient} into \eqref{eq:subgradient-dynamics}, we obtain the desired dynamics \eqref{eq:dynamics-discrete}.  

Finally, we can use the definition of the subgradient to write 
\begin{align}\nonumber
\|\boldsymbol{x}^{k+1}-\boldsymbol{x}^*\|^2&=\|\boldsymbol{x}^{k}-\boldsymbol{x}^*-\alpha^kg^{k}\|^2\cr 
&=\|\boldsymbol{x}^{k}-\boldsymbol{x}^*\|^2+(\alpha^k)^2\|g^k\|^2-2\alpha^k(g^{k})^T(\boldsymbol{x}^{k}-\boldsymbol{x}^*)\cr 
&\leq \|\boldsymbol{x}^{k}-\boldsymbol{x}^*\|^2+(\alpha^k)^2\|g^k\|^2-2\alpha^k(\Phi(\boldsymbol{x}^{k})-\Phi(\boldsymbol{x}^*)).
\end{align}
Therefore, for any $\alpha^{k}\in [0, \frac{2(\Phi(\boldsymbol{x}^{k})-\Phi(\boldsymbol{x}^*))}{\|g^k\|^2}]$, we have $\|\boldsymbol{x}^{k+1}-\boldsymbol{x}^*\|^2\leq \|\boldsymbol{x}^{k}-\boldsymbol{x}^*\|^2$. 
\end{proof}  

\smallskip
\begin{remark}
Let $X$ be the set of minimizers of $\min_{\boldsymbol{x}\in \mathbb{R}^n}\Phi(\boldsymbol{x})$, which is a nonempty closed convex set. Moreover, let $d(\boldsymbol{x},X)=\|\boldsymbol{x}-\Pi_X[\boldsymbol{x}]\|$ be the minimum distance of the point $\boldsymbol{x}$ from the set $X$, where $\Pi_X[\boldsymbol{x}]$ is the projection of $\boldsymbol{x}$ on the set $X$. As for any $\boldsymbol{x}^*\in X$, $\|\boldsymbol{x}^{k+1}-\boldsymbol{x}^*\|^2\leq \|\boldsymbol{x}^{k}-\boldsymbol{x}^*\|^2$, by choosing $\boldsymbol{x}^*=\Pi_X[x^k]$ we obtain
\begin{align}\nonumber
d(\boldsymbol{x}^{k+1}, X)=\|\boldsymbol{x}^{k+1}\!-\!\Pi_X[\boldsymbol{x}^{k+1}]\|\leq \|\boldsymbol{x}^{k+1}\!-\!\Pi_X[\boldsymbol{x}^{k}]\|\leq  \|\boldsymbol{x}^{k}\!-\!\Pi_X[\boldsymbol{x}^{k}]\|=d(\boldsymbol{x}^{k}, X).
\end{align}
Thus, for a sufficiently small step size $\alpha^{k}\in [0, \frac{2(\Phi(\boldsymbol{x}^{k})-\Phi(\boldsymbol{x}^*))}{\|g^k\|^2}]$, the distance from the iterates \eqref{eq:dynamics-discrete} to the optimal set $X$ also serves as a Lyapunov function. 
\end{remark}

\smallskip
\begin{example}
An interesting special case of Theorem \ref{thm:subgradient} is when $f_i(x_i)=0, \forall i$, and the set of constraints $\{g_{ij}(x_i,x_j)\leq 0, \forall i,j\}$ is feasible. In this case the set of minimizers of $\Phi(\boldsymbol{x})=\sum_{i,j}\max\{\frac{g_{ij}(x_i,x_j)}{2},0\}$ is precisely the feasible set $\{\boldsymbol{x}\in\mathbb{R}^n: g_{ij}(x_i,x_j)\leq 0, \forall i,j\}$. In particular, the minimum value of $\Phi(\cdot)$ is zero which is obtained at any feasible point $\boldsymbol{x}^*\in \{\boldsymbol{x}\in\mathbb{R}^n: g_{ij}(x_i,x_j)\leq 0, \forall i,j\}$. Now if the norm of the gradient of each measurement function $g_{ij}$ is bounded above by a constant $G$, we can write,
\begin{align}\label{eq:epsilon-feasible}
\frac{2(\Phi(\boldsymbol{x}^{k})-\Phi(\boldsymbol{x}^*))}{\|g^k\|^2}&=\frac{2\Phi(\boldsymbol{x}^{k})}{\|g^k\|^2}=\frac{\sum_i\sum_{j\in N_i(\boldsymbol{x}^k)}g_{ij}(x^k_i,x^k_j)}{\sum_i\big(\sum_{j\in N_i(\boldsymbol{x}^k)}\frac{\partial}{\partial x_i}g_{ij}(x^k_i,x^k_j)\big)^2}\cr 
&\ge \frac{\sum_i\sum_{j\in N_i(\boldsymbol{x}^k)}g_{ij}(x^k_i,x^k_j)}{n\sum_{i,j}\big(\frac{\partial}{\partial x_i}g_{ij}(x^k_i,x^k_j)\big)^2}\ge \frac{\max_{i,j}\{g_{ij}(x^k_i,x^k_j)\}}{n^2G^2}. 
\end{align} 
Let us define the $\epsilon$-equilibrium set as the set of all the points for which each constraint $g_{ij}(x_i,x_j)\leq 0$ is violated by at most $\epsilon$, i.e., $X_{\epsilon}=\{\boldsymbol{x}\in\mathbb{R}^n: g_{ij}(x_i,x_j)\leq \epsilon, \forall i,j\}$, and consider the dynamics \eqref{eq:dynamics-discrete} with the constant step size $\alpha_k=\frac{\epsilon}{n^2G^2}$. Then, if $\boldsymbol{x}^k\notin X_{\epsilon}$, we have $\max_{i,j}\{g_{ij}(x^k_i,x^k_j)\}>\epsilon$, which, in view of \eqref{eq:epsilon-feasible}, implies that $\alpha^k\in [0, \frac{2(\Phi(\boldsymbol{x}^{k})-\Phi(\boldsymbol{x}^*))}{\|g^k\|^2}]$. Thus, as long as $\boldsymbol{x}^k\notin X_{\epsilon}$, $d(\boldsymbol{x}, X)$ serves as a Lyapunov function, and we can write
\begin{align}\nonumber
d(\boldsymbol{x}^{k+1}, X)&\leq d(\boldsymbol{x}^{k}, X)+(\alpha^k)^2\|g^k\|^2-2\alpha^k\Phi(\boldsymbol{x}^{k})\cr 
&=d(\boldsymbol{x}^{k}, X)+\frac{\epsilon^2}{n^4G^4}\|g^k\|^2-\frac{2\epsilon}{n^2G^2}\Phi(\boldsymbol{x}^{k})\cr 
&\leq d(\boldsymbol{x}^{k}, X)+\frac{\epsilon^2}{n^4G^4}n^2G^2-\frac{2\epsilon}{n^2G^2}\epsilon=d(\boldsymbol{x}^{k}, X)-\frac{\epsilon^2}{n^2G^2},
\end{align} 
where in the last inequality we have used the fact that $\|g^k\|^2\leq n^2G^2$ and $\Phi(\boldsymbol{x}^k)\ge \epsilon$ (as $\boldsymbol{x}^k\notin X_{\epsilon}$). Since $d(\boldsymbol{x}^{k},X)\ge 0, \forall k$, we conclude that after at most $\frac{d(\boldsymbol{x}^0,X)n^2G^2}{\epsilon^2}$ iterations, the state-dependent network dynamics $x_i^{k+1}=x_i^k-\frac{\epsilon}{n^2G^2}\sum_{j\in N_i(\boldsymbol{x}^k)}\!\frac{\partial}{\partial x_i}g_{ij}(x_i^k,x_j^k)$ will reach an $\epsilon$-neighborhood of its equilibrium set $X_{\epsilon}$.    
\end{example}

\subsection{Saddle-Point Dynamics with Heterogeneous Step Size}

The subgradient method is not the only algorithm for minimizing a convex function, and one can consider other algorithms that can result in different state-dependent network dynamics. The following theorem provides another multiagent network dynamics motivated by the fact that different agents often have different scaling parameters in their update rules. These dynamics can be viewed as the \emph{quasi-Newton} method \cite{boyd2004convex} in the context of multiagent network dynamics.

\begin{definition}
A function $V:\mathbb{R}^{n}\to \mathbb{R}$ is called a \emph{semi-Lyapunov function} for the discrete-time dynamics $\boldsymbol{z}^{k+1}=h(\boldsymbol{z}^k), \ k=0,1,2,\ldots$, if $V(\boldsymbol{z}^{k+1})< V(\boldsymbol{z}^k)$ for any $z^k\in \mathbb{R}^m\setminus D$, where $D$ is a measure-zero subset of $\mathbb{R}^{n}$.
\end{definition}

\begin{theorem}\label{thm:Newton}
Let $f\in\mathcal{C}$ be a convex function and $g_{ij}\in\mathcal{C}^2$ be a symmetric convex function whose zeros form a measure-zero subset $D$.  Let $\Phi(\boldsymbol{x})\!=\!f(\boldsymbol{x})+\frac{1}{2}\sum_{i,j}\max\{g_{ij}(x_i,x_j),0\}$, whose level set and subgradient at point $\boldsymbol{x}$ are given by $L_{\boldsymbol{x}}=\{\boldsymbol{y}:\Phi(\boldsymbol{y})\leq \Phi(\boldsymbol{x})\}$ and $g^{\boldsymbol{x}}$, respectively. If for any $\boldsymbol{x}\notin D$, there exists a positive-definite diagonal matrix $G^{x}$ such that $\Phi(\boldsymbol{y})\leq \Phi(\boldsymbol{x})+\!(\boldsymbol{y}-\boldsymbol{x})^Tg^x\!+\!\frac{1}{2}\|\boldsymbol{y}-\boldsymbol{x}\|^2_{G^x}, \forall\boldsymbol{y}\in L_x$,  then the dynamics
\begin{align}\label{eq:newton-dynamics}
x_i^{k+1}=x_i^k-\frac{1}{G^k_{ii}}\Big(\frac{\partial}{\partial x_i}f(\boldsymbol{x}^k)+\!\!\!\!\sum_{j\in N_i(\boldsymbol{x}^k)}\!\!\!\frac{\partial}{\partial x_i}g_{ij}(x_i^k,x_j^k)\Big), \ \ \ i\in[n],
\end{align}  
admit the semi-Lyapunov function $\Phi(\boldsymbol{x})$. Moreover, if $G^k\le m I, \forall k$, and $\boldsymbol{x}^k\in D$ for at most finitely many iterates $k$, then the dynamics \eqref{eq:newton-dynamics} will converge.
\end{theorem}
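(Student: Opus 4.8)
The plan is to read the dynamics \eqref{eq:newton-dynamics} as one variable-metric (quasi-Newton) descent step on the \emph{fixed} convex function $\Phi$, and then to re-run the majorize--minimize argument from the proof of Theorem \ref{thm:majorizing}, the only new feature being that here the quadratic majorizer is assumed to dominate $\Phi$ only on the current sublevel set $L_{\boldsymbol{x}^k}$ rather than globally.

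First I would collect the facts about $\Phi$. As in the proof of Theorem \ref{thm:subgradient} (cf.\ \eqref{eq:subgradient}), $\Phi$ is convex and bounded below by $\min f$; moreover, since the only nonsmooth points of the maps $\max\{g_{ij}(x_i,x_j),0\}$ lie on $D$, the function $\Phi$ is differentiable at every $\boldsymbol{x}\notin D$ with $\nabla\Phi(\boldsymbol{x})=g^{\boldsymbol{x}}$, whose $i$th coordinate is $\frac{\partial}{\partial x_i}f(\boldsymbol{x})+\sum_{j\in N_i(\boldsymbol{x})}\frac{\partial}{\partial x_i}g_{ij}(x_i,x_j)$. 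Writing $g^k:=g^{\boldsymbol{x}^k}$ and $G^k:=G^{\boldsymbol{x}^k}$ (defined whenever $\boldsymbol{x}^k\notin D$), the update \eqref{eq:newton-dynamics} is precisely $\boldsymbol{x}^{k+1}=\boldsymbol{x}^k-(G^k)^{-1}g^k$, i.e.\ the unconstrained minimizer of the quadratic model $u_k(\boldsymbol{y}):=\Phi(\boldsymbol{x}^k)+(\boldsymbol{y}-\boldsymbol{x}^k)^Tg^k+\frac12\|\boldsymbol{y}-\boldsymbol{x}^k\|^2_{G^k}$, for which $u_k(\boldsymbol{x}^{k+1})=\Phi(\boldsymbol{x}^k)-\frac12\|g^k\|^2_{(G^k)^{-1}}$.

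The pivotal step, which I expect to be the main obstacle, is to certify $\boldsymbol{x}^{k+1}\in L_{\boldsymbol{x}^k}$ so that the majorization hypothesis is applicable at $\boldsymbol{y}=\boldsymbol{x}^{k+1}$. I would work along the segment $\boldsymbol{y}(t)=\boldsymbol{x}^k-t(G^k)^{-1}g^k$, $t\in[0,1]$: one computes $u_k(\boldsymbol{y}(t))=\Phi(\boldsymbol{x}^k)-(t-\tfrac{t^2}{2})\|g^k\|^2_{(G^k)^{-1}}<\Phi(\boldsymbol{x}^k)$ for $t\in(0,1]$ (assuming $g^k\neq\boldsymbol{0}$; otherwise $\boldsymbol{x}^k$ is already an equilibrium), and, since the search direction $-(G^k)^{-1}g^k$ satisfies $\nabla\Phi(\boldsymbol{x}^k)^T(G^k)^{-1}g^k=\|g^k\|^2_{(G^k)^{-1}}>0$, it is a direction of strict descent, so $\boldsymbol{y}(t)$ lies in the interior of $L_{\boldsymbol{x}^k}$ for small $t>0$. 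Were $\boldsymbol{y}$ ever to exit $L_{\boldsymbol{x}^k}$, let $t^\star\in(0,1]$ be the first exit time; by continuity and closedness of $L_{\boldsymbol{x}^k}$ one has $\boldsymbol{y}(t^\star)\in L_{\boldsymbol{x}^k}$ with $\Phi(\boldsymbol{y}(t^\star))=\Phi(\boldsymbol{x}^k)$, whereas the hypothesis yields $\Phi(\boldsymbol{y}(t^\star))\le u_k(\boldsymbol{y}(t^\star))<\Phi(\boldsymbol{x}^k)$, a contradiction. Hence $\boldsymbol{x}^{k+1}\in L_{\boldsymbol{x}^k}$, and $\Phi(\boldsymbol{x}^{k+1})\le u_k(\boldsymbol{x}^{k+1})=\Phi(\boldsymbol{x}^k)-\frac12\|g^k\|^2_{(G^k)^{-1}}<\Phi(\boldsymbol{x}^k)$ whenever $g^k\neq\boldsymbol{0}$. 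Since $D$ has measure zero, this already shows that $\Phi$ is a semi-Lyapunov function for \eqref{eq:newton-dynamics}.

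For the convergence claim I would bring in the extra hypotheses $G^k\preceq mI$ and ``$\boldsymbol{x}^k\in D$ finitely often,'' so that the drift $\Phi(\boldsymbol{x}^k)-\Phi(\boldsymbol{x}^{k+1})\ge\frac12\|g^k\|^2_{(G^k)^{-1}}\ge\frac1{2m}\|g^k\|^2$ holds for all large $k$. Telescoping against the lower bound $\min f$ gives $\sum_k\|g^k\|^2<\infty$, hence $\nabla\Phi(\boldsymbol{x}^k)=g^k\to\boldsymbol{0}$, the values $\Phi(\boldsymbol{x}^k)$ converge, and the increments $\boldsymbol{x}^{k+1}-\boldsymbol{x}^k=-(G^k)^{-1}g^k\to\boldsymbol{0}$; also, for any minimizer $\boldsymbol{x}^*$ of $\Phi$, convexity and the drift give the Fej\'{e}r-type bound $\|\boldsymbol{x}^{k+1}-\boldsymbol{x}^*\|^2_{G^k}\le\|\boldsymbol{x}^k-\boldsymbol{x}^*\|^2_{G^k}-2\big(\Phi(\boldsymbol{x}^{k+1})-\Phi(\boldsymbol{x}^*)\big)\le\|\boldsymbol{x}^k-\boldsymbol{x}^*\|^2_{G^k}$. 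I would then close the argument along the lines of the endgame of the proof of Theorem \ref{thm:majorizing}: every limit point of $\{\boldsymbol{x}^k\}$ is a critical point, hence a global minimizer, of the single convex function $\Phi$, and since the increments vanish the iterates cannot migrate between distinct limit points, so $\{\boldsymbol{x}^k\}$ converges to a minimizer of $\Phi$. The two points needing genuine care are (i) the sublevel-set containment argued above, and (ii) upgrading ``$\nabla\Phi(\boldsymbol{x}^k)\to\boldsymbol{0}$ and vanishing increments'' to convergence of the sequence, where one uses that the iterates stay in the sublevel set $L_{\boldsymbol{x}^K}$ and that the admissible metrics $G^k$ remain in a bounded, nondegenerate range along the trajectory (the upper bound $G^k\preceq mI$ is given; the relevant lower bound is forced by requiring the majorization over all of $L_{\boldsymbol{x}^k}$).
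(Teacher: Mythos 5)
Your proposal is correct and follows essentially the same route as the paper's proof: the same quadratic majorizer $u_k$ minimized by the update, the same line-segment/first-exit argument to certify $\boldsymbol{x}^{k+1}\in L_{\boldsymbol{x}^k}$ before invoking the level-set-restricted majorization, the same drift bound $\frac{1}{2}\|g^k\|^2_{(G^k)^{-1}}$, and the same telescoping to get $\|g^k\|\to\boldsymbol{0}$. Your endgame is, if anything, slightly more careful than the paper's (which passes directly from $\|g^k\|\to\boldsymbol{0}$ and convexity to convergence of the iterates), via the Fej\'er-type inequality and the observation that a uniform lower bound on the metrics $G^k$ is what is really needed to finish.
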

\begin{proof}
Consider the convex function $\Phi(\boldsymbol{x})=f(\boldsymbol{x})+\frac{1}{2}\sum_{i,j}\max\{g_{ij}(x_i,x_j),0\}$, which is differentiable at any point except on a measure-zero subset $D:=\{\boldsymbol{x}\in \mathbb{R}^n: g_{ij}(x_i,x_j)=0 \ \mbox{for some}\ i,j\}$. As before, we know that the $i$th component of the gradient of $\Phi(\boldsymbol{x})$ at $\boldsymbol{x}^k$ (or the subgradient, if $\boldsymbol{x}^k\in D$) is given by $g_i^k=\frac{\partial}{\partial x_i}f(\boldsymbol{x}^k)+\sum_{j\in N_i(\boldsymbol{x}^k)}\frac{\partial}{\partial x_i}g_{ij}(x^k_i,x^k_j)$. From the assumption, we know that there is a positive-definite diagonal matrix $G^k$ such that
\begin{align}\nonumber
u_k(\boldsymbol{y}):=\Phi(\boldsymbol{x}^k)+(\boldsymbol{y}-\boldsymbol{x}^k)^Tg^k+\frac{1}{2}\|\boldsymbol{y}-\boldsymbol{x}^k\|^2_{G^k}
\end{align} 
forms a quadratic upper approximation for $\Phi(\boldsymbol{y}), \forall \boldsymbol{y}\in L_{\boldsymbol{x}^k}$, and $u_k(\boldsymbol{x}^k)=\Phi(\boldsymbol{x}^k)$. On the other hand, it is easy to see that 
\begin{align}\nonumber
\boldsymbol{x}^{k+1}=\boldsymbol{x}^k-(G^k)^{-1}g^k=\argmin_{\boldsymbol{y}\in \mathbb{R}^n}u_k(\boldsymbol{y}).
\end{align}
Let us consider an arbitrary $\boldsymbol{x}^k\notin D$. $g^k=\nabla \Phi(\boldsymbol{x}^k)$, and thus $-(G^k)^{-1}g^k$ is a descent direction for any positive-definite matrix $(G^k)^{-1}$. Therefore, for sufficiently small $\delta>0$, $\Phi(\boldsymbol{x}^k-\delta (G^k)^{-1}g^k)< \Phi(\boldsymbol{x}^k)$, and hence $\boldsymbol{x}^k-\delta (G^k)^{-1}g^k\in L_{\boldsymbol{x}^k}$. Thus, the line segment $\{(1-\alpha)\boldsymbol{x}^k+\alpha\boldsymbol{x}^{k+1}, \alpha\in [0,1]\}$ intersects $L_{\boldsymbol{x}^k}$ in at least two different points (for $\alpha=0$ and $\alpha=\delta$). Now, if $\boldsymbol{x}^{k+1}\notin L_{x^k}$, that line segment must intersect with the boundary of $L_{\boldsymbol{x}^k}$ at another point $\bar{\boldsymbol{x}}:=\bar{\alpha}\boldsymbol{x}^k+(1-\bar{\alpha})\boldsymbol{x}^{k+1}$, for some $\bar{\alpha}\in (0,1)$. (Note that the level set $L_{\boldsymbol{x}^k}$ is a closed convex set.) If we apply the continuity of $\Phi(\cdot)$, we conclude that $\Phi(\bar{\boldsymbol{x}})=\Phi(\boldsymbol{x}^k)$, and thus
\begin{align}\nonumber
u_k(\boldsymbol{x}^k)=\Phi(\boldsymbol{x}^k)=\Phi(\bar{\boldsymbol{x}})\leq u_k(\bar{\boldsymbol{x}})\leq \bar{\alpha} u_k(\boldsymbol{x}^k)+(1-\bar{\alpha})u_k(\boldsymbol{x}^{k+1})<u_k(\boldsymbol{x}^k),
\end{align}   
where the first inequality results from $\Phi(\boldsymbol{y})\leq u_k(\boldsymbol{y}), \forall \boldsymbol{y}\in L_{\boldsymbol{x}^k}$, and the second ineqaulity results from the convexity of $u_k(\cdot)$. This contradiction shows that $\boldsymbol{x}^{k+1}\in L_{\boldsymbol{x}^k}$, which implies $\Phi(\boldsymbol{x}^{k+1})< \Phi(\boldsymbol{x}^{k})$. Therefore, $\Phi(\cdot)$ serves as a semi-Lyapunov function for the dynamics \eqref{eq:newton-dynamics}. In particular, the drift of this Lyapunov function at $\boldsymbol{x}^k\notin D$ equals
\begin{align}\nonumber
\Phi(\boldsymbol{x}^k)-\Phi(\boldsymbol{x}^{k+1})&\ge \Phi(\boldsymbol{x}^k)-u_k(\boldsymbol{x}^{k+1})\cr 
&= \Phi(\boldsymbol{x}^k)-\Big(\Phi(\boldsymbol{x}^k)+(\boldsymbol{x}^{k+1}-\boldsymbol{x}^k)^Tg^k+\frac{1}{2}\|\boldsymbol{x}^{k+1}-\boldsymbol{x}^k\|^2_{G_k}\Big)\cr 
&=\frac{1}{2}(g^k)^T(G^k)^{-1}g^k\cr 
&=\frac{1}{2}\|g^k\|^2_{(G^k)^{-1}}.
\end{align}  
Summing the above inequality for $k=0,\ldots,K-1$, we obtain
\begin{align}\nonumber
\Phi(\boldsymbol{x}^{K})+\sum_{\{k: \boldsymbol{x}^k\in D\}}(\Phi(\boldsymbol{x}^{k+1})-\Phi(\boldsymbol{x}^{k}))\leq \Phi(\boldsymbol{x}^{0})-\frac{1}{2}\sum_{\{k: \boldsymbol{x}^k\notin D\}}\|g^{k}\|^2_{(G^k)^{-1}}.
\end{align} 
As this relation holds for any $K$, and $|\{k: \boldsymbol{x}^k\in D\}|<\infty$ by our assumption, we must have $\sum_{\{k: \boldsymbol{x}^k\notin D\}}\|g^{k}\|^2_{(G^k)^{-1}}<\infty$, and hence $\lim_{k\to \infty}\|g^{k}\|^2_{(G^k)^{-1}}=0$. Thus, if there exists $m>0$ such that $G^k\le m I, \forall k$, we get $\lim_{k\to \infty}\|g^{k}\|=0$. Since $\Phi(\cdot)$ is a convex function, we know that the set of minimizers of $\Phi(\cdot)$ is exactly the set of points that have $\boldsymbol{0}$ as their subgradient. Thus, $\{\boldsymbol{x}^k\}_{k=0}^{\infty}$ must converge to an equilibrium point that is a global minimum of the semi-Lyapunov function $\Phi(\cdot)$.
\end{proof}

A natural choice for the matrices $G^k$ in Theorem \ref{thm:Newton} is the Hessian matrix $\nabla^2 \Phi(\boldsymbol{x}^k)$, which is used in the Newton method for minimizing a smooth convex function. However, in practice, it is often easier to work with a sparse modification of $\nabla^2\Phi(\boldsymbol{x}^k)$ given by a diagonal matrix containing only the diagonal entries of $\nabla^2 \Phi(\boldsymbol{x}^k)$. In particular, to assure positive definiteness, an identity matrix is added to such a diagonal matrix to form the quasi-Newton update rule. Using such a quasi-Newton method for minimizing $\Phi(\cdot)$, one obtains the following state-dependent network dynamics:
\begin{align}\label{eq:hesian-newton}
x_i^{k+1}=x_i^k-t_k\frac{\frac{\partial}{\partial x_i}f(\boldsymbol{x}^k)+\!\sum_{j\in N_i(\boldsymbol{x}^k)}\!\frac{\partial}{\partial x_i}g_{ij}(x_i^k,x_j^k)}{1+\frac{\partial^2}{\partial x^2_i}f(\boldsymbol{x}^k)+\!\sum_{j\in N_i(\boldsymbol{x}^k)}\!\frac{\partial^2}{\partial x^2_i}g_{ij}(x_i^k,x_j^k)},
\end{align} 
where $t_k$ is an appropriately chosen step size obtained using a line search or diminishing rule. In fact, it is known that for a sufficiently small neighborhood of the minimizers of $\Phi(\cdot)$, the Newton method with step size $t_k=1$ will converge quadratically fast to the set of optimal points \cite{boyd2004convex}. Therefore, we obtain a simple explanation for the convergence properties and equilibrium points of seemingly complex state-dependent network dynamics \eqref{eq:hesian-newton} by using the well-known quasi-Newton method. In particular, this view provides a rigorous explanation of why the trajectories of the state-dependent network dynamics of the form \eqref{eq:hesian-newton} (such as the HK model) converge exponentially fast as they get close to their equilibrium points.       

\begin{example}\label{ex:saddle-hk}
Let us consider a special case in which $f=0$ and $g_{ij}(x_i,x_j)=\frac{1}{2}(x_i-x_j)^2-\frac{\epsilon_{ij}^2}{2}$, where $\epsilon_{ij}=\epsilon_{ji}>0$. Two agents $i$ and $j$ become each other's neighbors if their distance from each other is larger than $\epsilon_{ij}$. Therefore, we obtain the complement of the original HK model. Here, $\Phi(\boldsymbol{x})\!=\!\frac{1}{2}\sum_{ij}\max\{\frac{1}{2}(x_i-x_j)^2-\frac{\epsilon_{ij}^2}{2}, 0\}$, and thus for $\boldsymbol{x}^k\notin D:=\{\boldsymbol{x}: |x_i-x_j|=\epsilon_{ij}, \ \mbox{for some}\ i,j \}$, we have
\begin{align}\nonumber
&\nabla_i\Phi(\boldsymbol{x}^k)=\sum_{j\in N_i(\boldsymbol{x}^k)}(x^k_i-x^k_j)=|N_i(\boldsymbol{x}^k)|x_i^k-\!\!\!\sum_{j\in N_i(\boldsymbol{x}^k)}\!\!x^k_j,\cr
&\nabla^2_{ij}\Phi(\boldsymbol{x}^k)=\begin{cases}|N_i(\boldsymbol{x}^k)|\ \ &\mbox{if} \ i=j\\
-1\ \ &\mbox{if} \ j\in N_i(\boldsymbol{x}^k)\\
0\ \ &\mbox{else}.
\end{cases}
\end{align}
In other words, the Hessian matrix at $\boldsymbol{x}^k$ is equal to the Laplacian of the connectivity network at state $\boldsymbol{x}^k$. As a result, the quasi-Newton dynamics \eqref{eq:hesian-newton} for minimizing the piecewise quadratic function $\Phi(\boldsymbol{x})$ become
\begin{align}\nonumber
x_i^{k+1}=x_i^k-t_k\frac{|N_i(\boldsymbol{x}^k)|x_i^k-\!\sum_{j\in N_i(\boldsymbol{x}^k)}\!x^k_j}{|N_i(\boldsymbol{x}^k)|+1}.
\end{align}
In particular, for a sufficiently small choice of step size $t_k$, the function $\Phi(\boldsymbol{x})$ serves as a semi-Lyaponov function. Note that for unit step size $t_k=1$, the above dynamics can be explicitly written as
\begin{align}\label{eq:unit-quasi}
\boldsymbol{x}_i^{k+1}=\frac{\sum_{j\in N_i(\boldsymbol{x}^k)\cup\{i\}}x_j^k}{|N_i(\boldsymbol{x}^k)|+1}, \ \ i\in [n].
\end{align}
As a result, the dynamics of the complement HK model can be viewed as iterates of a quasi-Newton method with a unit step size to minimize $\Phi(\boldsymbol{x})$. Of course, for $t_k=1$, there is no reason why $\Phi(\boldsymbol{x})$ should serve as a Lyapunov function, unless the initial point of the dynamics is sufficiently close to a minimizer of $\Phi(\cdot)$ (in which case the exponentially fast convergence of the quasi-Newton method with $t_k=1$ is guaranteed). Nevertheless, the function $\Phi(\boldsymbol{x})$ is still very useful, as it globally guides the dynamics based on quasi-Newton iterates. In particular, the set of minimizers of $\Phi(\boldsymbol{x})$ characterize the equilibrium points of \eqref{eq:unit-quasi}.\footnote{In fact, using a sorted vector Lyapunov function $V(\boldsymbol{x})=sort(\{|x_i-x_j|, i\neq j\})$, one can show that the dynamics \eqref{eq:unit-quasi} do converge as $V(\boldsymbol{x})$ decreases lexicographically after each iteration.} This is because if $\lim_{k}\boldsymbol{x}^k=x^*$, we must have $\boldsymbol{x}^*=\lim_{k}\boldsymbol{x}^{k+1}=\boldsymbol{x}^*-\lim_{k}(G^k)^{-1}\nabla \Phi(\boldsymbol{x}^k)$, where here $(G^k)^{-1}=diag(\frac{1}{|N_1(\boldsymbol{x}^k)|+1},\ldots,\frac{1}{|N_n(\boldsymbol{x}^k)|+1})$. This implies that $\lim_{k\to \infty}G_k^{-1}\nabla \Phi(\boldsymbol{x}^k)=0$. As the entries of $G^{-1}_k$ are uniformly bounded below by $\frac{1}{n+1}$, we must have $\lim_{k\to \infty}\nabla \Phi(\boldsymbol{x}^k)=0$, and the result follows from the convexity of $\Phi(\cdot)$.
\end{example}

\section{Continuous-Time Constrained Saddle-Point Dynamics}\label{sec:continuous-saddle}

In this section, we extend our discrete-time saddle-point dynamics to their continuous-time counterparts and show how they can be leveraged to establish Lyapunov stability of state-dependent network dynamics. Here, because of the continuity of the time index $t\in [0,\infty)$, an edge connectivity between a pair of agents $(i,j)$ is no longer a binary event $\lambda_{ij}\in \{0,1\}$, but rather a continuous weight function of time $\lambda_{ij}(t)\in [0, 1]$. Thus, $\lambda_{ij}(t)$ can be viewed as a connectivity strength between agents $i$ and $j$ at time $t$ such that the maximum influence that the two agents can have on each other is $1$ (i.e., they are fully connected) and the minimum influence is $0$ (i.e., they are fully disconnected). 

Motivated by the method of change of variables for discrete time dynamics in Section \ref{subsec:change-variable}, we state our results for continuous-time dynamics in a more general form wherein the agents' states are transformed from $x_i$ to $p_i(x_i)$, and the network variables are transformed from $\lambda_{ij}$ to $q_{ij}(\lambda_{ij})$. Here, we assume that $p_i(\cdot),q_{ij}(\cdot)$ are continuous and nondecreasing functions such that $p_i(0)=q_{ij}(0)=0, \forall i,j$. In particular, we let the Lagrangian function have a more general form of $L(p(\boldsymbol{x}),q(\boldsymbol{\lambda}))$, as long as its partial derivatives exist, and be convex with respect to its first argument $p(\boldsymbol{x})=(p_1(x_1),\ldots,p_n(x_n))^T$ and concave with respect to its second argument $q(\boldsymbol{\lambda})=\big(q_{ij}(\lambda_{ij}), i\neq j\big)^T$.

\begin{remark}\label{rem:special-standard}
A special case of the above setting occurs when $p_i(x_i)=x_i, q_{ij}(\lambda_{ij})=\lambda_{ij}$ are identity functions, and $L(\boldsymbol{x},\boldsymbol{\lambda})=\sum_if_i(x_i)+\sum_{i\neq j}\lambda_{ij}g_{ij}(x_i,x_j), \boldsymbol{\lambda}\ge 0, \boldsymbol{x}\in \mathbb{R}^n$. It is clear that for the convex measurement functions $g_{ij},f_i$, the standard Lagrangian function $L(\boldsymbol{x},\boldsymbol{\lambda})$ is convex with respect to $\boldsymbol{x}$, and concave (i.e., linear) with respect to $\boldsymbol{\lambda}$. 
\end{remark}

To introduce a general class of continuous-time, state-dependent network dynamics, let us consider the following static constrained saddle-point problem:
\begin{align}\label{eq:saddle-point-general}
\min_{\boldsymbol{x}\in \mathbb{R}^n}\max_{\boldsymbol{\lambda}\in[0,1]^{n(n-1)}}L(p(\boldsymbol{x}),q(\boldsymbol{\lambda})).
\end{align} 
To solve the above static saddle-point problem using continuous-time dynamics, we use the idea of a \emph{gradient flow}, which was initially introduced in the seminal work of Arrow, Hurwicz, and Uzawa \cite{arrow1958studies} and subsequently used in devising primal-dual algorithms for solving constrained optimization problems \cite{feijer2010stability}. However, to adopt those dynamics for our more general setting \eqref{eq:saddle-point-general}, which has both lower- and upper-bound constraints on the dual variable $\boldsymbol{\lambda}$, we introduce the following generalized gradient flow dynamics:   
\begin{align}\label{eq:flow-constrained}
&\dot{\boldsymbol{x}}(t)=-\nabla_{p(\boldsymbol{x})}L\big(p(\boldsymbol{x}),q(\boldsymbol{\lambda})\big)\cr
&\dot{\boldsymbol{\lambda}}(t)=\big[\nabla_{q(\boldsymbol{\lambda})}L\big(p(\boldsymbol{x}),q(\boldsymbol{\lambda})\big)\big]^{[0,1]}_{\boldsymbol{\lambda}},
\end{align}
where $\nabla_{p(\boldsymbol{x})}L(p(\boldsymbol{x}),q(\boldsymbol{\lambda})):=\big(\frac{\partial L(p(\boldsymbol{x}),q(\boldsymbol{\lambda}))}{\partial p_1(x_1)},\ldots,\frac{\partial L(p(\boldsymbol{x}),q(\boldsymbol{\lambda}))}{\partial p_n(x_n)}\big)^T$ (the gradient vector $\nabla_{q(\boldsymbol{\lambda})}L\big(p(\boldsymbol{x}),q(\boldsymbol{\lambda})\big)$ is defined analogously), and $[a]^{[0,1]}_{\lambda}$ denotes the projection of the network dynamics to the unit interval, i.e.,
\begin{align}\nonumber
[a]^{[0,1]}_{\lambda}=\begin{cases}\min\{0,a\}, \ \ \ &\mbox{if} \ \lambda=1\\
a &\mbox{if} \ 0<\lambda<1\\
\max\{0,a\} &\mbox{if} \ \lambda=0.
\end{cases}
\end{align}
When $a$ is a vector rather than a scalar, the above projection is taken coordinatewise. The reason for introducing such a projection is that if for a pair of agents $(i,j)$ we have $\lambda_{ij}(t)\in(0,1)$, the edge variable $\lambda_{ij}(t)$ has not hit the boundary points $\{0,1\}$, and it can freely increase or decrease without violating the box constraint $\lambda_{ij}(t) \in[0,1]$. However, if $\lambda_{ij}(t)=1$, then this edge variable is only allowed to decrease, and thus $\dot{\lambda}_{ij}(t)\leq 0$. Therefore, if $\frac{\partial L(p(\boldsymbol{x}),q(\boldsymbol{\lambda}))}{\partial q_{ij}(\lambda_{ij})}\ge 0$, we set $\dot{\lambda}_{ij}(t)=0$ to block any further increase of $\lambda_{ij}(t)$. Similarly, if $\lambda_{ij}(t)=0$ and $\frac{\partial L(p(\boldsymbol{x}),q(\boldsymbol{\lambda}))}{\partial q_{ij}(\lambda_{ij})}\leq 0$, we set $\dot{\lambda}_{ij}(t)=0$ to block any further decrease of $\lambda_{ij}(t)$. Therefore, \eqref{eq:flow-constrained} provides a fairly general class of continuous-time, state-dependent network dynamics in which the strength of the edge connectivity changes dynamically as a function of the state variables. 

\begin{remark}
It is worth noting that in the special setting of Remark \ref{rem:special-standard}, the network dynamics in \eqref{eq:flow-constrained} decompose to a simple form of $\dot{\lambda}_{ij}(t)=\big[g_{ij}(x_i(t),x_j(t))\big]^{[0,1]}_{\lambda_{ij}}, \forall i,j$. Thus, the more distant two agents $i$ and $j$ are from each other (i.e., the larger the measurement value $g_{ij}(x_i,x_j)$), the faster the edge connectivity between them grows and until it achieves its maximum connectivity at $1$. That is analogous to the emergence of an edge between agents $i$ and $j$ if $g_{ij}(x_i,x_j)>0$ in the discrete-time setting.     
\end{remark}

To establish the Lyapunov stability of the continuous-time state-network dynamics \eqref{eq:flow-constrained}, let $(\bar{\boldsymbol{x}},\bar{\boldsymbol{\lambda}})$ be a saddle-point solution to \eqref{eq:saddle-point-general}. Note that by continuity and the convex-concave property of the Lagrangian function, the existence of a saddle-point in \eqref{eq:saddle-point-general} is always guaranteed. Let us define $P_i(x_i):=\int_{\bar{x}_i}^{x_i}p_i(s)ds$ and $Q_{ij}(\lambda_{ij}):=\int_{\bar{\lambda}_{ij}}^{\lambda_{ij}}q_{ij}(s)ds$, where we note that by continuity and the monotonicity of $p_i,q_{ij}$, the functions $P_i$ and $Q_{ij}$ are differentiable convex functions. Now we are ready to state the main result of this section.

\begin{theorem}\label{eq:thm-continuous}
Let $L(p(\boldsymbol{x}),q(\boldsymbol{\lambda}))$ be a convex function in $p(\boldsymbol{x})$ and a concave function in $q(\boldsymbol{\lambda})$. Then, the continuous-time state-dependent network dynamics \eqref{eq:flow-constrained} are Lyapunov stable. In particular,   
\begin{align}\nonumber
V(\boldsymbol{x},\boldsymbol{\lambda}):=\sum_{i=1}^{n}D_{P_i}(x_i,\bar{x}_i)+\sum_{i\neq j}D_{Q_{ij}}(\lambda_{ij},\bar{\lambda}_{ij})
\end{align}
serves as a Lyapunov function for the dynamics \eqref{eq:flow-constrained}, where $D_{\phi}(u,v)=\phi(u)-\phi(v)-\phi'(v)(u-v)$ denotes the Bregman divergence with respect to the convex function $\phi(\cdot)$. 
\end{theorem}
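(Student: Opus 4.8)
The plan is to show that $V$ is nonnegative and nonincreasing along the trajectories of \eqref{eq:flow-constrained}, with the saddle-point inequalities supplying the final sign. First I would record nonnegativity: since $p_i$ and $q_{ij}$ are nondecreasing, their antiderivatives $P_i,Q_{ij}$ are convex, so each Bregman term $D_{P_i}(x_i,\bar x_i)\ge 0$ and $D_{Q_{ij}}(\lambda_{ij},\bar\lambda_{ij})\ge 0$; hence $V\ge 0$ with $V(\bar{\boldsymbol{x}},\bar{\boldsymbol{\lambda}})=0$, and after a constant shift (as in the Remark following the definition of a Lyapunov function) $V$ is a bona fide nonnegative candidate. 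I would also note that, because the projected $\boldsymbol{\lambda}$-field is discontinuous, trajectories are understood in the Carathéodory sense, that $[\,\cdot\,]^{[0,1]}_{\boldsymbol{\lambda}}$ keeps $\boldsymbol{\lambda}(t)\in[0,1]^{n(n-1)}$, and that the differentiation of $V$ below is valid for almost every $t$, which is enough for the monotonicity conclusion.

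Next I would differentiate $V$ along a trajectory. Using $P_i'(x_i)=p_i(x_i)$, $Q_{ij}'(\lambda_{ij})=q_{ij}(\lambda_{ij})$ and the chain rule,
\[
\dot V=\sum_{i}\big(p_i(x_i)-p_i(\bar x_i)\big)\dot x_i+\sum_{i\neq j}\big(q_{ij}(\lambda_{ij})-q_{ij}(\bar\lambda_{ij})\big)\dot\lambda_{ij}.
\]
Writing $\mathbf{p}=p(\boldsymbol{x}),\ \bar{\mathbf{p}}=p(\bar{\boldsymbol{x}}),\ \mathbf{q}=q(\boldsymbol{\lambda}),\ \bar{\mathbf{q}}=q(\bar{\boldsymbol{\lambda}})$ and substituting \eqref{eq:flow-constrained}, the first sum equals $\sum_i\big(p_i(\bar x_i)-p_i(x_i)\big)\frac{\partial L}{\partial p_i}(\mathbf{p},\mathbf{q})=\nabla_{p(\boldsymbol{x})}L(\mathbf{p},\mathbf{q})^T(\bar{\mathbf{p}}-\mathbf{p})$, which by convexity of $L$ in its first argument and the gradient inequality is at most $L(\bar{\mathbf{p}},\mathbf{q})-L(\mathbf{p},\mathbf{q})$.

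The crux is the $\boldsymbol{\lambda}$-sum together with the projection. I would prove the pointwise inequality
\[
\big(q_{ij}(\lambda_{ij})-q_{ij}(\bar\lambda_{ij})\big)\,\Big[\tfrac{\partial L}{\partial q_{ij}}(\mathbf{p},\mathbf{q})\Big]^{[0,1]}_{\lambda_{ij}}\ \le\ \big(q_{ij}(\lambda_{ij})-q_{ij}(\bar\lambda_{ij})\big)\,\tfrac{\partial L}{\partial q_{ij}}(\mathbf{p},\mathbf{q}),
\]
that is, the projection can only decrease this product. This is a three-case check: if $\lambda_{ij}\in(0,1)$ there is equality; if $\lambda_{ij}=1$ then $\bar\lambda_{ij}\le 1$, so monotonicity of $q_{ij}$ gives $q_{ij}(\lambda_{ij})-q_{ij}(\bar\lambda_{ij})\ge 0$ while the projection replaces a nonnegative derivative by $0$, lowering the product; the case $\lambda_{ij}=0$ is symmetric, with $q_{ij}(\lambda_{ij})-q_{ij}(\bar\lambda_{ij})\le 0$ and a nonpositive derivative replaced by $0$. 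Summing over $i\neq j$ and invoking concavity of $L$ in its second argument (the supergradient inequality) bounds the $\boldsymbol{\lambda}$-sum by $L(\mathbf{p},\mathbf{q})-L(\mathbf{p},\bar{\mathbf{q}})$. Adding the two bounds telescopes: $\dot V\le L(\bar{\mathbf{p}},\mathbf{q})-L(\mathbf{p},\bar{\mathbf{q}})$, and the saddle-point inequalities $L(\bar{\mathbf{p}},\mathbf{q})\le L(\bar{\mathbf{p}},\bar{\mathbf{q}})\le L(\mathbf{p},\bar{\mathbf{q}})$ yield $\dot V\le 0$, so $V$ is a Lyapunov function for \eqref{eq:flow-constrained}.

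\textbf{Main obstacle.} The delicate part is the projection case analysis in the $\boldsymbol{\lambda}$-sum: one must track the signs carefully and use crucially that the reference point $\bar\lambda_{ij}$ itself lies in $[0,1]$, which is exactly why the inequality survives at active box constraints. A secondary, easily dispatched point is the well-posedness of \eqref{eq:flow-constrained} (discontinuous right-hand side, hence Carathéodory/Krasovskii solutions), which I would mention in a sentence rather than develop.
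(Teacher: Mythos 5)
Your proposal is correct and follows essentially the same route as the paper's proof: differentiating the Bregman terms, using the same three-case analysis to drop the projection (relying on $\bar\lambda_{ij}\in[0,1]$ and monotonicity of $q_{ij}$), then applying the convexity/concavity gradient inequalities and the saddle-point property to get $\dot V\le 0$. The only additions are your remarks on nonnegativity of $V$ and Carath\'eodory well-posedness, which the paper omits but which do not change the argument.
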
 
\begin{proof}
Using the definition of the Bregman divergence, for every $i$ and $j$ we have
\begin{align}\nonumber
&\dot{D}_{P_i}(x_i,\bar{x}_i)=\frac{\partial D_{P_i}(x_i,\bar{x}_i) }{\partial x_i}\dot{x}_i=-\big(p_i(x_i)-p_i(\bar{x}_i)\big)\frac{\partial L(p(\boldsymbol{x}),q(\boldsymbol{\lambda}))}{\partial p_i(x_i)},\cr
&\dot{D}_{Q_{ij}}(\lambda_{ij},\bar{\lambda}_{ij})=\frac{\partial D_{Q_{ij}}(\lambda_{ij},\bar{\lambda}_{ij})}{\partial \lambda_{ij}}\dot{\lambda}_{ij}=\big(q_{ij}(\lambda_{ij})-q_{ij}(\bar{\lambda}_{ij})\big)\big[\frac{\partial L(p(\boldsymbol{x}),q(\boldsymbol{\lambda}))}{\partial q_{ij}(\lambda_{ij})}\big]_{\lambda_{ij}}^{[0,1]}.
\end{align} 
Now we can write
\begin{align}\nonumber
\dot{V}(\boldsymbol{x},\boldsymbol{\lambda})\!&=\!-\!\sum_i\!\big(p_i(x_i)\!-\!p_i(\bar{x}_i)\big)\frac{\partial L(p(\boldsymbol{x}),q(\boldsymbol{\lambda}))}{\partial p_i(x_i)}+\sum_{i\neq j}\!\big(q_{ij}(\lambda_{ij})\!-\!q_{ij}(\bar{\lambda}_{ij})\big)\big[\frac{\partial L(p(\boldsymbol{x}),q(\boldsymbol{\lambda}))}{\partial q_{ij}(\lambda_{ij})}\big]_{\lambda_{ij}}^{[0,1]}\cr 
&\leq\!-\!\sum_i\!\big(p_i(x_i)\!-\!p_i(\bar{x}_i)\big)\frac{\partial L(p(\boldsymbol{x}),q(\boldsymbol{\lambda}))}{\partial p_i(x_i)}+\sum_{i\neq j}\!\big(q_{ij}(\lambda_{ij})\!-\!q_{ij}(\bar{\lambda}_{ij})\big)\frac{\partial L(p(\boldsymbol{x}),q(\boldsymbol{\lambda}))}{\partial q_{ij}(\lambda_{ij})}\cr 
&=\big(\frac{\partial L(p(\boldsymbol{x}),q(\boldsymbol{\lambda}))}{\partial p(\boldsymbol{x})}\big)^T(p(\bar{\boldsymbol{x}})-p(\boldsymbol{x}))+\big(\frac{\partial L(p(\boldsymbol{x}),q(\boldsymbol{\lambda}))}{\partial q(\boldsymbol{x})}\big)^T(q(\bar{\boldsymbol{\lambda}})-q(\boldsymbol{\lambda}))\cr 
&\leq L(p(\bar{\boldsymbol{x}}),q(\boldsymbol{\lambda}))-L(p(\boldsymbol{x}),q(\boldsymbol{\lambda}))-\big(L(p(\boldsymbol{x}),q(\bar{\boldsymbol{\lambda}}))-L(p(\boldsymbol{x}),q(\boldsymbol{\lambda}))\big)\cr
&= \Big[L(p(\bar{\boldsymbol{x}}),q(\boldsymbol{\lambda}))-L(p(\bar{\boldsymbol{x}}),q(\bar{\boldsymbol{\lambda}}))\Big]+\Big[L(p(\bar{\boldsymbol{x}}),q(\bar{\boldsymbol{\lambda}}))-L(p(\boldsymbol{x}),q(\bar{\boldsymbol{\lambda}}))\Big]\leq 0,
\end{align}
where in the above derivations the last inequality is due to the definition of the saddle-point, and the second inequality follows from the convexity/concavity of $L(\cdot)$ with respect to its first/second argument. Finally, the first inequality is obtained by considering the following three cases:
\begin{itemize}
\item If $\lambda_{ij}=0$, then $[\frac{\partial L(p(\boldsymbol{x}),q(\boldsymbol{\lambda}))}{\partial q_{ij}(\lambda_{ij})}]_{\lambda_{ij}}^{[0,1]}=\max\{0,\frac{\partial L(p(\boldsymbol{x}),q(\boldsymbol{\lambda}))}{\partial q_{ij}(\lambda_{ij})}\}\ge \frac{\partial L(p(\boldsymbol{x}),q(\boldsymbol{\lambda}))}{\partial q_{ij}(\lambda_{ij})}$, and $q_{ij}(\lambda_{ij})-q_{ij}(\bar{\lambda}_{ij})=q_{ij}(0)-q_{ij}(\bar{\lambda}_{ij})\leq 0$. 
\item If $\lambda_{ij}\in (0,1)$, then $[\frac{\partial L(p(\boldsymbol{x}),q(\boldsymbol{\lambda}))}{\partial q_{ij}(\lambda_{ij})}]_{\lambda_{ij}}^{[0,1]}=\frac{\partial L(p(\boldsymbol{x}),q(\boldsymbol{\lambda}))}{\partial q_{ij}(\lambda_{ij})}$.
\item If $\lambda_{ij}=1$, then $[\frac{\partial L(p(\boldsymbol{x}),q(\boldsymbol{\lambda}))}{\partial q_{ij}(\lambda_{ij})}]_{\lambda_{ij}}^{[0,1]}=\min\{0,\frac{\partial L(p(\boldsymbol{x}),q(\boldsymbol{\lambda}))}{\partial q_{ij}(\lambda_{ij})}\}\leq \frac{\partial L(p(\boldsymbol{x}),q(\boldsymbol{\lambda}))}{\partial q_{ij}(\lambda_{ij})}$, and $q_{ij}(\lambda_{ij})-q_{ij}(\bar{\lambda}_{ij})=q_{ij}(1)-q_{ij}(\bar{\lambda}_{ij})\ge  0$. 
\end{itemize} 
Thus, in either of the above cases, we have 
\begin{align}\nonumber
(q_{ij}(\lambda_{ij})-q_{ij}(\bar{\lambda}_{ij}))[\frac{\partial L(p(\boldsymbol{x}),q(\boldsymbol{\lambda}))}{\partial q_{ij}(\lambda_{ij})}]_{\lambda_j}^{[0,1]}\leq (q_{ij}(\lambda_{ij})-q_{ij}(\bar{\lambda}_{ij}))\frac{\partial L(p(\boldsymbol{x}),q(\boldsymbol{\lambda}))}{\partial q_{ij}(\lambda_{ij})},
\end{align}
and the result follows. 
\end{proof} 

It is worth noting that Theorem \ref{eq:thm-continuous} is a continuous-time counterpart of Theorem \ref{thm:subgradient} in the sense that in both theorems, the Bregman distance of the iterates to a saddle-point serves as a Lyapunov function. However, because of the continuity of the network variables in the continuous-time model, the choice of the step size becomes irrelevant in Theorem \ref{eq:thm-continuous}, while for the discrete-time counterpart, the step sizes should be small enough to guarantee the convergence of the dynamics.  

\section{Simulations}\label{sec:simu}

In this section, we describe several numerical experiments relating to social science to justify our theoretical results. In the first experiment, we considered a class of averaging dynamics with negative weights motivated by the presence of antagonistic relations in social groups such as Altafini's averaging dynamics \cite{altafini2012dynamics}. The dynamics that we consider have the form of $x_i^{k+1}=\frac{|N_i(\boldsymbol{x}^k)|}{|N_i(\boldsymbol{x}^k)|+1}x_i^k-\frac{\sum_{j\in N_i(\boldsymbol{x}^k)}x^k_j}{|N_i(\boldsymbol{x}^k)|+1}, i\in [n]$, where $N_i(\boldsymbol{x}^k)=\{j\in [n]\!\setminus\!\{i\}: x^k_ix^k_j\leq 1\}$. In other words, two agents $i$ and $j$ are each other's neighbors if the product of their states is less than $1$.\footnote{Similar results can be obtained if one replaces 1 in the definition of the agents' neighborhood with another constant, such as 0, in which case agents communicate only if their states have the opposite sign.} On the left side of Figure \ref{fig-1}, we depict the evolution of these dynamics for $k=25$ iterations and $n=1000$ agents, where the agents' initial states are distributed uniformly at random in the interval $[-1, 1]$. As can be seen, the agents eventually polarized to two groups and diverged to $+\infty$ and $-\infty$. On the other hand, it is easy to see that the above dynamics can be replicated with Theorem \ref{thm:majorizing} if the symmetric measurements $g_{ij}(x_i,x_j)=x_ix_j-1$ and $f_i(x_i)=\frac{x_i^2}{2}$ are chosen. The implication is that $V(x)=\sum_{i=1}^{n}\frac{x_i^2}{2}+\sum_{i,j}\min\{x_ix_j-1,0\}$ must decrease along the trajectories of the dynamics, as is shown on the right side of Figure \ref{fig-1}. However, we note that the measurements $g_{ij}(x_i,x_j)=x_ix_j-1$ are neither convex nor bounded below by a global constant. In fact, that is the main reason why the monotonically decreasing function $V(x)$ cannot guarantee the convergence of the dynamics. Therefore, the convexity and boundedness of measurements in Theorem \ref{thm:majorizing} are necessary for the convergence of the trajectories.      

\begin{figure}[!tbp]
  \centering
  \begin{minipage}[t]{0.48\textwidth}
    \includegraphics[width=\textwidth]{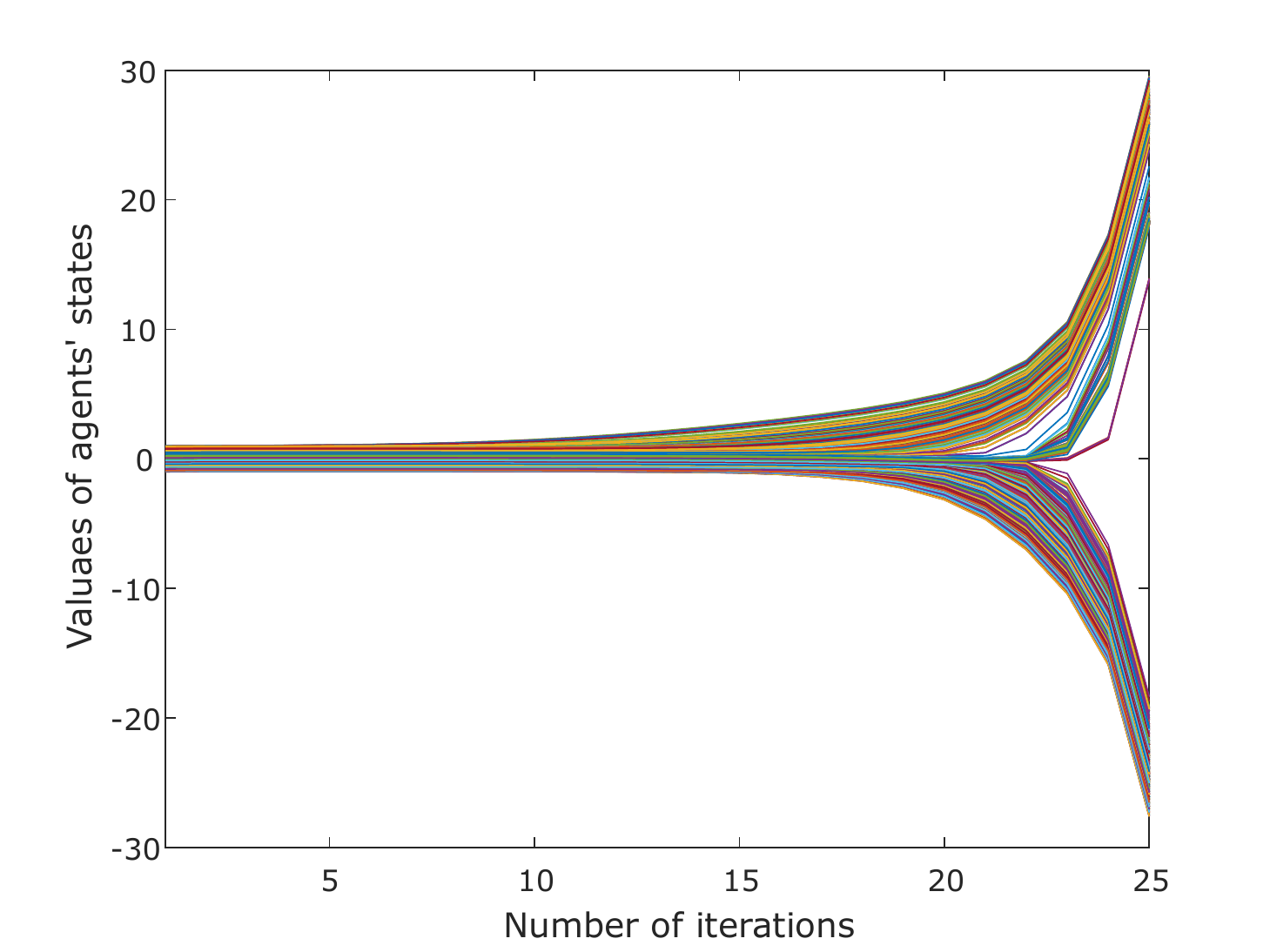}
    \hspace{-1cm}
  \end{minipage}
  \hfill
  \begin{minipage}[t]{0.48\textwidth}
    \includegraphics[width=\textwidth]{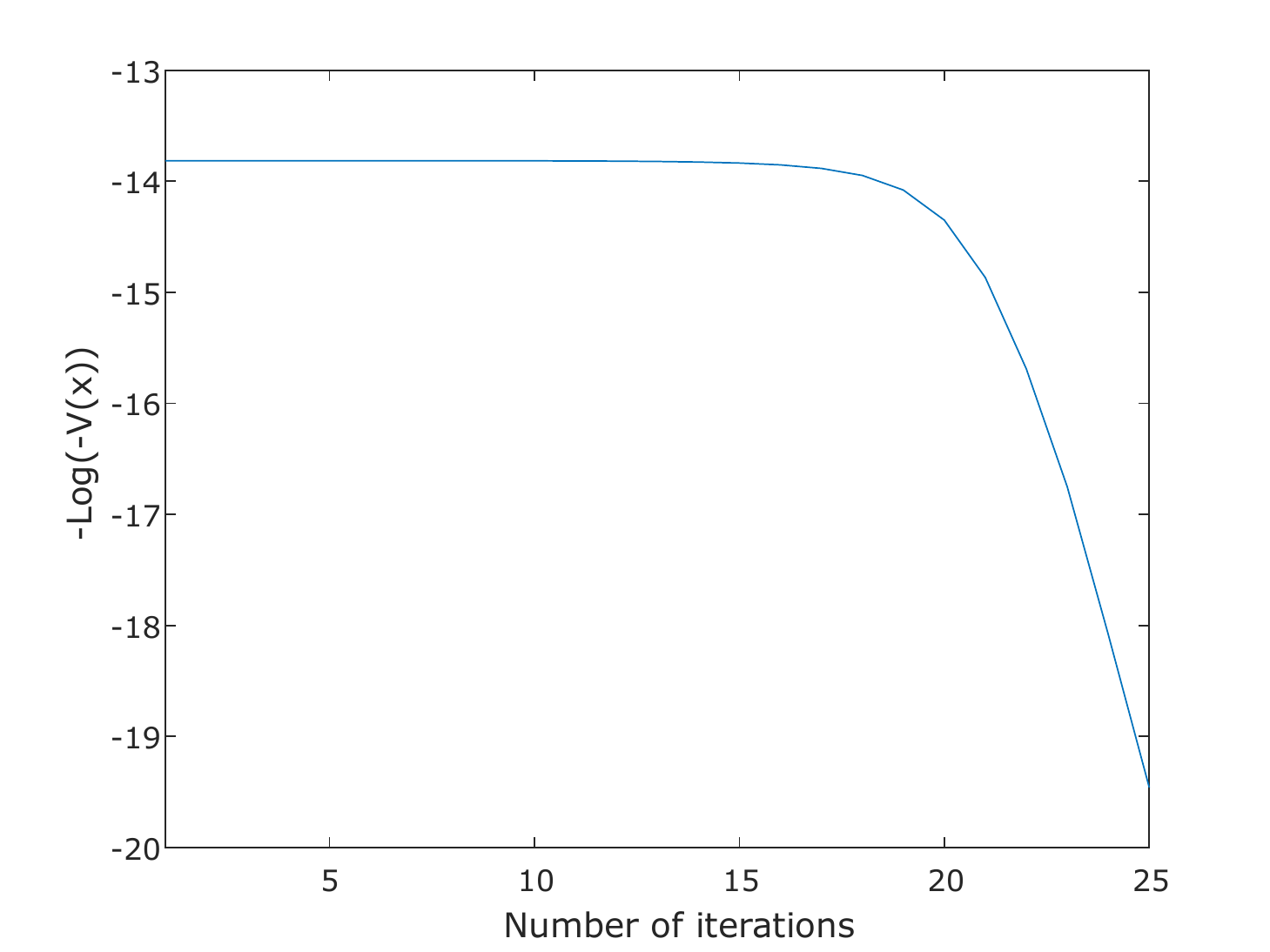}
  \end{minipage}\vspace{-0.3cm} \caption{\footnotesize{A diverging averaging dynamic with negative weights and its associated energy function.}}\label{fig-1}
\end{figure}

In the second experiment, we considered the dynamics \eqref{eq:unit-quasi} in Example \ref{ex:saddle-hk}. Here, the symmetric measurements are $g_{ij}(x_i,x_j)=\frac{(x_i-x_j)^2}{2}-\frac{20^2}{2}$ (i.e., $\epsilon_{ij}=20, \forall i\neq j$), and the set of neighbors of agent $i$ at iteration $k$ is given by $N_i(\boldsymbol{x}^k)=\{j: |x^k_i-x^k_j|\ge 20\}$. As was shown in Example \ref{ex:saddle-hk}, those dynamics are the complement of the homogeneous HK dynamics. The left side of Figure \ref{fig-2} shows the results of simulating the trajectories for $k=120$ iterations and $n=1000$ agents, with initial states distributed uniformly at random in the interval $[0, 100]$. As can be seen, although the dynamics eventually converge to several clusters, however, there is an oscillating pattern in the trajectories due to the conflicting objectives of the network structure and the agents' states. In particular, Theorem \ref{thm:Newton} suggests that the convex function $V(\boldsymbol{x})=\frac{1}{2}\sum_{i,j}\max\{\frac{(x_i-x_j)^2}{2}-\frac{20^2}{2}, 0\}$ serves as a semi-Lyapunov function that almost always decreases along the trajectories. That phenomenon is shown on the right side of Figure \ref{fig-2}, where $V(\boldsymbol{x})$ always decreases with a small jump at iteration $k=3$. In particular, the dynamics converge to a minimizer of $V(\boldsymbol{x})$.              

\begin{figure}[t]
  \centering
  \begin{minipage}[t]{0.48\textwidth}
    \includegraphics[width=\textwidth]{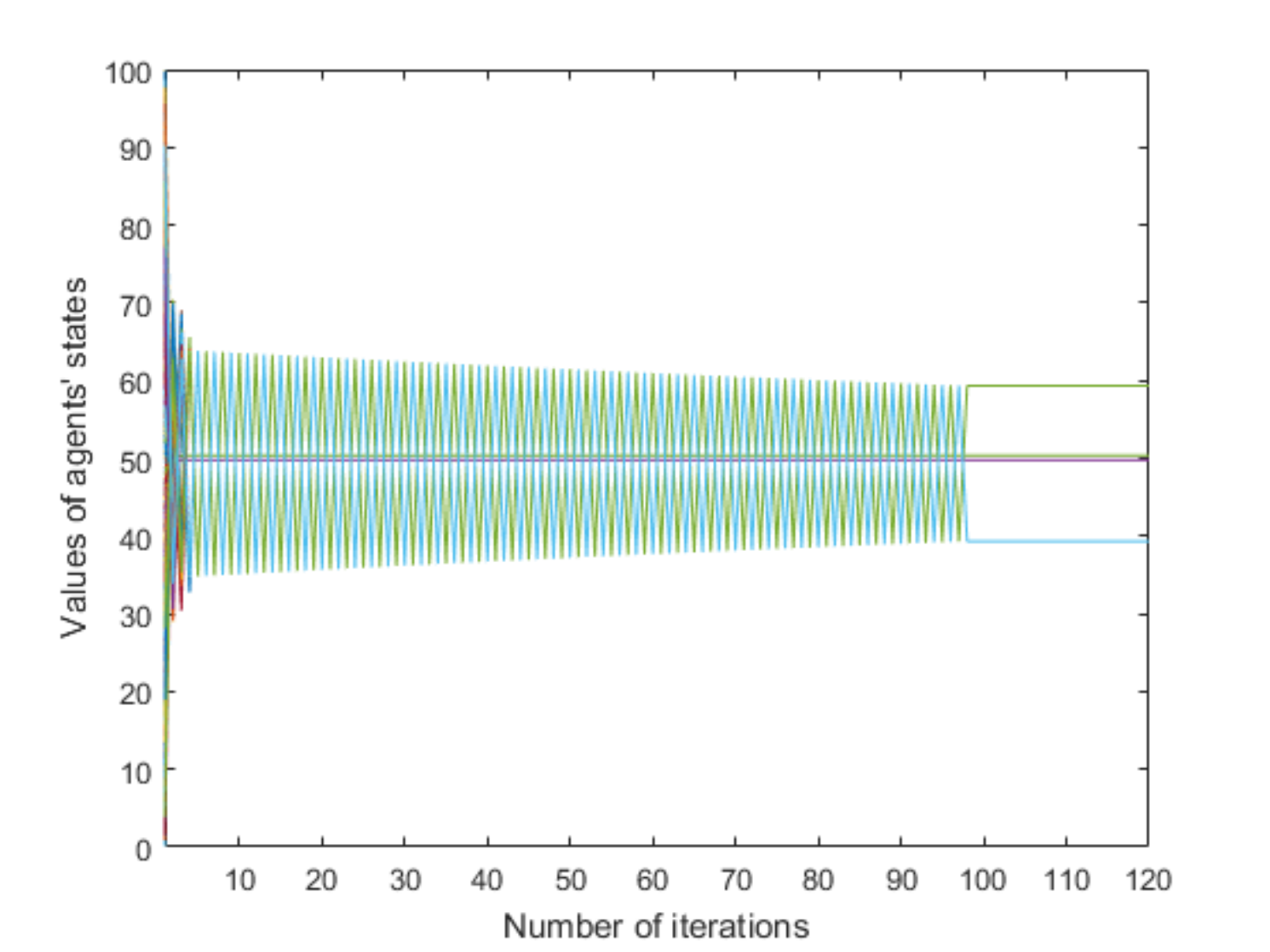}
    \vspace{-0.5cm}
  \end{minipage}
  \hfill
  \begin{minipage}[t]{0.48\textwidth}
    \includegraphics[width=\textwidth]{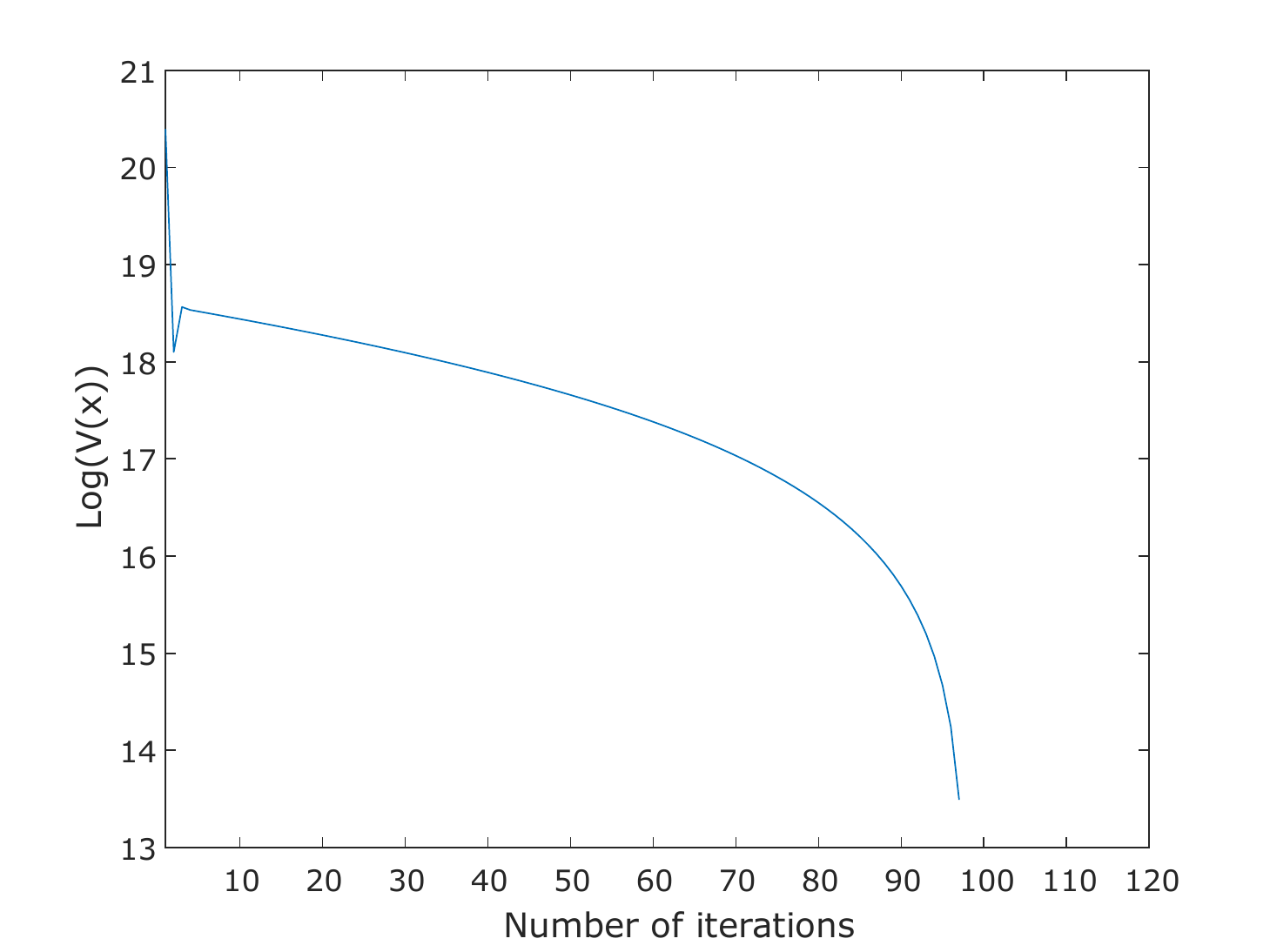}
  \end{minipage} \vspace{-0.3cm} \caption{\footnotesize{Dynamics of the complement HK model and its associated semi-Lyapunov function.}}\label{fig-2}
\end{figure}

In our last experiment, we considered \emph{heterogeneous} HK dynamics with associated dynamics $x_i^{k+1}=\frac{\sum_{j: |x^k_i-x^k_j|\leq \epsilon_i}x^k_j}{|\{j: |x^k_i-x^k_j|\leq \epsilon_i\}|}, i\in [n]$, whose Lyapunov stability and convergence have remained open for more than a decade \cite{mirtabatabaei2012opinion,chazelle2016inertial}. We can replicate the heterogeneous HK dynamics by considering the asymmetric measurements  $g_{ij}(x_i,x_j)=\frac{(x_i-x_j)^2}{2}-\frac{\epsilon^2_i}{2}$ in the statement of Theorem \ref{thm:asymmetric}. Although we do not know a choice of Bregman map $f(x)$ that can satisfy the assumption of Theorem \ref{thm:asymmetric} with respect to such asymmetric measurements, in Figure \ref{fig-3} we show that even a simple quadratic map performs quite well for the heterogeneous HK model. More precisely, based on the quadratic Bregman map $D_f(\boldsymbol{x},\boldsymbol{y})=\frac{1}{2}\|\boldsymbol{x}-\boldsymbol{y}\|^2$, where $f(\boldsymbol{x})=\sum_{i=1}^{n}\frac{x_i^2}{2}$, Theorem \ref{thm:asymmetric} suggests the Lyapunov candidate $V(\boldsymbol{x})=\sum_{i,j}\min\{\frac{(x_i-x_j)^2}{2}-\frac{\epsilon^2_i}{2},0\}$ for the heterogeneous HK dynamics. We simulated those dynamics for $k=300$ iterations over a set of $n=1000$ agents whose initial states were distributed uniformly at random in the interval $[0, 100]$. In the top two graphs in Figure \ref{fig-3}, the confidence bounds of agents $\epsilon_i, i\in [n]$ had been selected uniformly at random from the interval $[0, 10]$. In the bottom two graphs in Figure \ref{fig-3}, the confidence bounds had been generated uniformly at random from the larger interval $[0, 50]$. In both cases, one can see that the proposed function $V(\boldsymbol{x})$ performed quite well and almost always decreased along the trajectories of the heterogeneous HK. That suggests that addition of a small correction term to the proposed $V(\boldsymbol{x})$ might turn this function into a valid Lyapunov function for the heterogeneous HK dynamics.       

\begin{figure}[!tbp]
  \centering
  \begin{minipage}[t]{0.48\textwidth}
    \includegraphics[width=\textwidth]{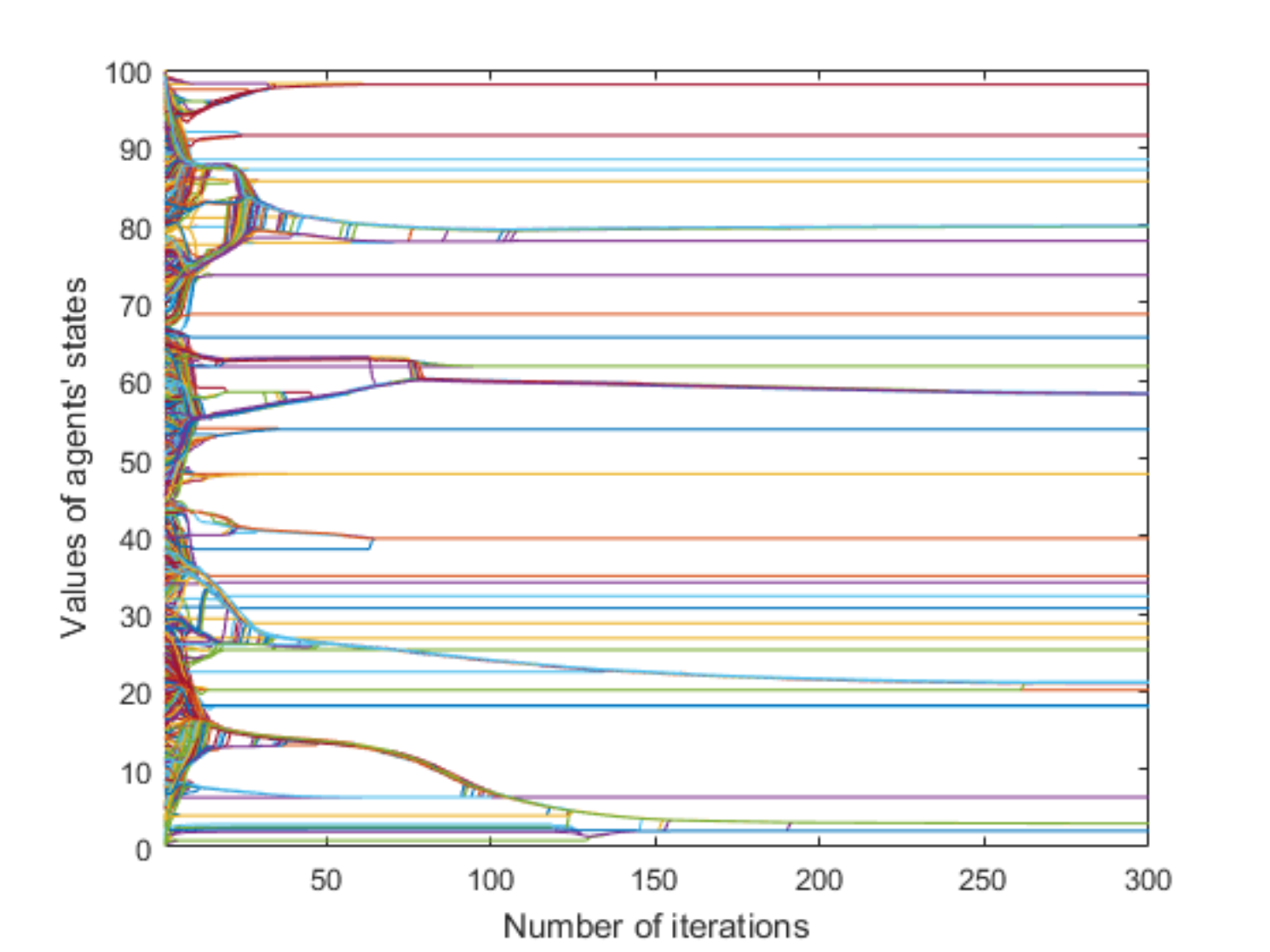}
    \vspace{-0.5cm}
  \end{minipage}
  \hfill
  \begin{minipage}[t]{0.48\textwidth}
    \includegraphics[width=\textwidth]{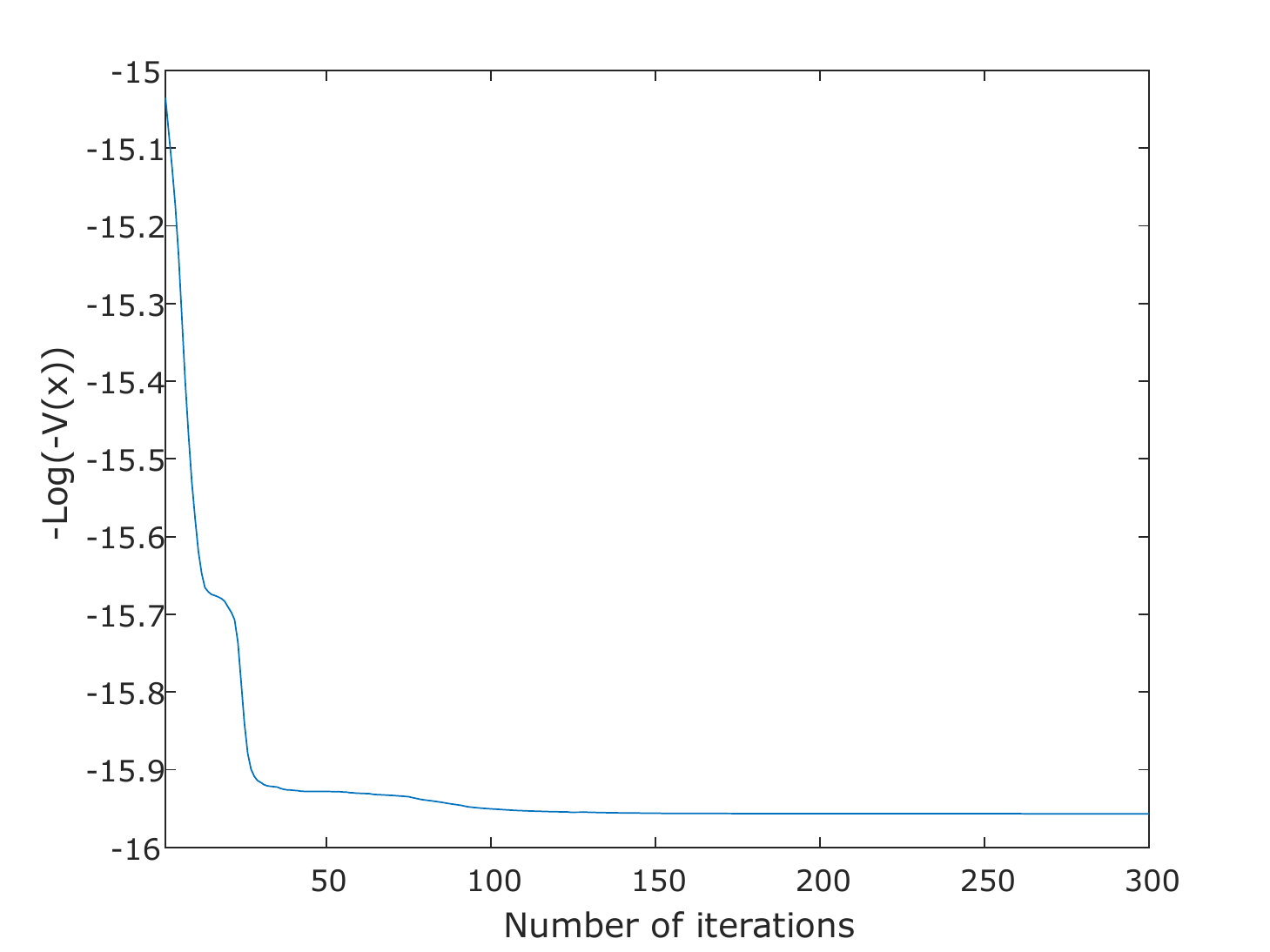}
    \vspace{-0.5cm} 
  \end{minipage}\\
  \begin{minipage}[t]{0.48\textwidth}
    \includegraphics[width=\textwidth]{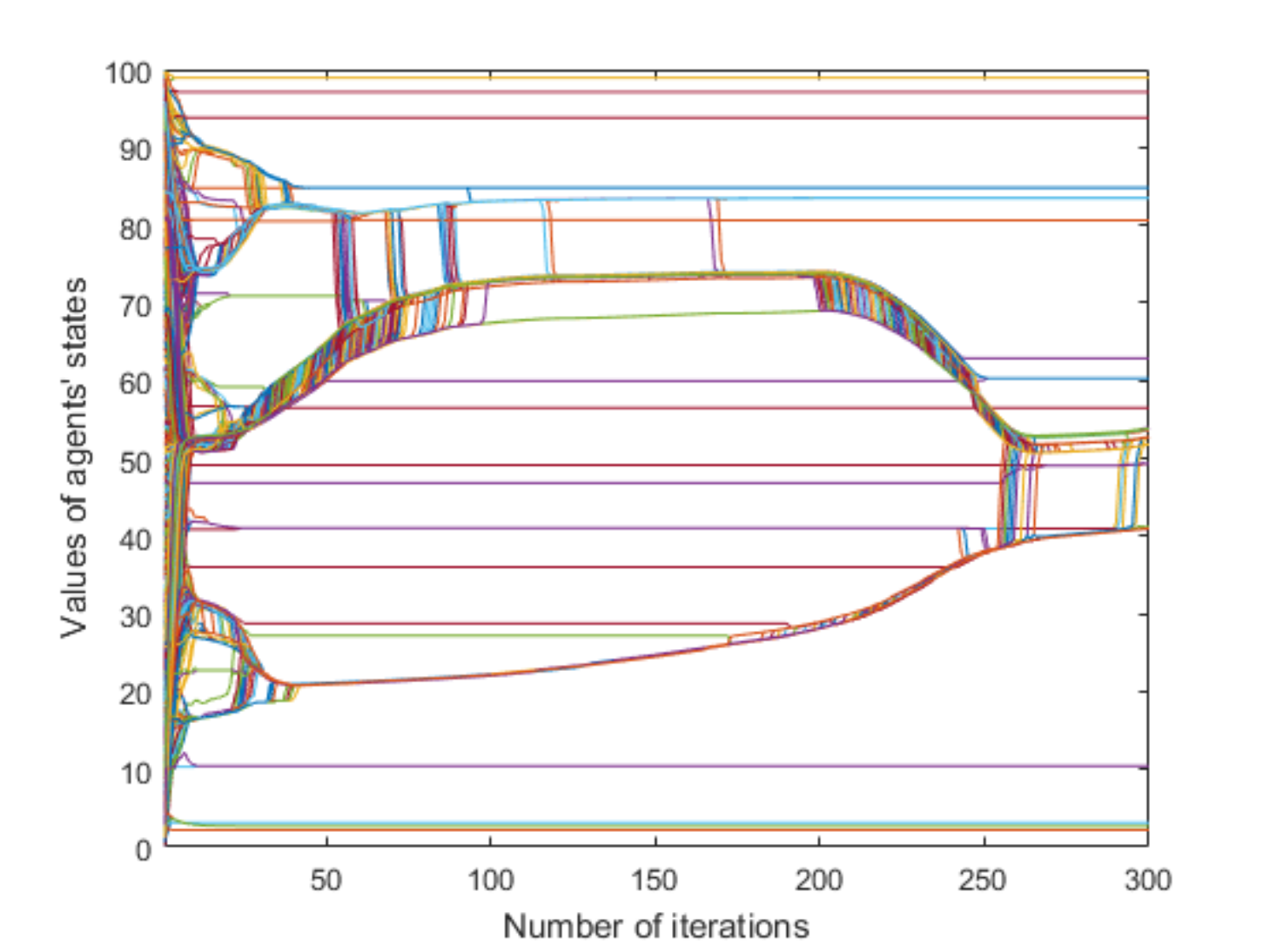}
    \vspace{-0.5cm}
  \end{minipage}
  \hfill
  \begin{minipage}[t]{0.48\textwidth}
    \includegraphics[width=\textwidth]{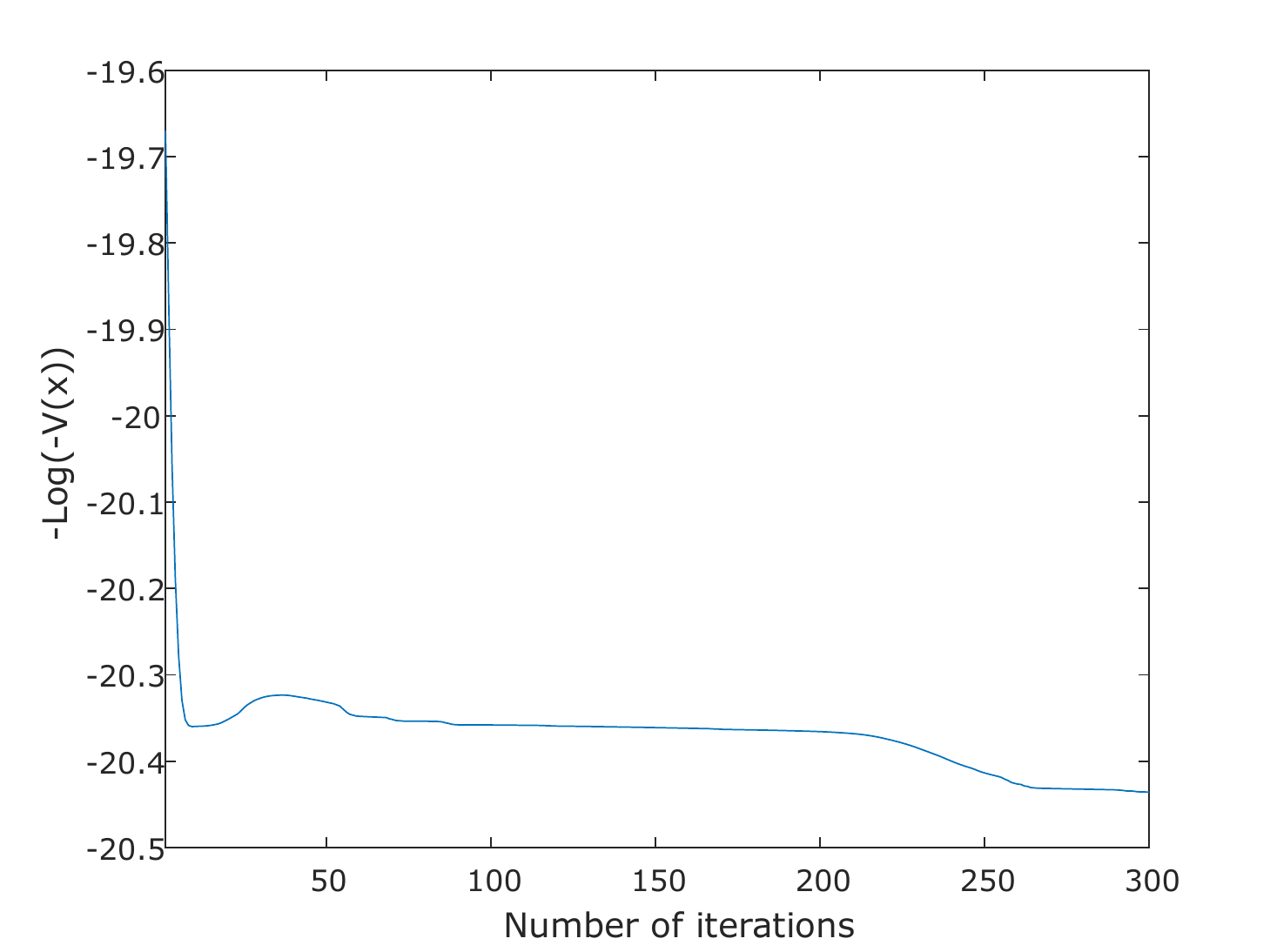}
  \end{minipage}\vspace{-0.3cm} \caption{\footnotesize{Evolution of the heterogeneous HK dynamics and their associated $V(\boldsymbol{x})$ values with uniformly generated confidence bounds from intervals $[0, 10]$ (top figures) and $[0, 50]$ (bottom figures).}}\label{fig-3}
\end{figure}

\section{Conclusions}\label{sec:conclusion}
In this paper, we developed a new framework for the stability analysis of multiagent state-dependent network dynamics. We showed that the co-evolution of the network and the state dynamics could be cast as a primal-dual optimization algorithm for a nonlinear program in which the primal updates capture the state dynamics, and the dual updates capture the network evolution. In particular, the constrained Lagrangian function serves as a Lyapunov function for the state-network dynamics. We considered our framework under two different settings: i) when the network and state dynamics are aligned, and ii) when the network and state dynamics have conflicting objectives. In the first case, we showed that the application of the BCD method with a change of variables could generate a variety of interesting state-dependent network dynamics. In particular, we provided a new technique for handling asymmetry in the network dynamics. In the second case, we reduced the stability of the state-network dynamics to a zero-sum game between the network player and the state player. This approach allowed us to establish the Lyapunov stability of multiagent systems by using saddle-point dynamics and, in particular, by using the subgradient method and the quasi-Newton method. Finally, we extended our results to a continuous-time model and provided a general class of continuous-time, state-dependent network dynamics in terms of generalized gradient flow. 

As a future direction of research, one could use \emph{augmented} Lagrangian functions or apply other optimization techniques to generate a broader class of stable state-dependent network dynamics. Moreover, in our analysis, we mainly used a quadratic upper approximation to derive the state updates. Thus, a natural extension would be to use other function approximations that include the quadratic approximation as their particular case, or to use approximations that are suitable for specific applications. 

\bibliographystyle{IEEEtran}
\bibliography{thesisrefs}

\begin{thebibliography}{10}
\providecommand{\url}[1]{#1}
\csname url@samestyle\endcsname
\providecommand{\newblock}{\relax}
\providecommand{\bibinfo}[2]{#2}
\providecommand{\BIBentrySTDinterwordspacing}{\spaceskip=0pt\relax}
\providecommand{\BIBentryALTinterwordstretchfactor}{4}
\providecommand{\BIBentryALTinterwordspacing}{\spaceskip=\fontdimen2\font plus
\BIBentryALTinterwordstretchfactor\fontdimen3\font minus
  \fontdimen4\font\relax}
\providecommand{\BIBforeignlanguage}[2]{{%
\expandafter\ifx\csname l@#1\endcsname\relax
\typeout{** WARNING: IEEEtran.bst: No hyphenation pattern has been}%
\typeout{** loaded for the language `#1'. Using the pattern for}%
\typeout{** the default language instead.}%
\else
\language=\csname l@#1\endcsname
\fi
#2}}
\providecommand{\BIBdecl}{\relax}
\BIBdecl

\bibitem{etesami2019simple}
S.~R. Etesami, ``A simple framework for stability analysis of state-dependent
  networks of heterogeneous agents,'' \emph{SIAM Journal on Control and
  Optimization}, vol.~57, no.~3, pp. 1757--1782, 2019.

\bibitem{grossklags2008secure}
J.~Grossklags, N.~Christin, and J.~Chuang, ``Secure or insure?: {A}
  game-theoretic analysis of information security games,'' in \emph{Proceedings
  of the 17th International Conference on World Wide Web}.\hskip 1em plus 0.5em
  minus 0.4em\relax ACM, 2008, pp. 209--218.

\bibitem{bullo2009distributed}
F.~Bullo, J.~Cortes, and S.~Martinez, \emph{Distributed {C}ontrol of {R}obotic
  {N}etworks: {A} {M}athematical {A}pproach to {M}otion {C}oordination
  {A}lgorithms}.\hskip 1em plus 0.5em minus 0.4em\relax Princeton University
  Press, 2009, vol.~27.

\bibitem{hegselmann2002opinion}
R.~Hegselmann and U.~Krause, ``Opinion dynamics and bounded confidence models,
  analysis and simulation,'' \emph{Journal of Artificial Societies and Social
  Simulation}, vol.~5, no.~3, pp. 1--33, 2002.

\bibitem{dimarogonas2008decentralized}
D.~V. Dimarogonas and K.~H. Johansson, ``Decentralized connectivity maintenance
  in mobile networks with bounded inputs,'' in \emph{IEEE International
  Conference on Robotics and Automation}, 2008, pp. 1507--1512.

\bibitem{zavlanos2007potential}
M.~M. Zavlanos and G.~J. Pappas, ``Potential fields for maintaining
  connectivity of mobile networks,'' \emph{IEEE Transactions on Robotics},
  vol.~23, no.~4, pp. 812--816, 2007.

\bibitem{rajapakse2011dynamics}
I.~Rajapakse, M.~Groudine, and M.~Mesbahi, ``Dynamics and control of
  state-dependent networks for probing genomic organization,''
  \emph{Proceedings of the National Academy of Sciences}, vol. 108, no.~42, pp.
  17\,257--17\,262, 2011.

\bibitem{cherukuri2017role}
A.~Cherukuri, E.~Mallada, S.~Low, and J.~Cortes, ``The role of convexity in
  saddle-point dynamics: {L}yapunov function and robustness,'' \emph{IEEE
  Transactions on Automatic Control}, vol.~63, no.~8, pp. 2449--2464, 2017.

\bibitem{mesbahi2005state}
M.~Mesbahi, ``On state-dependent dynamic graphs and their controllability
  properties,'' \emph{IEEE Transactions on Automatic Control}, vol.~50, no.~3,
  pp. 387--392, 2005.

\bibitem{awad2018time}
A.~Awad, A.~Chapman, E.~Schoof, A.~Narang-Siddarth, and M.~Mesbahi,
  ``Time-scale separation in networks: {S}tate-dependent graphs and consensus
  tracking,'' \emph{IEEE Transactions on Control of Network Systems}, vol.~6,
  no.~1, pp. 104--114, 2018.

\bibitem{nedic2018network}
A.~Nedi{\'c}, A.~Olshevsky, and M.~G. Rabbat, ``Network topology and
  communication-computation tradeoffs in decentralized optimization,''
  \emph{Proceedings of the IEEE}, vol. 106, no.~5, pp. 953--976, 2018.

\bibitem{degroot1974reaching}
M.~H. DeGroot, ``Reaching a consensus,'' \emph{Journal of the American
  Statistical Association}, vol.~69, no. 345, pp. 118--121, 1974.

\bibitem{friedkin1997social}
N.~E. Friedkin and E.~C. Johnsen, ``Social positions in influence networks,''
  \emph{Social Networks}, vol.~19, no.~3, pp. 209--222, 1997.

\bibitem{AVP-RT:17}
A.~V. Proskurnikov and R.~Tempo, ``A tutorial on modeling and analysis of
  dynamic social networks. {Part I},'' \emph{Annual Reviews in Control},
  vol.~43, pp. 65--79, 2017.

\bibitem{olfati2004consensus}
R.~Olfati-Saber and R.~M. Murray, ``Consensus problems in networks of agents
  with switching topology and time-delays,'' \emph{IEEE Transactions on
  Automatic Control}, vol.~49, no.~9, pp. 1520--1533, 2004.

\bibitem{levin2017markov}
D.~A. Levin and Y.~Peres, \emph{Markov {C}hains and {M}ixing {T}imes}.\hskip
  1em plus 0.5em minus 0.4em\relax American Mathematical Soc., 2017, vol. 107.

\bibitem{olshevsky2009convergence}
A.~Olshevsky and J.~N. Tsitsiklis, ``Convergence speed in distributed consensus
  and averaging,'' \emph{SIAM Journal on Control and Optimization}, vol.~48,
  no.~1, pp. 33--55, 2009.

\bibitem{nedic2009distributed}
A.~Nedic, A.~Olshevsky, A.~Ozdaglar, and J.~N. Tsitsiklis, ``On distributed
  averaging algorithms and quantization effects,'' \emph{IEEE Transactions on
  Automatic Control}, vol.~54, no.~11, pp. 2506--2517, 2009.

\bibitem{nedic2015distributed}
A.~Nedi{\'c} and A.~Olshevsky, ``Distributed optimization over time-varying
  directed graphs,'' \emph{IEEE Transactions on Automatic Control}, vol.~60,
  no.~3, pp. 601--615, 2015.

\bibitem{bacsar2016convergence}
T.~Ba{\c{s}}ar, S.~R. Etesami, and A.~Olshevsky, ``Convergence time of
  quantized {M}etropolis consensus over time-varying networks,'' \emph{IEEE
  Transactions on Automatic Control}, vol.~61, no.~12, pp. 4048--4054, 2016.

\bibitem{etesami2016convergence}
S.~R. Etesami and T.~Ba\c{s}ar, ``Convergence time for unbiased quantized
  consensus over static and dynamic networks,'' \emph{IEEE Transactions on
  Automatic Control}, vol.~61, no.~2, pp. 443--455, 2016.

\bibitem{etesami2017potential}
S.~R. Etesami, \emph{Potential-based {A}nalysis of {S}ocial, {C}ommunication,
  and {D}istributed {N}etworks}.\hskip 1em plus 0.5em minus 0.4em\relax
  Springer, 2017.

\bibitem{zhu2011convergence}
M.~Zhu and S.~Mart{\'\i}nez, ``On the convergence time of asynchronous
  distributed quantized averaging algorithms,'' \emph{IEEE Transactions on
  Automatic Control}, vol.~56, no.~2, pp. 386--390, 2011.

\bibitem{blondel2005convergence}
V.~D. Blondel, J.~M. Hendrickx, A.~Olshevsky, and J.~N. Tsitsiklis,
  ``Convergence in multiagent coordination, consensus, and flocking,'' in
  \emph{Proceedings of the 44th IEEE Conference on Decision and Control}.\hskip
  1em plus 0.5em minus 0.4em\relax IEEE, 2005, pp. 2996--3000.

\bibitem{hendrickx2013convergence}
J.~M. Hendrickx and J.~N. Tsitsiklis, ``Convergence of type-symmetric and
  cut-balanced consensus seeking systems,'' \emph{IEEE Transactions on
  Automatic Control}, vol.~58, no.~1, pp. 214--218, 2013.

\bibitem{jadbabaie2003coordination}
A.~Jadbabaie, J.~Lin, and A.~S. Morse, ``Coordination of groups of mobile
  autonomous agents using nearest neighbor rules,'' \emph{IEEE Transactions on
  Automatic Control}, vol.~48, no.~6, pp. 988--1001, 2003.

\bibitem{sonin2008decomposition}
I.~M. Sonin \emph{et~al.}, ``The decomposition-separation theorem for finite
  nonhomogeneous {M}arkov chains and related problems,'' in \emph{Markov
  Processes and Related Topics: A Festschrift for Thomas G. Kurtz}.\hskip 1em
  plus 0.5em minus 0.4em\relax Institute of Mathematical Statistics, 2008, pp.
  1--15.

\bibitem{touri2011existence}
B.~Touri and A.~Nedi{\'c}, ``On existence of a quadratic comparison function
  for random weighted averaging dynamics and its implications,'' in
  \emph{Proceedings of the 50th IEEE Conference on Decision and Control and
  European Control Conference (CDC-ECC)}, 2011, pp. 3806--3811.

\bibitem{burger2012hierarchical}
M.~Burger, D.~Zelazo, and F.~Allgower, ``Hierarchical clustering of dynamical
  networks using a saddle-point analysis,'' \emph{IEEE Transactions on
  Automatic Control}, vol.~58, no.~1, pp. 113--124, 2012.

\bibitem{burger2014duality}
M.~B{\"u}rger, D.~Zelazo, and F.~Allg{\"o}wer, ``Duality and network theory in
  passivity-based cooperative control,'' \emph{Automatica}, vol.~50, no.~8, pp.
  2051--2061, 2014.

\bibitem{proskurnikov2018tutorial}
A.~V. Proskurnikov and R.~Tempo, ``A tutorial on modeling and analysis of
  dynamic social networks. {P}art {II},'' \emph{Annual Reviews in Control},
  vol.~45, pp. 166--190, 2018.

\bibitem{bubeck2015convex}
S.~Bubeck \emph{et~al.}, ``Convex optimization: {A}lgorithms and complexity,''
  \emph{Foundations and Trends in Machine Learning}, vol.~8, no. 3-4, pp.
  231--357, 2015.

\bibitem{roozbehani2008lyapunov}
M.~Roozbehani, A.~Megretski, and E.~Frazzoli, ``Lyapunov analysis of
  quadratically symmetric neighborhood consensus algorithms,'' in \emph{Proc.
  47th IEEE Conference on Decision and Control (CDC)}.\hskip 1em plus 0.5em
  minus 0.4em\relax IEEE, 2008, pp. 2252--2257.

\bibitem{jabin2014clustering}
P.-E. Jabin and S.~Motsch, ``Clustering and asymptotic behavior in opinion
  formation,'' \emph{Journal of Differential Equations}, vol. 257, no.~11, pp.
  4165--4187, 2014.

\bibitem{boyd2004convex}
S.~Boyd and L.~Vandenberghe, \emph{Convex {O}ptimization}.\hskip 1em plus 0.5em
  minus 0.4em\relax Cambridge University Press, 2004.

\bibitem{arrow1958studies}
K.~J. Arrow, L.~Hurwicz, and H.~Uzawa, \emph{Studies in {L}inear and
  {N}on-linear {P}rogramming}.\hskip 1em plus 0.5em minus 0.4em\relax Cambridge
  University Press, 1958.

\bibitem{feijer2010stability}
D.~Feijer and F.~Paganini, ``Stability of primal-dual gradient dynamics and
  applications to network optimization,'' \emph{Automatica}, vol.~46, no.~12,
  pp. 1974--1981, 2010.

\bibitem{altafini2012dynamics}
C.~Altafini, ``Dynamics of opinion forming in structurally balanced social
  networks,'' in \emph{Proceedings of the 51st IEEE Conference on Decision and
  Control (CDC)}.\hskip 1em plus 0.5em minus 0.4em\relax IEEE, 2012, pp.
  5876--5881.

\bibitem{mirtabatabaei2012opinion}
A.~Mirtabatabaei and F.~Bullo, ``Opinion dynamics in heterogeneous networks:
  {C}onvergence conjectures and theorems,'' \emph{SIAM Journal on Control and
  Optimization}, vol.~50, no.~5, pp. 2763--2785, 2012.

\bibitem{chazelle2016inertial}
B.~Chazelle and C.~Wang, ``Inertial {H}egselmann-{K}rause systems,'' \emph{IEEE
  Transactions on Automatic Control}, vol.~62, no.~8, pp. 3905--3913, 2016.

\end{thebibliography}

\end{document}